 \providecommand*{\toclevel@algorithm}{0} 
\newtheorem{remark}{Remark}
\newtheorem{definition}{Definition}
\newtheorem{lemma}{Lemma}
\newtheorem{proposition}{Proposition}
\newcommand\unmarkfootnote[1]{%
  \begingroup
    \let\@makefntext\noindent
    \footnotetext{#1}%
  \endgroup}
\newcommand{\rmnum}[1]{\romannumeral #1}
\newcommand{\Rmnum}[1]{\expandafter\@slowromancap\romannumeral #1@}
\newcommand{\mat}{\boldsymbol}
\begin{document}

%
\title{Pareto Boundary of the Rate Region for Single-Stream MIMO Interference Channels: Linear Transceiver Design}
\author{Pan Cao, \IEEEmembership{Student Member, IEEE},
        Eduard A. Jorswieck, \IEEEmembership{Senior Member, IEEE}
        and Shuying Shi 
\thanks{The work of P. Cao is supported by the China Scholarship Council (CSC). The work of E. A. Jorswieck has been performed in the framework of the European research project DIWINE, which is partly funded by the European Union under its FP7 ICT Objective 1.1 - The Network of the Future. Part of this work was presented at the 13th IEEE International Workshop on Signal Processing Advances in Wireless Communications, Cesme, Turkey, June 2012.

P. Cao and E. A. Jorswieck are (S. Shi was) with the Chair for Communications Theory, Communications Laboratory, Dresden University of Technology, Dresden 01062, Germany (e-mail: \{Pan.Cao, Eduard.Jorswieck, Shuying.Shi\}@tu-dresden.de) }
}

\markboth{IEEE TRANSACTIONS ON SIGNAL PROCESSING, VOL.~XX, NO.~XX, MONTH~YEAR}{}%

\maketitle

\begin{abstract}
We consider a multiple-input multiple-output (MIMO) interference channel (IC), where a single data stream per user is transmitted and each receiver treats interference as noise.
The paper focuses on the open problem of computing the outermost boundary (so-called Pareto boundary-PB) of the achievable rate region under linear transceiver design. The Pareto boundary consists of the strict PB and non-strict PB. For the two user case, we compute the non-strict PB and the two ending points of the strict PB exactly. For the strict PB, we formulate the problem to maximize one rate while the other rate is fixed such that a strict PB point is reached. To solve this non-convex optimization problem which results from the hard-coupled two transmit beamformers, we propose an alternating optimization algorithm. Furthermore, we extend the algorithm to the multi-user scenario and show convergence. Numerical simulations illustrate that the proposed algorithm computes a sequence of well-distributed operating points that serve as a reasonable and complete inner bound of the strict PB compared with existing methods.
\end{abstract}

\begin{IEEEkeywords}
multiple-input multiple-output (MIMO) interference channel (IC), Pareto boundary, alternating optimization, semidefinite programming, fractional programming.
\end{IEEEkeywords}

\section{Introduction}\label{sec:intro}

\IEEEPARstart{I}{n} wireless cellular systems, multiple sectors of different cells share the same time-frequency resource for communication in order to increase the spectral efficiency and occupancy level, while inter-cell interference brings a strong inter-cell coupling and limits the performance. In this paper, we consider first a two-cell environment, where each cell has a base station (BS) with multiple antennas and a mobile station (MS) with multiple antennas. Each BS is intended to communicate with the MS in its own cell while simultaneously interferes the MS in the other cell. And then a multi-cell scenario consisting of multiple interfering BS-MS links will be also considered later.
These scenarios are modeled as a two or multi-user MIMO interference channel (IC). The IC is characterized by its capacity region, defined as the set of largest rates that can be simultaneously achieved by the users in the system while making the error probability arbitrary small. A pragmatic approach that leads to an achievable region or inner bound of the capacity region is based on two assumptions. \rmnum{1}) The class of encoding strategies are constrained to use random Gaussian codebooks; \rmnum{2}) The decoders are restricted to treat the interference as Gaussian noise. Herein, based on these two assumptions,  we desire to find the \emph{complete achievable rate region} by linear transceiver design.

Form the perspective of optimization, it is well-known that a bi/multi-objective optimization problem usually admits infinite number of \emph{noninferior} solutions (theoretical limits), which form the outermost boundary of achievable performance region, so-called \emph{Pareto boundary} \cite{ParetoB}. A noninferior solution on the Pareto boundary is considered to be Pareto-optimal in the sense that no other solution can improve the performance of some objectives without reducing other objective(s). Generally, it is hard to find the Pareto boundary efficiently,
but it is significant to study it in order to determine optimal system operations based on Pareto-optimal rate tuples and their associated strategies.
In this paper, we propose an algorithm to compute the complete Pareto boundary{\footnote{When referring to Pareto boundary, we mean the Pareto boundary of achievable rate region unless otherwise specified hereafter.}} for the two/multi-user MIMO single-stream IC through linear transceiver design.

How to design linear transceiver schemes to achieve the Pareto boundary has attracted intensive research for several decades. A brief, comprehensive, yet non-exhaustive review of the related works is given as follows.

\textbf{Parameterization Approaches:} The Pareto boundary is characterized by a few parameters. For a two-user  multiple-input single-output (MISO) IC, the authors proposed a necessary condition for Pareto-optimal transmit beamformers, i.e., linear combinations of zero-forcing (ZF) and maximum-ratio transmission (MRT) beamformers with two $[0, 1]$-parameters \cite{EduardMISOCharcter}. This parameterization is later used to derive a characterization with only a single parameter in \cite{Closed-FormMISO}, \cite{WalrasianEqRami}. A general framework for parameterizing Pareto-optimal transmit strategies for multi-user MISO IC was proposed in \cite{RamiTSP2011}, which is applicable when the utility functions of the systems are monotonic in the received power gains. In \cite{CharacterizeMIMOSimplereceiver}, the authors proposed a parametrization to characterize the Pareto boundary of the multi-cell MIMO performance region under an assumption that each receiver has only a single effective antenna, while this limiting assumption, in fact, degrades the MIMO IC to the MISO IC.

\textbf{Computation Approaches:}
Different from the parameterization schemes, another approach is to compute a Pareto boundary point directly.
One branch is to maximize the rate of one user for a fixed rate of the other users (e.g., in two-user MISO IC \cite{EfficientComputationMISO} and in two-user MIMO IC \cite{CaoSPAWC2012}). Another important branch is to find the intersection point between the Pareto boundary and a ray from the origin (e.g., so-called rate profiles approach in the multi-user MISO IC \cite{RZhangMISO}, \cite{RZhangMISO2}). The computations approaches are able to compute the \emph{whole} Pareto boundary, if the optimization problems are solved optimally.

\textbf{Weighted Sum Maximization Approaches:}
A standard technique for generating the Pareto-optimal solutions to multi-objective optimization problems is to maximize weighted sums of the different objectives for various different settings of the weights. Generally, the weighted sum rate maximization problem for multi-user IC is non-convex.
For the multi-user SISO IC, the  MAPEL algorithm proposed in \cite{MAPEL} transformed the weighted sum rate maximization into a generalized linear fractional programming problem, which can be solved optimally. In \cite{WSRMISO2012}, the authors jointly utilized the monotonic optimization and rate profile techniques to solve the weighted sum rate maximization optimally at the cost of computation load in multi-user SISO/MISO/SIMO IC.
However, it is NP-hard to obtain a global optimal solution of the weighted sum rate maximization for a multi-user MIMO IC (e.g., in \cite{MIMOIBCLuo}). In particular, most algorithms focus on finding only a single sum-rate maximum point, e.g., for the two-user MIMO IC based on pricing in \cite{InterferencePricingMIMO},  for the two-user MIMO IC based on approximation of sum rate in \cite{Two-cellMIMO}, for the single-stream MIMO IC based on balancing the egoistic and the altruistic behavior in \cite{BalancingMIMO}, and for the multi-user MIMO IC based on interference alignment in \cite{CoorpertiveMIMOIA}. However, it is well-known that the weighted sum maximization method has two major drawbacks \cite{WSODrawbacks}: \rmnum{1}) If the Pareto boundary is not convex, there does not exist any weight corresponding to the points on the nonconvex part. Increasing the number of steps of the weighting factor does not resolve this problem; \rmnum{2}) Even if the Pareto boundary is convex, an even spread of weights does not produce an even spread of points on the Pareto boundary. Therefore, weighted sum rate maximization is not a promising method to perform \emph{the complete Pareto boundary}, especially the non-convex boundary.

\textbf{Game Theoretic Approaches:}
Game theory as a useful tool has been widely applied to resource allocation in multi-user IC by studying the conflicting or cooperative behavior of the users. Distributed optimization algorithms based on iterative water-filling for the MIMO IC (e.g. \cite{MIMOYe, CompetitiveMIMOScutari}) can be modeled as non-cooperative games, where each user is considered as a player that attempts to maximize its own utility selfishly. Such approaches may not converge in general or may converge to the Nash equilibrium (NE). It is well known that the Nash equilibrium is \emph{often} not Pareto-optimal \cite{NEvsPareto1}, since the best achievable performance characterized by Pareto boundary represents the set of optimal trade-offs among these conflicting/competing users' objectives. The trade-off of different users needs to be optimized by cooperative algorithms to achieve their joint outcome \cite{VS}.

A direct improvement from NE to Nash bargaining (NB) by cooperation for the MIMO IC has been studied in \cite{NBMIMOICChen}, where the case is studied in which the interference-plus-noise covariance matrix of each user approaches an identity matrix and the rate region becomes convex. A main branch of cooperative algorithms is the interference-pricing based method, where each user updates its own strategy to maximize its own utility minus the interference cost determined by the interference prices, which reflect the marginal change in utility per unit interference power.
This distributed interference-pricing algorithm has been used to solve (weighted) sum-rate maximization problems for the multi-user SISO and MISO IC \cite{PricingMonotonicCon}, multi-user single-stream MIMO IC \cite{PricingLocalIMIMO}, two-user MIMO IC \cite{InterferencePricingMIMO}.
A different pricing scheme is to balance the egoistic and altruistical strategies with different weights (i.e., prices), e.g., for the two-user MISO IC \cite{WalrasianEqRami, EduardMISOGame} and for the multi-user single-stream MIMO IC \cite{BalancingMIMO}. In \cite{BalancingMIMO}, the suboptimal maximum sum rate is achieved, although no convergence analysis is provided. However, most distributed cooperative algorithms for the MIMO IC (e.g., distributed pricing based algorithms \cite{InterferencePricingMIMO, PricingLocalIMIMO}) focus on maximizing (weighted) sum-utility rather than computing the whole utility region.

In fact, most approaches of parameterization, computation and weighted sum maximization are coordinated/cooperative algorithms, although they are not described in a game theoretic way.

For the MIMO IC, the achievable rate depends on both transmit and receive strategies involved in a more hard-coupling and complex expression than the MISO IC such that it is not straightforward to extend the \emph{implicit} or \emph{explicit} schemes achieving the complete Pareto boundary for the MISO IC to the MIMO IC. In order to illustrate the \emph{complete Pareto boundary}, we formulate a computation problem to maximize one rate while keeping the rate of the other users unchanged such that one rate can increase always along the same direction in the rate region until the boundary is reached. However in this computation problem, the hard-coupled beamformers exist not only in the objective but also in the constraints, which makes this beamformers optimization problems non-convex (even NP-hard). Therefore in this paper, as most current references on the MIMO IC, we focus on finding high quality sub-optimal operating points efficiently. Our main contributions are described as follows.

\begin{itemize}
     \item[\rmnum{1})] First, the two-user single-stream MIMO IC is firstly studied: a) We propose an equivalent form of the SINR expression based on the Hermitian angle (Proposition 1 in Section \Rmnum{2}-B), which gains more insight into the coupling of the transmit beamformers; b) We prove that the strict Pareto-optimal transmit power allocation policy is full power allocation at both the transmitters (Proposition 2 in Section \Rmnum{3}-A); c) The non-strict Pareto boundary, two ending points of strict Pareto boundary, and certain ZF points are computed exactly. (Section \Rmnum{3}-B).
     \item[\rmnum{2})] To compute the strict Pareto boundary of the two-user single-stream MIMO IC, we formulate a problem to maximize one rate while the other rate is fixed. This non-convex optimization problem is solved by the proposed alternating optimization algorithm \cite{NonlinearAlternatingOp} such that a convergent point is guaranteed to be achieved (Section \Rmnum{4} A-C).
     \item[\rmnum{3})] The proposed optimization algorithm can be extended to the multi-user scenario (Section \Rmnum{4}-D).
\end{itemize}

\textbf{Notation:}
$(\cdot)^*$, {$(\cdot)^T$}, $(\cdot)^H$, {$(\cdot)^\dagger$}, $\mathrm{Rank}(\cdot)$ and $\mathrm{Tr}(\cdot)$ denote complex conjugate, transpose, Hermitian and {Moore-Penrose pseudo inverse}, rank and trace, respectively.
$|\cdot|$, $\Re(\cdot)$, and $j$ denote the absolute value, the real part of a complex-valued number, and the imaginary unit, respectively.
$\lVert \mat{x}\lVert=\sqrt{\mat{x}^H\mat{x}}$ and $\overrightarrow {\mat{x}}=\frac{{\mat{x}}}{\lVert \mat{x}\lVert}$ denote the vector 2-norm and vector direction, respectively.
$\perp$, $\parallel$ and $\nparallel$ denote perpendicularity, parallelity and unparallelity, respectively.
$\mathcal{CN}(0, \mat{X})$ denotes a complex circularly-symmetric jointly-Gaussian probability density function with zero mean and covariance matrix $\mat{X}$.
{$\mat{X}\succeq \mat{0}$ or $\mat{X}\succ \mat{0}$ means $\mat{X}$ is a positive semidefinite matrix or a positive definite matrix.}
$\lambda_i(\mat{X})$ and $\mat{u}_i({\mat{X}})$ denote the $i$-th \emph{largest} eigenvalue of $\mat{X}$ and its corresponding eigenvector, respectively.
$\lambda_i(\mat{X}, \mat{Y})$ and $\mat{u}_i({\mat{X}}, \mat{Y})$ denote the $i$-th largest generalized eigenvalue of $\mat{X}, \mat{Y}$ and its corresponding eigenvector, respectively.
$\Pi_{\mat{X}}\stackrel{\Delta}{=}{\mat{X}}({\mat{X}^H}{\mat{X}})^{-1}{\mat{X}}^H$ denotes the orthogonal projection onto the column space of ${\mat{X}}$,
and $\Pi_{\mat{X}}^{\perp}\stackrel{\Delta}{=}\mat{I} - \Pi_{\mat{X}}$ denotes the orthogonal projection onto the orthogonal complement of the column space of ${\mat{X}}$.

\section{System Model}\label{sec:Signal Model}

\subsection{Signal Model}\label{sec:SignalModelA}

Consider a two-user MIMO IC denoted by $\mathrm{TX}_i \mapsto \mathrm{RX}_i, i=1,2$,
where each transmitter $\mathrm{TX}_i$ and each receiver $\mathrm{RX}_i$ are equipped with $N_T\geq2$ and $N_R\geq2$ antennas and a single data stream is transmitted in each user. In this IC, the received data at $\mathrm{RX}_i$ is modeled as
\begin{align}\label{eq:MIMOIF}
{y}_i\ = \mat{g}_i^H\left(\mat{H}_{ii}\mat{w}_i x_i + \mat{H}_{ki}\mat{w}_k x_k + \mat{n}_i \right),~i,~k\in\{1, 2\}, k\neq i \nonumber
\end{align}
where $x_i \sim \mathcal{CN}(0, 1)$ is the transmitted symbol of $\mathrm{TX}_i$ by the transmit beamformer $\mat{w}_i  \in {\mathbb{C}}^{{N_T}\times{1}}$. At $\mathrm{RX}_i$, $\mat{g}_i \in {\mathbb{C}}^{{N_R}\times{1}}$ is the receive beamformer, and $\mat{n}_i\in {\mathbb{C}}^{{N_R}\times{1}}\sim \mathcal{CN}(0, {\sigma}_i^2\mat{I})$ is the additive white Gaussian noise (AWGN) vector. The matrices $\mat{H}_{ii}, \mat{H}_{ki}\in {\mathbb{C}}^{{N_R}\times{N_T}}$ denote the flat fading channel-matrix of the direct link $\mathrm{TX}_i \mapsto \mathrm{RX}_i$ and the cross-talk link $\mathrm{TX}_k \mapsto \mathrm{RX}_i$, respectively. Each transmitter has a power constraint that we, without loss of generality, set to 1 and define the set of feasible transmit beamformers as $\mathcal{W}\stackrel{\Delta}{=}\left\{\mat{w}\in {\mathbb{C}}^{{N_T}\times{1}} : {\lVert \mat{w} \lVert}^2\leq1\right\}$.

\subsection{Rate with MMSE Receiver}\label{sec:RateMMSE}

Assume that the interference from the other transmitter is treated as additive Gaussian noise at each receiver. The achievable rate of the link $\mathrm{TX}_i \mapsto \mathrm{RX}_i$ is given by:
\begin{align}\label{eq:MMSER}
{R}_i(\mat{w}_1,\mat{w}_{2},\mat{g}_i) = \log_2\big(1+\mathrm{SINR}_i(\mat{w}_1,\mat{w}_{2},\mat{g}_i)\big),
\end{align}
where $\mathrm{SINR}_i(\mat{w}_1,\mat{w}_{2},\mat{g}_i) = \frac{|\mat{g}_i^H\mat{H}_{ii}\mat{w}_i|^2}{\sigma_i^2 + |\mat{g}_i^H\mat{H}_{ki}\mat{w}_k|^2}$. {In the linear transceiver design, it is well known that the MMSE filter is the optimal receiver for given transmit strategies.} In this paper, the MMSE filter $\mat{g}_i = \big({\sigma}_i^2\mat{I} +  \mat{H}_{ii}{\mat{w}_i}{\mat{w}_i}^H{\mat{H}_{ii}}^H + \mat{H}_{ki}{\mat{w}_k}{\mat{w}_k}^H{\mat{H}_{ki}}^H\big)^{-1}{\mat{H}_{ii}\mat{w}_i}$ is employed at $\mathrm{RX}_i$. Then, $\mathrm{SINR}_i(\mat{w}_1,\mat{w}_{2},\mat{g}_i)$ becomes
\begin{align}\label{eq:MMSESINR}
&\mathrm{SINR}_i(\mat{w}_1,\mat{w}_{2})  \nonumber \\
&=\mat{w}_i^H \underbrace{\mat{H}_{ii}^H\big({\sigma}_i^2\mat{I} +\mat{H}_{ki}{\mat{w}_k}{\mat{w}_k}^H{\mat{H}_{ki}}^H\big)^{-1}{\mat{H}_{ii}}}_{{\stackrel{\Delta}{=}\mat{A}_i(\mat{w}_k)}}\mat{w}_i.
\end{align}
The complex mathematical structure (inverse of the sum of matrices and product of matrices) causes a hard-coupling problem of $\mat{w}_1$ and $\mat{w}_2$ in the SINR in (\ref{eq:MMSESINR}), which makes it difficult to analyze the SINR directly. To gain an insight into this coupling problem, we propose an equivalent form of the SINR expression.
\begin{proposition}\label{pr:MMSESinrNew1}
For the two-user single-beam MIMO IC, the SINR in (\ref{eq:MMSESINR}) can be reformulated as
\begin{align}\label{eq:MMSESINRNew1}
&\mathrm{SINR}_i(\mat{w}_1,\mat{w}_{2}) = \nonumber \\
&\sin^2(\theta_{H,i})\frac{{\lVert \mat{H}_{ii}\mat{w}_i \lVert}^2}{\sigma_i^2} + \cos^2(\theta_{H,i}) \frac{{\lVert \mat{H}_{ii}\mat{w}_i \lVert}^2}{\sigma_i^2 + {\lVert \mat{H}_{ki}\mat{w}_k \lVert}^2},
\end{align}
where $\cos(\theta_{H,i})=\big |{\overrightarrow{\mat{H}_{ii}\mat{w}_i}}^H\cdot{\overrightarrow{\mat{H}_{ki}\mat{w}_k}}\big|$ and $\theta_{H,i}\in[0,\pi/2]$.
\end{proposition}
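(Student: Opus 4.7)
The plan is to reduce the quadratic form in (\ref{eq:MMSESINR}) to scalars by applying the Sherman--Morrison rank-one inverse identity, and then recognize the inner product that appears as the cosine of the Hermitian angle between $\mat{H}_{ii}\mat{w}_i$ and $\mat{H}_{ki}\mat{w}_k$.

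First I would set $\mat{a}\stackrel{\Delta}{=}\mat{H}_{ki}\mat{w}_k$ and $\mat{b}\stackrel{\Delta}{=}\mat{H}_{ii}\mat{w}_i$ so that
\begin{equation*}
\mathrm{SINR}_i(\mat{w}_1,\mat{w}_2)=\mat{b}^H\bigl(\sigma_i^2\mat{I}+\mat{a}\mat{a}^H\bigr)^{-1}\mat{b}.
\end{equation*}
Since $\mat{a}\mat{a}^H$ is a rank-one perturbation of the scaled identity, the Sherman--Morrison formula gives
\begin{equation*}
\bigl(\sigma_i^2\mat{I}+\mat{a}\mat{a}^H\bigr)^{-1}=\frac{1}{\sigma_i^2}\mat{I}-\frac{\mat{a}\mat{a}^H}{\sigma_i^2\bigl(\sigma_i^2+\lVert\mat{a}\rVert^2\bigr)}.
\end{equation*}
Substituting this back yields a clean scalar expression
\begin{equation*}
\mathrm{SINR}_i=\frac{\lVert\mat{b}\rVert^2}{\sigma_i^2}-\frac{|\mat{b}^H\mat{a}|^2}{\sigma_i^2\bigl(\sigma_i^2+\lVert\mat{a}\rVert^2\bigr)}.
\end{equation*}

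Next I would use the definition of the Hermitian angle, which gives $|\mat{b}^H\mat{a}|^2=\lVert\mat{b}\rVert^2\lVert\mat{a}\rVert^2\cos^2(\theta_{H,i})$, insert this into the displayed expression, factor out $\lVert\mat{b}\rVert^2/\sigma_i^2$, and combine the two fractions over the common denominator $\sigma_i^2+\lVert\mat{a}\rVert^2$. After using $1=\sin^2(\theta_{H,i})+\cos^2(\theta_{H,i})$ to split the numerator, the expression separates naturally into the two announced terms, one scaled by $\sin^2(\theta_{H,i})$ with denominator $\sigma_i^2$ (interference-free contribution) and one scaled by $\cos^2(\theta_{H,i})$ with denominator $\sigma_i^2+\lVert\mat{H}_{ki}\mat{w}_k\rVert^2$ (fully-interfered contribution).

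There is no real obstacle here; the only thing to be careful about is the boundary behavior, which I would mention briefly. If $\mat{a}=\mat{0}$, the Sherman--Morrison step is unnecessary and both forms trivially coincide with $\lVert\mat{b}\rVert^2/\sigma_i^2$ (with $\theta_{H,i}$ conventionally set to $\pi/2$). The constraint $\theta_{H,i}\in[0,\pi/2]$ is automatic since the Hermitian angle is defined via $|{\overrightarrow{\mat{H}_{ii}\mat{w}_i}}^H\cdot{\overrightarrow{\mat{H}_{ki}\mat{w}_k}}|\in[0,1]$. This decomposition provides the geometric intuition used later in the paper: the SINR is a convex combination (with weights $\sin^2\theta_{H,i}$ and $\cos^2\theta_{H,i}$) of the interference-free SNR and the SNR under full co-channel interference, with the mixing coefficient governed solely by the alignment of the signal and interference subspaces at $\mathrm{RX}_i$.
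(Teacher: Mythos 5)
Your proposal is correct and follows essentially the same route as the paper's Appendix A: apply the matrix inversion lemma (Sherman--Morrison) to obtain $\mathrm{SINR}_i=\frac{\lVert\mat{H}_{ii}\mat{w}_i\rVert^2}{\sigma_i^2}-\frac{|(\mat{H}_{ii}\mat{w}_i)^H\mat{H}_{ki}\mat{w}_k|^2}{\sigma_i^2(\sigma_i^2+\lVert\mat{H}_{ki}\mat{w}_k\rVert^2)}$, then identify the normalized inner product with $\cos(\theta_{H,i})$ and split via $1=\sin^2(\theta_{H,i})+\cos^2(\theta_{H,i})$. Your added remark on the degenerate case $\mat{H}_{ki}\mat{w}_k=\mat{0}$, where the Hermitian angle is undefined and a convention is needed, is a small point of care the paper omits.
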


\begin{IEEEproof}
See Appendix A.
\end{IEEEproof}

Note that the $\mathrm{SINR}_i(\mat{w}_1,\mat{w}_{2})$ can be considered as a combination of $\frac{{\lVert \mat{H}_{ii}\mat{w}_i \lVert}^2}{\sigma_i^2}$ and $\frac{{\lVert \mat{H}_{ii}\mat{w}_i \lVert}^2}{\sigma_i^2 + {\lVert \mat{H}_{ki}\mat{w}_k \lVert}^2}$ with the weights $\sin^2(\theta_{H,i})$ and $\cos^2(\theta_{H,i})$. That is, $\mathrm{SINR}_i(\mat{w}_1,\mat{w}_{2})$ depends not only on the desired signal power ${\lVert \mat{H}_{ii}\mat{w}_i \lVert}^2$ and the interference power ${\lVert \mat{H}_{ki}\mat{w}_k \lVert}^2$, but also on the Hermitian angle $\theta_{H,i}$ between the directions ${\overrightarrow{\mat{H}_{ii}\mat{w}_i}}$ and ${\overrightarrow{\mat{H}_{ki}\mat{w}_k}}$. The SINR is coupled in a difficult way because of the existence of $\theta_{H,i}$. This is why it is more difficult to analyze the SINR of a MIMO IC than that of a MISO IC.

\section{Pareto Boundary and Computation of Some Key Points}\label{sec:Paretoboundary}

\subsection{Pareto Boundary}\label{sec:ParetoboundaryA}

The achievable rate region is defined as a set of the achievable rate pairs with all the feasible beamformers
\begin{align}\label{eq:RateRegion}
\mathcal{R} &\stackrel{\Delta}{=}  \bigcup_{\mat{w}_1,\mat{w}_2\in\mathcal{W}} \left( {R}_1(\mat{w}_1,\mat{w}_{2}),{R}_2(\mat{w}_1,\mat{w}_{2}) \right). \nonumber
\end{align}
Note that the achievable rate region $\mathcal{R}$ is not the capacity region.
Its outermost boundary is called Pareto boundary in this paper, which can be denoted by a set $\mathcal{R}^\star\stackrel{\Delta}{=}\bigcup(R_1^\star,R_2^\star)$ where $(R_1^\star,R_2^\star)$ is a Pareto-optimal point. More precisely, Pareto-optimality is defined as follows.

\begin{definition}
A rate pair $(R_1^\star, R_2^\star) \in \mathcal{R}$ is [strict] Pareto-optimal iff there does not exist another rate pair $(R_1, R_2)>(R_1^\star, R_2^\star)$ [$(R_1, R_2)\geq(R_1^\star, R_2^\star)$ and $(R_1, R_2)\neq(R_1^\star, R_2^\star)$] with $(R_1, R_2) \in \mathcal{R}$, where the inequality is component-wise.
\end{definition}
\begin{figure}
\begin{center}
\includegraphics[scale=0.72]{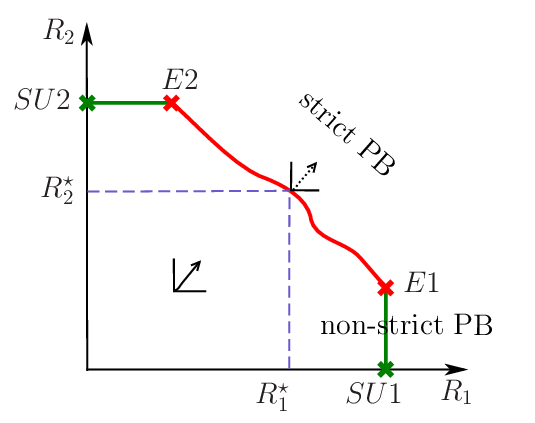}
\caption{Achievable rate region and Pareto boundary in two-user case}\label{fig:no1}
\end{center}
\end{figure}
As shown in Fig. \ref{fig:no1}, the Pareto boundary consists of the \emph{strict Pareto boundary} (the upperright
part graphically, denoted by "strict PB") and the \emph{non-strict Pareto boundary} (including the vertical
part and the horizontal part graphically, denoted by "non-strict PB"), divided by "E1" and "E2". "E1"
and "E2", "SU1" and "SU2" mean two ending points of the strict Pareto boundary and two single user
points, which will be studied in Section \Rmnum{3}-B. More precisely, for an arbitrary point on the strict Pareto
boundary, it is impossible to improve one rate without simultaneously decreasing the other. For a point
on the non-strict Pareto boundary, one rate can be further
improved while the other rate remains the maximum rate. In particular, the strict Pareto boundary can
be characterized as follows.

\begin{proposition}\label{pr:ParetoBoundaryP}
For the two-user single-stream MIMO IC, all the operating points on the strict Pareto boundary can be achieved only when both the transmitters spend the full power, i.e., $\lVert \mat{w}_1 \lVert^2=\lVert \mat{w}_2 \lVert^2=1$.
\end{proposition}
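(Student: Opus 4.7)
The natural approach is proof by contradiction, using Proposition 1 as the central structural tool. Assume that $(\mat{w}_1^\star, \mat{w}_2^\star)$ attains a strict Pareto-optimal rate pair $(R_1^\star, R_2^\star)$ but, without loss of generality, $\|\mat{w}_1^\star\|^2 < 1$. The plan is to exhibit a feasible beamformer pair whose rate pair strictly dominates $(R_1^\star, R_2^\star)$, contradicting the defining property of a strict Pareto point.

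The decisive structural observation from Proposition 1 is that the Hermitian angles $\theta_{H,1}$ and $\theta_{H,2}$ depend only on the \emph{directions} $\overrightarrow{\mat{H}_{ii}\mat{w}_i}$ and $\overrightarrow{\mat{H}_{ki}\mat{w}_k}$, not on the beamformer norms. Hence a pure scaling $\mat{w}_1 \mapsto \alpha \mat{w}_1$ keeps both $\theta_{H,1}$ and $\theta_{H,2}$ fixed while multiplying $\|\mat{H}_{11}\mat{w}_1\|^2$ and $\|\mat{H}_{12}\mat{w}_1\|^2$ by $\alpha^2$, cleanly separating the effect of power from the effect of direction in the otherwise hard-coupled SINR expressions.

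First I would try the canonical scaled perturbation $\mat{w}_1' = \alpha\mat{w}_1^\star$ with $\alpha = 1/\|\mat{w}_1^\star\| > 1$, so that $\|\mat{w}_1'\|^2 = 1$. Substituting into the decomposition of Proposition 1 gives $\mathrm{SINR}_1(\mat{w}_1', \mat{w}_2^\star) = \alpha^2\,\mathrm{SINR}_1^\star$, which strictly exceeds $\mathrm{SINR}_1^\star$, so $R_1$ strictly increases. For user 2 the first term of the SINR decomposition is unaffected, while the second term's denominator grows from $\sigma_2^2 + \|\mat{H}_{12}\mat{w}_1^\star\|^2$ to $\sigma_2^2 + \alpha^2\|\mat{H}_{12}\mat{w}_1^\star\|^2$. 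In the degenerate sub-case where $\cos(\theta_{H,2}) = 0$ (the received interference is orthogonal to the useful signal, so the MMSE receiver rejects it) or $\|\mat{H}_{12}\mat{w}_1^\star\| = 0$ (the transmit direction lies in $\ker(\mat{H}_{12})$), $R_2$ is preserved exactly, and the strict gain in $R_1$ immediately contradicts strict Pareto optimality.

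The main obstacle is the generic complementary sub-case $\cos(\theta_{H,2}) > 0$ and $\|\mat{H}_{12}\mat{w}_1^\star\| > 0$, where pure scaling strictly reduces $R_2$ and so is consistent with a legitimate Pareto trade-off. Here the idea is to spend the power slack $1 - \|\mat{w}_1^\star\|^2 > 0$ along a perturbation $\mat{v}$ that leaves the interfering vector $\mat{H}_{12}\mat{w}_1$ invariant in magnitude and direction (so that $\mathrm{SINR}_2$ is preserved) while strictly increasing $\|\mat{H}_{11}\mat{w}_1\|^2$ (so that, after controlling the induced change in $\theta_{H,1}$, $\mathrm{SINR}_1$ still grows). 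Concretely, I would set $\mat{w}_1' = \mat{w}_1^\star + \epsilon\,\mat{v}$ with $\mat{v} \in \ker(\mat{H}_{12})$ and $\Re(\mat{w}_1^{\star H}\mat{H}_{11}^H\mat{H}_{11}\mat{v}) > 0$, and verify feasibility $\|\mat{w}_1'\|^2 \le 1$ for sufficiently small $\epsilon > 0$, once more contradicting strict Pareto optimality. The delicate technical step, and the main expected source of difficulty, is guaranteeing the existence of such a $\mat{v}$ in all antenna configurations---including the case where $\ker(\mat{H}_{12})$ is trivial, which requires combining the scaling with a more intricate rotation---while ensuring the induced first-order motion of $\theta_{H,1}$ does not cancel the first-order gain in $\|\mat{H}_{11}\mat{w}_1\|^2$.
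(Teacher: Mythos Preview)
Your overall architecture matches the paper's: argue by contradiction, assume $\|\mat{w}_1^\star\|^2<1$, and exhibit a perturbation that strictly raises $\mathrm{SINR}_1$ while holding $\mathrm{SINR}_2$ fixed. The case split you identify---$\ker(\mat{H}_{12})$ nontrivial versus trivial---is exactly the paper's Case~1/Case~2. Two points deserve comment.

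First, a minor correction in the easy case: the right first-order condition for $\mathrm{SINR}_1$ to increase is $\Re\bigl(\mat{w}_1^{\star H}\mat{A}_1(\mat{w}_2^\star)\mat{v}\bigr)>0$, not $\Re\bigl(\mat{w}_1^{\star H}\mat{H}_{11}^H\mat{H}_{11}\mat{v}\bigr)>0$; the two differ precisely by the $\theta_{H,1}$-coupling you flag. The paper sidesteps any first-order worry by absorbing the free phase $e^{j\phi_\delta}$ into the perturbation and showing that the admissible phase ranges for ``$\mathrm{SINR}_1$ increases'' and ``$\|\mat{w}_1\|$ increases'' are each wider than $\pi$, hence intersect. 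That argument works uniformly and also covers the degenerate situation where the first-order term vanishes, since $\mat{A}_1\succ\mat{0}$ makes the quadratic term strictly positive.

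Second, and more substantively, the genuine gap is the case you yourself flag: $\mathrm{Rank}(\mat{H}_{12})=N_T$, so $\ker(\mat{H}_{12})=\{0\}$ and no perturbation leaves $\mat{H}_{12}\mat{w}_1$ invariant. ``A more intricate rotation'' is not yet an argument. The paper's resolution is concrete: working in the receive space with $\mat{v}_1=\mat{H}_{12}\mat{w}_1^\star$, $\mat{v}_2=\mat{H}_{22}\mat{w}_2^\star$, it restricts the image perturbation $\mat{v}_\delta$ to the two-dimensional real plane spanned by $\overrightarrow{\Pi_{\mat{v}_2}^{\perp}\mat{v}_1}$ and $\overrightarrow{\Pi_{\mat{v}_2}\mat{v}_1}$, writes $\mat{v}_\delta=\|\mat{v}_\delta\|\bigl(\sqrt{\eta}\,\overrightarrow{\Pi_{\mat{v}_2}^{\perp}\mat{v}_1}+\sqrt{1-\eta}\,\overrightarrow{\Pi_{\mat{v}_2}\mat{v}_1}\bigr)$, and shows the $\mathrm{SINR}_2$-preservation equation, viewed as a continuous function of $\eta\in[0,1]$, takes opposite signs at the endpoints (the Cauchy--Schwarz bound $|\mat{v}_1^H\mat{v}_2|^2\le\|\mat{v}_1\|^2\|\mat{v}_2\|^2$ is what forces the sign at $\eta=0$). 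The intermediate value theorem then yields some $\eta_0\in(0,1)$, and $\mat{\delta}_p=\mat{H}_{12}^{\dagger}\mat{v}_\delta(\eta_0)$ (well defined since $\mat{H}_{12}$ has full column rank) gives the required transmit-side perturbation. This is the missing ingredient in your plan; once you supply it, the remaining phase/magnitude choices go through as in Case~1.
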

\begin{IEEEproof}
See Appendix B.
\end{IEEEproof}
\begin{remark}
In fact, Proposition 2 has solved a strict Pareto-optimal transmit power allocation problem
in this scenario. When the strict Pareto-optimal power allocation policy is employed, i.e., both the
transmitters spend the full power, the two strict Pareto-optimal transmit beamformers design reduces
to the optimization of two transmit beamforming patterns.
\end{remark}
Here, we define a set of all the beamformers with full transmit power as $\mathcal{W_{FP}}\stackrel{\Delta}{=}\left\{\mat{w}\in {\mathbb{C}}^{{{N_T}}\times{1}} : {\lVert \mat{w} \lVert}^2=1\right\}$
Note that all the strict Pareto-optimal transmit beamformers should be in the set $\mathcal{W_{FP}}$.

\subsection{Computation of Some Key Points}\label{sec:ParetoboundaryB}

In this part, we compute exactly the non-strict Pareto boundary, two ending points of the strict Pareto boundary and certain ZF operating points.

\subsubsection{\textbf{Single-User Points} $SU1(\overline{R}_1,0)$ and $SU2(0,\overline{R}_2)$}
A single-user point can be easily achieved when only one $\mathrm{TX}_i$ works and simultaneously operates "egoistically" to maximize its own rate. The maximum achievable rate $\overline{R}_i$ of the link $\mathrm{TX}_i\mapsto \mathrm{RX}_i$ and its associated "egoistic" strategy $\mat{w}_i^{Ego}$ are
\begin{align}\label{eq:SURate}
\overline{R}_i=\log_2\Big(1+\frac{\lambda_{1}(\mat{H}_{ii}^H\mat{H}_{ii})}{\sigma_i^2}\Big), ~\mat{w}_i^{Ego}=\mat{u}_{1}(\mat{H}_{ii}^H\mat{H}_{ii})~\forall i.
\end{align}

\subsubsection{\textbf{Ending Points of Strict Pareto Boundary} $E1(\overline{R}_1, \underline{R}_2)$ and $E2(\underline{R}_1, \overline{R}_2)$}
Each ending point of the strict Pareto boundary can be achieved when one transmitter employs an "altruistic" strategy to create no
interference to the other receiver and simultaneously to maximize its own rate and the other transmitter
operates "egoistically". For $E1(\overline{R}_1, \underline{R}_2)$, we easily find from (\ref{eq:MMSESINRNew1}) that $\theta_{H,1}=\pi/2$ results in no interference in the cross-talk link $\mathrm{TX}_2 \mapsto \mathrm{RX}_1$. How to find the "altruistic" strategy $\mat{w}_2^{Alt}$ is shown as follows.
\begin{proposition}\label{pr:TP}
$E1(\overline{R}_1,\underline{R}_2)$ can be achieved by $(\mat{w}_1^{Ego}, \mat{w}_2^{Alt})$, where $\overline{R}_1$ and $\mat{w}_1^{Ego}$ are in (\ref{eq:SURate}) and
\begin{align}\label{eq:OptiBeamTP1}
&\underline{R}_2 = \log_2{\left(1+\mat{w}_2^{Alt,H}\mat{A}_2(\mat{w}_1^{Ego})\mat{w}_2^{Alt}\right)}, \nonumber \\
&\mat{w}_2^{Alt} = \overrightarrow{\Pi_{\mat{H}_{21}^H\mat{H}_{11}\mat{w}_1^{Ego}}^{\perp}\mat{u}_1\left(\mat{B}_{1}, \Pi_{\mat{H}_{21}^H\mat{H}_{11}\mat{w}_1^{Ego}}^{\perp}\right)},
\end{align}
with $\mat{B}_{1} \stackrel{\Delta}{=} \Pi_{\mat{H}_{21}^H\mat{H}_{11}\mat{w}_1^{Ego}}^{\perp} \mat{A}_2(\mat{w}_1^{Ego})\Pi_{\mat{H}_{21}^H\mat{H}_{11}\mat{w}_1^{Ego}}^{\perp}$.
\end{proposition}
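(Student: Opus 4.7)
The plan is to characterize the turning point $T1$ by two conditions---user 1 achieves its single-user maximum $\overline{R}_1$ and user 2 maximizes its own rate under the resulting constraint---and then reduce the latter to a constrained quadratic maximization with a closed-form generalized-eigenvector solution. First I would observe, via Proposition~\ref{pr:MMSESinrNew1}, that $\mathrm{SINR}_1$ reaches the interference-free value $\lVert\mat{H}_{11}\mat{w}_1\rVert^2/\sigma_1^2$ if and only if $\cos(\theta_{H,1})=0$, which is the orthogonality
\begin{equation*}
\mat{w}_2^H \mat{H}_{21}^H \mat{H}_{11}\mat{w}_1 = 0.
\end{equation*}
Under this orthogonality, user 1's rate is maximized by $\mat{w}_1=\mat{w}_1^{Ego}=\mat{u}_1(\mat{H}_{11}^H\mat{H}_{11})$ from (\ref{eq:SURate}), yielding exactly $\overline{R}_1$; any other choice strictly decreases $\lVert\mat{H}_{11}\mat{w}_1\rVert^2$ and hence violates the turning-point requirement $R_1=\overline{R}_1$.

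Fixing $\mat{w}_1 = \mat{w}_1^{Ego}$, the remaining task is
\begin{equation*}
\max_{\mat{w}_2}\; \mat{w}_2^H \mat{A}_2(\mat{w}_1^{Ego})\mat{w}_2 \quad \text{s.t.}\quad \lVert\mat{w}_2\rVert^2 = 1,\; \mat{w}_2^H\mat{H}_{21}^H\mat{H}_{11}\mat{w}_1^{Ego}=0,
\end{equation*}
where the unit-norm constraint follows from Proposition~\ref{pr:ParetoBoundaryP} since $T1$ lies on the strict Pareto boundary. To absorb the linear equality, I would reparameterize $\mat{w}_2 = \Pi\mat{v}$ with $\Pi\stackrel{\Delta}{=}\Pi_{\mat{H}_{21}^H\mat{H}_{11}\mat{w}_1^{Ego}}^{\perp}$; every admissible $\mat{w}_2$ lies in $\mathrm{range}(\Pi)$ and is thereby represented. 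Using $\Pi=\Pi^H=\Pi^2$, the objective becomes $\mat{v}^H \mat{B}_1\mat{v}$ while $\lVert\mat{w}_2\rVert^2 = \mat{v}^H\Pi\mat{v}=1$, so the problem reduces to a generalized Rayleigh quotient whose optimum is the principal generalized eigenvector $\mat{u}_1(\mat{B}_1,\Pi)$; re-projecting and normalizing then recovers the stated $\mat{w}_2^{Alt}=\overrightarrow{\Pi\,\mat{u}_1(\mat{B}_1,\Pi)}$, and substituting into the rate formula yields $\underline{R}_2$.

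The principal obstacle is the rank-deficiency of $\Pi$: the pencil $(\mat{B}_1,\Pi)$ is singular because both matrices annihilate $\mat{H}_{21}^H\mat{H}_{11}\mat{w}_1^{Ego}$, so the generalized eigenvalue problem must be handled carefully on $\mathrm{range}(\Pi)$. I would resolve this by noting that the shared null space is one-dimensional and any component of $\mat{v}$ in this null space contributes nothing to either the numerator or to $\Pi\mat{v}$, so restricting to $\mathrm{range}(\Pi)$ yields a regular generalized eigenproblem whose largest eigenvalue is well-defined and attained. A secondary point to verify is that $\cos(\theta_{H,1})=0$ is not only sufficient but necessary for $R_1=\overline{R}_1$: for any $\mat{w}_2$ with $\lVert\mat{H}_{21}\mat{w}_2\rVert>0$ and $\cos(\theta_{H,1})\neq 0$, the second term in (\ref{eq:MMSESINRNew1}) is strictly smaller than the first, giving $\mathrm{SINR}_1<\lVert\mat{H}_{11}\mat{w}_1\rVert^2/\sigma_1^2$ and hence $R_1<\overline{R}_1$, so no feasible point outside the orthogonality set can attain the turning-point rate.
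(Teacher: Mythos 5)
Your proposal is correct and follows essentially the same route as the paper's proof: use Proposition~1 to reduce the turning-point condition to the orthogonality $\mat{w}_2 \perp \mat{H}_{21}^H\mat{H}_{11}\mat{w}_1^{Ego}$, parameterize the feasible $\mat{w}_2$ via the projector $\Pi_{\mat{H}_{21}^H\mat{H}_{11}\mat{w}_1^{Ego}}^{\perp}$, and solve the resulting generalized Rayleigh quotient by the principal generalized eigenvector $\mat{u}_1(\mat{B}_1,\Pi)$. Your two added touches---explicitly arguing the necessity of $\cos(\theta_{H,1})=0$ and handling the singularity of the pencil $(\mat{B}_1,\Pi)$ by restricting to $\mathrm{range}(\Pi)$---are details the paper leaves implicit, and they strengthen rather than change the argument.
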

\begin{IEEEproof}
See Appendix C.
\end{IEEEproof}

Similarly, $E2(\underline{R}_1, \overline{R}_2)$ with $(\mat{w}_1^{Alt}, \mat{w}_2^{Ego})$ can be easily obtained by interchanging the indices.

\subsubsection{\textbf{Non-strict Pareto-Optimal Points} $(R_1^{\star},R_2^{\star})$}
For the non-strict Pareto boundary, either the horizontal part or the vertical part starts and ends with a single user point and an ending point. Therefore, an arbitrary point $(R_1^\star, R_2^\star)$ on the non-strict Pareto boundary can be computed as
\begin{align}\label{eq:WeakPBRate}
R_i^\star = \gamma \cdot\underline{R}_i~~\mathrm{and}~~
R_k^\star = \overline{R}_k, ~~\forall i, ~k\in\{1,2\},~k\neq i \nonumber
\end{align}
where $i=1$ and $i=2$ correspond to the horizontal part and the vertical part, respectively. The scalar $\gamma$ satisfies $\gamma \in [0,1)$. The point $(R_1^\star, R_2^\star)$ becomes a single-user point or an ending point when $\gamma = 0$ or $\gamma = 1$, respectively. The associated non-strict Pareto-optimal
transmit strategies are
\begin{align}\label{eq:WeakPBBeam}
\mat{w}_i^\star =  \sqrt{\gamma}\cdot\mat{w}_i^{Alt} ~~\mathrm{and}~~
\mat{w}_k^{\star} = \mat{w}_k^{Ego}, \nonumber
\end{align}
from which we find that it is not necessary for both the transmitters to spend full power simultaneously
to achieve the non-strict Pareto boundary. Thus, the non-strict Pareto-optimal power allocation policy is
different from the strict Pareto-optimal power allocation policy (Proposition 2).

\subsubsection{\textbf{Zero-Forcing (ZF) Points} $ZF(R_1^{ZF}, R_2^{ZF})$} ZF points are achieved when there is no interference between different users. {Although these points are not on the Pareto boundary, it is still interesting to study ZF strategies if there exsits an additional requirement (like interference temperature or secrecy constraints) that each transmitter does not leak its own signal to other receivers.}

In (\ref{eq:MMSESINRNew1}), we find that $\theta_{H,1}=\theta_{H,2}={\pi}/{2}$ results in no interference in the cross-talk links $\mathrm{TX}_2 \mapsto \mathrm{RX}_1$ and $\mathrm{TX}_1 \mapsto \mathrm{RX}_2$ simultaneously. The ZF conditions are
\begin{align}
\theta_{H,1}&={\pi}/{2}\Leftrightarrow{\overrightarrow{\mat{H}_{11}\mat{w}_1}}^H\cdot{\overrightarrow{\mat{H}_{21}\mat{w}_2}}=0 \nonumber \\
&\Leftrightarrow{\mat{H}_{11}\mat{w}_1}\perp{\mat{H}_{21}\mat{w}_2} \Leftrightarrow {\mat{w}_2}\perp{\mat{H}_{21}^H\mat{H}_{11}\mat{w}_1},
\nonumber \\
\theta_{H,2}&={\pi}/{2}\Leftrightarrow{\overrightarrow{\mat{H}_{22}\mat{w}_2}}^H\cdot{\overrightarrow{\mat{H}_{12}\mat{w}_1}}=0  \nonumber \\
&\Leftrightarrow{\mat{H}_{22}\mat{w}_2}\perp{\mat{H}_{12}\mat{w}_1} \Leftrightarrow {\mat{w}_2}\perp{\mat{H}_{22}^H\mat{H}_{12}\mat{w}_1},
\end{align}
{from which and under a sufficient (and necessary only when $N_T = N_R = 2$) condition\footnote{This condition is the same as that in \cite{Two-cellMIMO}, while we derive it from a different perspective (Hermitian angle in (\ref{eq:MMSESINRNew1})).}, i.e., ${\mat{H}_{21}^H\mat{H}_{11}\mat{w}_1}\parallel{\mat{H}_{22}^H\mat{H}_{12}\mat{w}_1}$, we obtain some of the ZF transmit strategies
\begin{align}\label{eq:BeamZF}
&\mat{w}_1^{ZF} = \mat{u}_\ell(\mat{H}_{22}^H\mat{H}_{12}, \mat{H}_{21}^H\mat{H}_{11}), \forall \ell \in \{1,2,...,N_T\}, \nonumber \\
&\mat{w}_2^{ZF} = \sum_{\ell=1}^{N_T-1}{c_\ell\mat{u}_\ell(\Pi_{\mat{H}_{21}^H\mat{H}_{11}\mat{w}_1^{ZF}}^{\perp})},
\end{align}
}
where $\{c_\ell\}_{\ell=1}^{N_T-1}$ are complex-valued numbers and satisfy $\sum_{\ell=1}^{N_T-1}|c_\ell|^2=1$.

The $ZF(R_1^{ZF}, R_2^{ZF})$ can be achieved by $(\mat{w}_1^{ZF},\mat{w}_{2}^{ZF})$ as
\begin{align}\label{eq:RateZF}
{R}_i^{ZF}(\mat{w}_i^{ZF},\mat{w}_{2}^{ZF}) = \log_2\left(1+\frac{{\lVert \mat{H}_{ii}\mat{w}_i^{ZF} \lVert}^2}{\sigma_i^2}\right)~\forall i.
\end{align}

\section{Computation of the Strict Pareto Boundary}\label{sec:ComputePB}

Since the rate region of the two-user single-beam MIMO IC is always a normal region{\footnote{A set $\mathcal{G} \subseteq \mathbb{R}_n^+$ is called a normal region if for any two points $\mat{x} \in \mathcal{G}, \mat{x}' \in \mathbb{R}_n^+$ such that if $\mat{x}' \leq \mat{x}$, then $\mat{x}' \in \mathcal{G}$, too.}} according to Proposition 2, there exists only one intersection point between the line $R_i(\mat{w}_1, \mat{w}_2) = R_i^\star$ where $R_i^\star \in (\underline{R}_i, \overline{R}_i)$ and the strict Pareto boundary. Thus, an arbitrary point on the strict Pareto boundary can be uniquely determined when one rate is fixed and the other rate is maximized. This motivates us to propose the following optimization problem
\begin{equation*}\label{eq:P0}
\mathrm{(P0)}
\left\{
\begin{aligned}
\max_{\mat{w}_1, \mat{w}_2\in\mathcal{W_{FP}}} ~~&{\mathrm{SINR}_1(\mat{w}_1, \mat{w}_2)} \nonumber \\
\mathrm{s.t.}~~~~~~~&{\mathrm{SINR}_2(\mat{w}_1, \mat{w}_2)} =  \mathrm{SINR}_2^\star.
\end{aligned}
\right.
\end{equation*}
where $\mathrm{SINR}_2^\star\in(2^{\underline{R}_2}-1, 2^{\overline{R}_2}-1)$ is a SINR constraint, and $\mat{w}_1, \mat{w}_2$ should be in $\mathcal{W_{FP}}$ according to Proposition 2. Then, $(R_1^\star, R_2^\star)=\left(\log_2\left(1+\mathrm{SINR}_1(\mat{w}_1^\star, \mat{w}_2^\star)\right), \log_2\left(1+\mathrm{SINR}_2^\star(\mat{w}_1^\star, \mat{w}_2^\star)\right)\right)$ is achieved by the optimal solution $(\mat{w}_1^\star, \mat{w}_2^\star)$ to $\mathrm{(P0)}$.

For $\mathrm{(P0)}$, direct joint optimization of $\mat{w}_1$ and $\mat{w}_2$ is analytically intractable due to the hard-coupling problem of them in both the objective and the constraints. To solve $\mathrm{(P0)}$, an alternating optimization algorithm \cite{NonlinearAlternatingOp} is applied to optimize $\mat{w}_1$ and $\mat{w}_2$ alternatively by solving two single-beamformer optimization problems at each iteration. In the following, how to solve each single-beamformer problem is studied.

\subsection{{Optimization of $\mat{w}_1$}}\label{sec:OpTX1}

For a given \emph{feasible} $\mat{w}_2$ (the feasibility of $\mat{w}_2$ will be studied in Proposition 4), the problem $\mathrm{(P0)}$ becomes a single-beamformer optimization problem w.r.t. $\mat{w}_1$. Its constraint is
\begin{subequations}\label{eq:FractionConstriant}
\begin{align}
\mat{w}_2^H&\mat{H}_{22}^H\big({\sigma}_2^2\mat{I} +\mat{H}_{12}{\mat{w}_1}{\mat{w}_1}^H{\mat{H}_{12}}^H\big)^{-1}{\mat{H}_{22}\mat{w}_2}=\mathrm{SINR}_2^\star \nonumber \\
\stackrel{(a)}{\Leftrightarrow}
&\mat{w}_2^{H}\mat{H}_{22}^H\mat{H}_{22}\mat{w}_2 - \frac{|\mat{w}_2^{H}\mat{H}_{22}^H\mat{H}_{12}{\mat{w}_1}|^2}{{\sigma}_2^2+\lVert \mat{H}_{12}{\mat{w}_1} \lVert^2} = \sigma_2^2\mathrm{SINR}_2^\star,\nonumber\\
\stackrel{(b)}{\Leftrightarrow}
&\frac{{\mat{w}_1}^H\mat{H}_{12}^H\mat{H}_{22}\mat{w}_2\mat{w}_2^{H}\mat{H}_{22}^H\mat{H}_{12}\mat{w}_1}{\mat{w}_1^H({\sigma}_2^2\mat{I}+\mat{H}_{12}^H\mat{H}_{12}){\mat{w}_1}} \nonumber \\
&~~~~~~~~~~~~~~~~~~~= \mat{w}_2^H\mat{H}_{22}^H\mat{H}_{22}\mat{w}_2 - \sigma_2^2\mathrm{SINR}_2^\star,\\
\stackrel{(c)}{\Leftrightarrow}
&\mat{w}_1^H\mat{C}(\mat{w}_2)\mat{w}_1 = 0~\mathrm{and}~\mat{w}_2^H\mat{H}_{22}^H\mat{H}_{22}\mat{w}_2 \geq \sigma_2^2\mathrm{SINR}_2^\star.
\end{align}
\end{subequations}
The transformation $(a)$ is based on the matrix inverse lemma. The transformation $(b)$ is due to $\lVert \mat{w}_1 \lVert^2 =1$. In the transformation $(c)$, the nonnegative left-hand side of (\ref{eq:FractionConstriant}a) demands $\mat{w}_2^H\mat{H}_{22}^H\mat{H}_{22}\mat{w}_2 \geq \sigma_2^2\mathrm{SINR}_2^\star$, and $\mat{C}$ is a Hermitian matrix defined as
\begin{align}\label{eq:WfMatrix}
&\mat{C}(\mat{w}_2)\stackrel{\Delta}{=}\mat{H}_{12}^H\mat{H}_{22}\mat{w}_2\mat{w}_2^{H}\mat{H}_{22}^H\mat{H}_{12} \nonumber \\
&-(\mat{w}_2^H\mat{H}_{22}^H\mat{H}_{22}\mat{w}_2 - \sigma_2^2\mathrm{SINR}_2^\star)\cdot({\sigma}_2^2\mat{I}+\mat{H}_{12}^H\mat{H}_{12}).
\end{align}

Then, $\mat{w}_1$ can be optimized by
\begin{equation*}\label{eq:P1}
\mathrm{(P1)}
\left\{
\begin{aligned}
\max_{\mat{w}_1\in\mathcal{W_{FP}}} ~~&{\mat{w}_1^H \mat{A}_1(\mat{w}_2)\mat{w}_1}  \\
\mathrm{s.t.}~~~~&{{\mat{w}_1}^H\mat{C}(\mat{w}_2)\mat{w}_1} =  0 \\
\end{aligned}
\right.
\end{equation*}
where $\mat{C}(\mat{w}_2)$ and $\mat{A}_1(\mat{w}_2)$ are Hermitian matrices. Observe that the problem $\mathrm{(P1)}$ is a homogeneous quadratically constrained quadratic program (QCQP) and the objective function is convex but the convexity of constraints is unclear. Generally, it is difficult to solve this non-convex problem.

Note that $\mat{w}_1^H\mat{X}{\mat{w}_1} = \mathrm{Tr}(\mat{X}\mat{W}_1)$ for any matrix $\mat{X}$, where $\mat{W}_1 = \mat{w}_1\mat{w}_1^H$ is a rank-one Hermitian positive semidefinite matrix. By the semidefinite programming and rank relaxation (SDR) method, $\mathrm{(P1)}$ can be transformed to
\begin{equation*}\label{eq:P2}
\mathrm{(P2)}
\left\{
\begin{aligned}
\max_{\mat{W}_1\succeq\mat{0}} ~~&\mathrm{Tr}\left(\mat{A}_1(\mat{w}_2)\mat{W}_1\right) \\
\mathrm{s.t.}~~~~&\mathrm{Tr}\left(\mat{C}(\mat{w}_2)\mat{W}_1\right)  = 0 \\
&\mathrm{Tr}\left(\mat{W}_1\right) = 1.
\end{aligned}
\right.
\end{equation*}
Observe that the SDR $\mathrm{(P2)}$ is convex and solvable, i.e., its respective finite optimal solutions exist for a feasible $\mat{w_2}$, based on Weierstrass' Theorem. Its optimal solution $\mat{W}_1^\star$ is efficiently obtained by a convex optimization toolbox, e.g., SeDuMi \cite{SeDuMi} or CVX \cite{CVXTool}.
However, the rank of $\mat{W}_1^\star$ to $\mathrm{(P2)}$ is usually more than one because we have discarded the rank constraint $\mathrm{Rank}(\mat{W}_1)=1$. Therefore, we need to extract an optimal rank-one solution $\mat{w}_1$ to $\mathrm{(P1)}$ from $\mat{W}_1^\star$. If $\mathrm{Rank}(\mat{W}_1^\star) = 1$, it is clear $\mat{w}_1 = \mat{u}_1(\mat{W}_1^\star)$. Otherwise, other tight matrix rank-one decomposition methods are needed. In \cite{SDPHuang2011}, Ai et al. have proven a matrix rank-one decomposition theorem and used it to show that \emph{the SDRs of a large class of complex-valued homogeneous QCQPs with not more than 4 constraints are in fact tight}\footnote{Note that the application of Theorem 2.2 and Theorem 2.3 in \cite{SDPHuang2011} needs $N_T\geq3$. It implies that $\mathrm{TX}_k~ \forall k$ should have $N_T\geq3$ antennas in our scenario.}. Since the problem $\mathrm{(P1)}$ as a homogeneous QCQP with 2 constraints, an optimal $\mat{w}_1$ to the QCQP $\mathrm{(P1)}$ can be reconstructed from $\mat{W}_1^\star$ to the SDR $\mathrm{(P2)}$ based on the theorem and algorithm of the matrix rank-one decomposition in \cite{SDPHuang2011}.

\begin{remark}
In $\mathrm{(P1)}$, if $\mat{C}(\mat{w}_2)$ is a positive/negative semidefinite matrix without full rank, $\mat{w}_1$ should and must be in the null space of $\mat{C}(\mat{w}_2)$ to satisfy $\mat{w}_1^H\mat{C}(\mat{w}_2)\mat{w}_1=0$. According to the proof in Appendix C, $\mat{w}_1$ can be expressed by $\overrightarrow{\mat{U}_1\mat{U}_1^H\mat{p}_{1}}$ where $\mat{U}_1\in \mathbb{C}^{N_T \times (N_T-\mathrm{Rank}({\mat{C}(\mat{w}_2)}))}$ consists of $N_T-\mathrm{Rank}({\mat{C}(\mat{w}_2)})$ eigenvectors corresponding to zero eigenvalues of ${\mat{C}(\mat{w}_2)}$. Then, $\mathrm{(P1)}$ is equivalent to
\begin{align}\label{eq:OptiP1}
&\max_{\mat{p}_{1}\in\mathbb{C}^{N_T\times 1}} {\frac{\mat{p}_{1}^H { \mat{U}_1\mat{U}_1^H \mat{A}_1(\mat{w}_2) \mat{U}_1\mat{U}_1^H } \mat{p}_{1}} {\mat{p}_{1}^H \mat{U}_1\mat{U}_1^H\mat{p}_{1}}},
\end{align}
from which it is easy to derive the optimal solution $\mat{p}_{1}^{opt} = \mat{u}_1\big({ \mat{U}_1\mat{U}_1^H \mat{A}_1(\mat{w}_2) \mat{U}_1\mat{U}_1^H},\mat{U}_1\mat{U}_1^H\big)$. Therefore, the optimal solution to $\mathrm{(P1)}$ is $\mat{w}_1=\overrightarrow{\mat{U}_1\mat{U}_1^H\mat{p}_{1}^{opt}}$.
\end{remark}

\subsection{{Optimization of $\mat{w}_2$}}\label{sec:OpTX2}

For a given \emph{feasible} $\mat{w}_1$, $\mathrm{(P0)}$ becomes another single-beamformer optimization problem w.r.t. $\mat{w}_2$. Maximization of its objective function is
\begin{equation}\label{eq:P0ObjectiveF}
\begin{aligned}
\max_{\mat{w}_2\in\mathcal{W_{FP}}} &\mat{w}_1^H\mat{H}_{11}^H\left(\sigma_1^2\mat{I} + \mat{H}_{21}\mat{w}_{2}\mat{w}_{2}^H\mat{H}_{21}^H\right)^{-1}\mat{H}_{11}\mat{w}_1 \\
\Longleftrightarrow
\max_{\mat{w}_2\in\mathcal{W_{FP}}} &\mat{w}_1^{H}\mat{H}_{11}^H\mat{H}_{11}\mat{w}_1 - \frac{|\mat{w}_1^{H}\mat{H}_{11}^H\mat{H}_{21}{\mat{w}_2}|^2}{{\sigma}_1^2+{\mat{w}_2}^H\mat{H}_{21}^H\mat{H}_{21}{\mat{w}_2}}\\
\Longleftrightarrow
\min_{\mat{w}_2\in\mathcal{W_{FP}}} &\frac{|\mat{w}_1^{H}\mat{H}_{11}^H\mat{H}_{21}{\mat{w}_2}|^2}{{\sigma}_1^2+{\mat{w}_2}^H\mat{H}_{21}^H\mat{H}_{21}{\mat{w}_2}} \\
\Longleftrightarrow
\min_{\mat{w}_2\in\mathcal{W_{FP}}} &\frac{{\mat{w}_2}^H \mat{C}_{1}(\mat{w}_1) \mat{w}_2}{\mat{w}_2^H \mat{C}_{2} {\mat{w}_2}}
\end{aligned}
\end{equation}
where $\mat{C}_{1}(\mat{w}_1) \stackrel{\Delta}{=} \mat{H}_{21}^H\mat{H}_{11}\mat{w}_1\mat{w}_1^{H}\mat{H}_{11}^H\mat{H}_{21}$ and $\mat{C}_{2} \stackrel{\Delta}{=} {\sigma}_1^2\mat{I}+\mat{H}_{21}^H\mat{H}_{21}$ are Hermitian matrices.

Then, $\mat{w}_2$ can be optimized by
\begin{equation*}\label{eq:P3}
\mathrm{(P3)}
\left\{
\begin{aligned}
\min_{\mat{w}_2\in\mathcal{W_{FP}}} ~~&\frac{{\mat{w}_2}^H \mat{C}_{1}(\mat{w}_1) \mat{w}_2}{\mat{w}_2^H \mat{C}_{2} {\mat{w}_2}} \\
\mathrm{s.t.}~~~~&{\mat{w}_2^H \mat{A}_2(\mat{w}_1)\mat{w}_2}=\mathrm{SINR}_2^\star.
\end{aligned}
\right.
\end{equation*}
Observe that $\mathrm{(P3)}$ is a fractional QCQP problem. The objective function is not even a quasi-convex function due to the convexity of both the nominator function and denominator function. To deal with this problem, we transform it by the SDR to
\begin{equation*}\label{eq:P4}
\mathrm{(P4)}
\left\{
\begin{aligned}
\min_{\mat{W}_2\succeq\mat{0}} ~~&\frac{\mathrm{Tr}\big(\mat{C}_{1}(\mat{w}_1)\mat{W}_2\big)}{\mathrm{Tr}\big(\mat{C}_{2}\mat{W}_2\big)} \\
\mathrm{s.t.}~~~~&\mathrm{Tr}\big(\mat{A}_2(\mat{w}_1)\mat{W}_2\big)  = \mathrm{SINR}_2^\star \\
&\mathrm{Tr}\big(\mat{W}_2\big) = 1,
\end{aligned}
\right.
\end{equation*}
which is still a non-convex problem. Fortunately, the fractional structure can be removed by a variation of the Charnes-Cooper variable transformation \cite{Charnes-Cooper}.
Define the transformed variable $\mat{Q}=s{\mat{W}_2}$ with {$s =\frac{1}{\mathrm{Tr}\big(\mat{C}_{2}\mat{W}_2\big)}$}. Then, $\mathrm{(P4)}$ becomes
\begin{equation*}\label{eq:P5}
\mathrm{(P5)}
\left\{
\begin{aligned}
\min_{\mat{Q},~s} ~~&{\mathrm{Tr}\big(\mat{C}_{1}(\mat{w}_1)\mat{Q}\big)} \\
\mathrm{s.t.}~~~&\mathrm{Tr}\big(\mat{A}_2(\mat{w}_1)\mat{Q}\big)  = s\cdot\mathrm{SINR}_2^\star \\
&{\mathrm{Tr}\big(\mat{C}_{2}\mat{Q}\big)} = 1, ~\mathrm{Tr}\big(\mat{Q}\big) = s \\
&\mat{Q}\succeq\mat{0},~\frac{1}{\lambda_1(\mat{C}_{2})}\leq s\leq \frac{1}{\lambda_N(\mat{C}_{2})}.
\end{aligned}
\right.
\end{equation*}
which is a convex problem w.r.t. $\mat{Q}$ and $s$ and solvable (see the Appendix D). By a convex optimization toolbox, we can obtain the optimal solution $(\mat{Q}^\star, s^\star)$. Then, the optimal solution to $\mathrm{(P4)}$ can be easily obtained by $\mat{W}_2^\star=\frac{\mat{Q}^\star}{s^\star}$.
Observe that $\mathrm{(P3)}$ is equivalent to a homogeneous QCQP with 3 constraints. Therefore, by the matrix rank-one decomposition method, an optimal rank-one solution $\mat{w}_2$ to $\mathrm{(P3)}$ can be extracted from $\mat{W}_2^\star$ when $\mathrm{Rank}(\mat{W}_2^\star)>1$.

\begin{remark}
In $\mathrm{(P3)}$, if $\mat{D} \stackrel{\Delta}{=} \mat{A}_2(\mat{w}_1)-\mathrm{SINR}_2^\star \cdot\mat{I}$ is a positive/negative semidefinite matrix without full rank, $\mat{w}_1$ should and must be in the null space of $\mat{D}$ to satisfy $\mat{w}_2^H\mat{D}\mat{w}_2=0$. According to the proof in Appendix C, $\mat{w}_2$ can be expressed by $\overrightarrow{\mat{U}_2\mat{U}_2^H\mat{p}_{2}}$ where $\mat{U}_2\in \mathbb{C}^{N_T \times (N_T-\mathrm{Rank}(\mat{D}))}$ consists of $N_T-\mathrm{Rank}(\mat{D})$ eigenvectors corresponding to zero eigenvalues of $\mat{D}$. Then, $\mathrm{(P3)}$ is equivalent to
\begin{align}\label{eq:OptiP2}
&\max_{\mat{p}_{2}\in\mathbb{C}^{N_T\times 1}} \frac{\mat{p}_{2}^H { \mat{U}_2\mat{U}_2^H \mat{C}_1(\mat{w}_1) \mat{U}_2\mat{U}_2^H } \mat{p}_{2}}{\mat{p}_{2}^H { \mat{U}_2\mat{U}_2^H \mat{C}_2 \mat{U}_2\mat{U}_2^H } \mat{p}_{2}},
\end{align}
from which it is easy to derive the optimal solution $\mat{p}_{2}^{opt} = \mat{u}_1\big({ \mat{U}_2\mat{U}_2^H \mat{C}_1(\mat{w}_1) \mat{U}_2\mat{U}_2^H}, \mat{U}_2\mat{U}_2^H \mat{C}_2 \mat{U}_2\mat{U}_2^H\big)$. Therefore, the optimal solution to $\mathrm{(P3)}$ is $\mat{w}_2=\overrightarrow{\mat{U}_2\mat{U}_2^H\mat{p}_{2}^{opt}}$.
\end{remark}

\subsection{{Algorithm}}\label{sec:IAA}

In this section, algorithm discussions are given to gain some insights into the proposed alternating optimization algorithm.

\subsubsection{\textbf{Algorithm Description}}\label{sec:Algorithm}

A feasible initial $\mat{w}_2$ for optimization of $\mathrm{TX}_1$ can be obtained as follows.
\begin{proposition}\label{pr:Wf}
For a given $\mathrm{SINR}_2^\star\in(2^{\underline{R}_2}-1, 2^{\overline{R}_2}-1)$, $(\mat{w}_1, \mat{w}_2)$ is always a feasible solution pair to $\mathrm{(P0)}$ if $\mat{w}_1\in\mathcal{W_{FP}}$, $\mat{w}_2\in\mathcal{W_{F}}$ with
\begin{align}
\mathcal{W_{F}}\stackrel{\Delta}{=} \Big\{&\mat{w}_2 \in \mathcal{W_{FP}}: \mat{w}_2^{H}\mat{H}_{22}^H\mat{H}_{22}\mat{w}_2 \geq \sigma_2^2\mathrm{SINR}_2^\star, \nonumber \\
&~\lambda_1\big(\mat{C}(\mat{w}_2)\big)\cdot \lambda_{N_T}\big(\mat{C}(\mat{w}_2)\big) \leq 0\Big\},
\end{align}
where $\mat{C}(\mat{w}_2)$ is defined in (\ref{eq:WfMatrix}).
\end{proposition}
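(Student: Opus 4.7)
The plan is to recognize that the two defining conditions of $\mathcal{W_F}$ correspond exactly to the two requirements derived in (\ref{eq:FractionConstriant}c) for feasibility of $\mathrm{(P0)}$ under the full-power constraint. My approach is to show that for any $\mat{w}_2 \in \mathcal{W_F}$ one can construct a companion $\mat{w}_1 \in \mathcal{W_{FP}}$ making the pair feasible.

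First, I would recycle the chain of equivalences already established in (\ref{eq:FractionConstriant}) via the matrix inverse lemma: on $\mathcal{W_{FP}} \times \mathcal{W_{FP}}$, the equality $\mathrm{SINR}_2(\mat{w}_1,\mat{w}_2) = \mathrm{SINR}_2^\star$ holds if and only if the two conditions
\begin{equation*}
\mat{w}_1^H \mat{C}(\mat{w}_2)\mat{w}_1 = 0 \quad \text{and} \quad \mat{w}_2^H\mat{H}_{22}^H\mat{H}_{22}\mat{w}_2 \geq \sigma_2^2 \mathrm{SINR}_2^\star
\end{equation*}
hold simultaneously. The latter is imposed explicitly by the hypothesis $\mat{w}_2 \in \mathcal{W_F}$, so it is automatic. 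The task therefore reduces to exhibiting, for each such $\mat{w}_2$, a unit vector $\mat{w}_1$ annihilating the Hermitian quadratic form $\mat{w}_1^H \mat{C}(\mat{w}_2)\mat{w}_1$.

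For this existence step, I would invoke the Rayleigh quotient bounds: for any $\mat{w}_1 \in \mathcal{W_{FP}}$,
\begin{equation*}
\lambda_N\bigl(\mat{C}(\mat{w}_2)\bigr) \leq \mat{w}_1^H \mat{C}(\mat{w}_2)\mat{w}_1 \leq \lambda_1\bigl(\mat{C}(\mat{w}_2)\bigr),
\end{equation*}
with the extremes attained at $\mat{u}_N(\mat{C}(\mat{w}_2))$ and $\mat{u}_1(\mat{C}(\mat{w}_2))$. The remaining condition in $\mathcal{W_F}$, namely $\lambda_1(\mat{C}(\mat{w}_2))\cdot \lambda_N(\mat{C}(\mat{w}_2)) \leq 0$, says exactly that $0 \in [\lambda_N,\lambda_1]$. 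Since the unit sphere $\mathcal{W_{FP}}$ is path-connected and the quadratic form is a continuous function of $\mat{w}_1$, the intermediate value theorem applied along any arc from $\mat{u}_N$ to $\mat{u}_1$ yields a unit vector at which the form equals zero. Equivalently, one may write the explicit construction $\mat{w}_1 = \alpha\, \mat{u}_1 + \beta\, \mat{u}_N$ with $|\alpha|^2 = -\lambda_N/(\lambda_1 - \lambda_N)$ and $|\beta|^2 = \lambda_1/(\lambda_1 - \lambda_N)$; orthonormality of the two eigenvectors then gives $\|\mat{w}_1\|^2 = 1$, and direct substitution gives $\mat{w}_1^H \mat{C}(\mat{w}_2) \mat{w}_1 = |\alpha|^2 \lambda_1 + |\beta|^2 \lambda_N = 0$. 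In the degenerate case $\lambda_1 = \lambda_N = 0$, the matrix $\mat{C}(\mat{w}_2)$ vanishes and any unit vector works.

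The main obstacle is really only bookkeeping: lining up the two conditions of $\mathcal{W_F}$ with those of (\ref{eq:FractionConstriant}c) and handling the degenerate eigenvalue case cleanly. Beyond that, the argument rests only on standard linear algebra (Rayleigh bounds plus connectedness of the sphere) and requires no optimization-specific machinery. Observe finally that the proposition claims only sufficiency, so it is enough to exhibit \emph{one} companion $\mat{w}_1$ per $\mat{w}_2 \in \mathcal{W_F}$; a converse characterization of all feasible $\mat{w}_2$ is not required.
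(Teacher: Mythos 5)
Your proposal is correct and follows essentially the same route as the paper's Appendix F: the same reduction via the equivalences in (\ref{eq:FractionConstriant}) to the pair of conditions $\mat{w}_2^H\mat{H}_{22}^H\mat{H}_{22}\mat{w}_2 \geq \sigma_2^2\mathrm{SINR}_2^\star$ and $\mat{w}_1^H\mat{C}(\mat{w}_2)\mat{w}_1=0$, followed by the same explicit unit vector $\mat{w}_1$ built from $\mat{u}_1(\mat{C}(\mat{w}_2))$ and $\mat{u}_{N_T}(\mat{C}(\mat{w}_2))$ with squared weights $-\lambda_{N_T}/(\lambda_1-\lambda_{N_T})$ and $\lambda_1/(\lambda_1-\lambda_{N_T})$. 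The only cosmetic difference is that your single formula (backed by the Rayleigh-quotient/intermediate-value observation and the separate $\mat{C}(\mat{w}_2)=\mat{0}$ case) absorbs the paper's explicit case split between full-rank and rank-deficient $\mat{C}(\mat{w}_2)$.
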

\begin{IEEEproof}
See Appendix E.
\end{IEEEproof}
That is, $\mathrm{(P1)}$ equivalent to $\mathrm{(P0)}$ with a fixed ${\mat{w}}_2 \in \mathcal{W_{F}}$ always has at least one feasible point ${\mat{w}}_1$ in $\mathcal{W_{FP}}$ (more analysis of initialization will be given in Section \Rmnum{4}-C-2).

The proposed alternating optimization algorithm with any initial $\mat{w}_2 \in \mathcal{W_{F}}$ is described in pseudo-code as Algorithm \ref{alg:A1}:

\begin{algorithm}
\DontPrintSemicolon
\KwIn{$\mat{w}_2^{Alt}$, $\mat{w}_2^{Ego}$, an arbitrary $R_2^\star \in \big({\underline{R}_2}, {\overline{R}_2} \big)$, and $\mathrm{SINR}_2^\star = 2^{R_2^\star}-1$.}
\KwOut{ A convergent point $(R_1^{(\ell)}, R_2^\star)$ by $(\mat{w}_1^{(\ell)}, \mat{w}_2^{(\ell)})$.}
\Begin{
Initialization:\;
Set a feasible $\mat{w}_2^{(0)}\in\mathcal{W_F}$, $\ell=0$.

\While{some convergence criterion is not satisfied}{ $\ell++$.\;

Given $\mat{w}_2^{(\ell-1)}$, obtain an optimal $\mat{W}_1$ to $\mathrm{(P2)}$.\;

Extract an optimal $\mat{w}_1^{(\ell)}$ from $\mat{W}_1$ to $\mathrm{(P1)}$; \;

Given $\mat{w}_1^{(\ell)}$, obtain an optimal $(\mat{Q}, s)$ to $\mathrm{(P5)}$ and an optimal $\mat{W}_2 = \frac{\mat{Q}}{s}$ to $\mathrm{(P4)}$.\;

Extract an optimal $\mat{w}_2^{(\ell)}$ from $\mat{W}_2$ to $\mathrm{(P3)}$.\;

Compute $R_1^{(\ell)} = \log_2\big(1 + \mat{w}_1^{(\ell),H}\mat{A}_1(\mat{w}_2^{(\ell)})\mat{w}_1^{(\ell)}\big)$. \;
}
}
\caption{Two-User Alternating Optimization Algorithm}\label{alg:A1}
\end{algorithm}

\subsubsection{\textbf{Algorithm Analysis}}\label{sec:convergence}

In this section, we discuss the proposed algorithm in the following aspects: \rmnum{1}) the convergence, \rmnum{2}) the quality of the solution, \rmnum{3}) the implementation, \rmnum{4}) the complexity.

\emph{\rmnum{1}) Convergence Analysis:}
Based on the results in Section \Rmnum{4}-A and Section \Rmnum{4}-B, a global optimal solution to each single-beamformer optimization problem $\mathrm{(P1)}$ and $\mathrm{(P3)}$ can be obtained at each iteration. We will show that the sequence $\left\{\mathrm{SINR}_1(\mat{w}_1^{(\ell)}, \mat{w}_2^{(\ell)})\right\}_{\ell=1}^{\infty}$ by Algorithm \ref{alg:A1} monotonically increases and converges, i.e., $\mathrm{SINR}_1(\mat{w}_1^{(\ell+1)}, \mat{w}_2^{(\ell+1)}) \geq \mathrm{SINR}_1(\mat{w}_1^{(\ell)}, \mat{w}_2^{(\ell)})~\forall \ell$.

Denote the optimization of $\mat{w}_1$ and the optimization of $\mat{w}_2$ by the mapping functions $\mat{w}_1 = \Phi(\mat{w}_2)$ and $\mat{w}_2 = \Theta(\mat{w}_1)$, respectively. Then, the procedure of Algorithm \ref{alg:A1} at the $\ell+1$th iteration is shown as
\begin{align}\label{eq:procedure}
\mathrm{SINR}_1(\mat{w}_1^{(\ell)}, \mat{w}_2^{(\ell)})& \stackrel{\mat{w}_1^{(\ell+1)} = \Phi\left(\mat{w}_2^{(\ell)}\right)}{\longrightarrow} \mathrm{SINR}_1(\mat{w}_1^{(\ell+1)}, \mat{w}_2^{(\ell)}) ~~\nonumber \\
&\stackrel{\mat{w}_2^{(\ell+1)} = \Theta\left(\mat{w}_1^{(\ell+1)}\right)}{\longrightarrow} \mathrm{SINR}_1(\mat{w}_1^{(\ell+1)}, \mat{w}_2^{(\ell+1)}), \nonumber
\end{align}
In $\mat{w}_1^{(\ell+1)} = \Phi(\mat{w}_2^{(\ell)})$, since the feasible point set of $\mat{w}_1$ of $\mathrm{(P1)}$ includes $\mat{w}_1^{(\ell)}$ and additionally the global optimal solution to $\mathrm{(P1)}$ can be obtained by $\Phi(\cdot)$, it obviously implies $\mathrm{SINR}_1(\mat{w}_1^{(\ell+1)}, \mat{w}_2^{(\ell)}) = \mathrm{SINR}_1(\Phi\left(\mat{w}_2^{(\ell)}\right), \mat{w}_2^{(\ell)}) \geq \mathrm{SINR}_1(\mat{w}_1^{(\ell)}, \mat{w}_2^{(\ell)})$. Similarly, it is also easily verified $\mathrm{SINR}_1(\mat{w}_1^{(\ell+1)}, \mat{w}_2^{(\ell+1)})= \mathrm{SINR}_1(\mat{w}_1^{(\ell+1)}, \Theta\left(\mat{w}_1^{(\ell+1)}\right)) \geq \mathrm{SINR}_1(\mat{w}_1^{(\ell+1)}, \mat{w}_2^{(\ell)})$ by $\mat{w}_2^{(\ell+1)} = \Theta(\mat{w}_1^{(\ell+1)})$ due to the optimality of $\Theta(\cdot)$. As a consequence, the sequence of $\left\{\mathrm{SINR}_1(\mat{w}_1^{(\ell)}, \mat{w}_2^{(\ell)})\right\}_{\ell=1}^{\infty}$ \emph{monotonically} increases as the iteration number $\ell$ increases. In addition, since the sequence $\left\{\mathrm{SINR}_1(\mat{w}_1^{(\ell)}, \mat{w}_2^{(\ell)})\right\}_{\ell=1}^{\infty}$ is upper-bounded by the single-user SINR, i.e., $\frac{\lambda_1(\mat{H}_{11}^H\mat{H}_{11})}{\sigma_1^2}$ in (\ref{eq:SURate}), the convergence of the sequence $\left\{\mathrm{SINR}_1(\mat{w}_1^{(\ell)}, \mat{w}_2^{(\ell)})\right\}_{\ell=1}^{\infty}$, and thus the convergence of Algorithm \ref{alg:A1} is guaranteed for any feasible initial point $\mat{w}_2^{(0)}$.

Since the \emph{hard-coupled} two beamformers exist not only in the objective but also in the constraints in $\mathrm{(P0)}$, the conventional convergence analysis for the block coordinate descent algorithm \cite{ConvergenceStationaryP} that requires that the constraints are \emph{separable} among the variables is not applicable to our scenario. Therefore, it is unclear whether the proposed algorithm converges to a stationary point $\left(R_1(\mat{w}_1^{(\ell)}, \mat{w}_2^{(\ell)}), R_2(\mat{w}_1^{(\ell)}, \mat{w}_2^{(\ell)})\right)$ where $\mat{w}_1^{(\ell)}$ and $\mat{w}_2^{(\ell)}$ satisfy the KKT conditions of the original problem $\mathrm{(P0)}$.

\emph{\rmnum{2}) Quality of Solutions:} Due to the convergence of the proposed algorithm, the limit point of sequence of $\left\{\mathrm{SINR}_1(\mat{w}_1^{(\ell)}, \mat{w}_2^{(\ell)})\right\}_{\ell=0}^{\infty}$ for an arbitrary feasible initial $\mat{w}_2^{(0)}$ can be achieved by
\begin{align}
&\lim_{\ell \rightarrow \infty}\mathrm{SINR}_1(\mat{w}_1^{(\ell)}, \mat{w}_2^{(\ell)}) \nonumber \\ =&\mathrm{SINR}_1\Big(\Phi\left(\Theta\left(...\left(\Phi\left(\mat{w}_2^{(0)}\right)\right)\right)\right), \Theta\left(...\left(\mat{w}_2^{(0)}\right)\right)\Big). \nonumber
\end{align}
It implies that the performance of the alternating optimization algorithm depends on the initial beamformer $\mat{w}_2^{(0)}$. Denote the global optimal solution to $\mathrm{(P0)}$ by $(\mat{w}_1^\star, \mat{w}_2^\star)$. Take an extreme example, if $\mat{w}_2^{(0)}=\mat{w}_2^\star$, we can obtain the global optimum $\mat{w}_1^\star = \Phi\left( \mat{w}_2^{(0)}\right)$ directly due to the optimality of $\Phi(\cdot)$. Therefore, a good initial beamformer could lead to high performance. However generally, it is difficult to find a good initial point \emph{efficiently} for such a complex multi-variable optimization problem. In order to improve the performance, a common way in references is to implement the alternating optimization algorithm with multiple random initializations and then select the one with the best performance. In this paper, we desire to design a scheme to generate a good initialization \emph{efficiently}.

Inspired by the idea in \cite{EduardMISOCharcter, BalancingMIMO}, we heuristically propose a transmit beamformer design scheme by balancing the "egoistic" and "altruistic" strategies as
\begin{equation}\label{eq:BlancingAltEgo}
\mat{w}_i(\xi_{i,1}, \xi_{i,2}) = \overrightarrow{\xi_{i,1}\mat{w}_i^{Ego} + \xi_{i,2}\mat{w}_i^{Alt}}, ~~~~ i = 1,2,
\end{equation}
where $\xi_{i,1}$ and $\xi_{i,2}$ are complex-valued parameters satisfying $|\xi_{i,1}| + |\xi_{i,2}| = 1$. In fact, this tradeoff scheme is reasonable. For instance, it is necessary to be Pareto optimal for the two-user MISO IC \cite{EduardMISOCharcter}, and its a similar form still provides a good performance in sum-rate maximization for the multi-user single-stream MIMO IC \cite{BalancingMIMO}. The following simulation results show that this characterization cannot exactly achieve the whole strict Pareto boundary for the two-user MIMO IC but still has a promising performance.
In particular, the two ending point of strict Pareto boundary $E1$ and $E2$ can be achieved exactly by $(\mat{w}_1(1, 0), \mat{w}_2(0, 1))$ and $(\mat{w}_1(0, 1), \mat{w}_2(1, 0))$, respectively.

If $({\mat{w}}_1^{(0)}, {\mat{w}}_2^{(0)})$ corresponds to the bound of (\ref{eq:BlancingAltEgo}) or of random beamforming, then the proposed algorithm must improve (or at least keep) the bound of (\ref{eq:BlancingAltEgo}) or of random beamforming. However, it is not efficient to find those beamforming pairs achieving the bound achieved by (\ref{eq:BlancingAltEgo}) or by the random beamforming. Therefore, there is no guarantee to say that the proposed algorithm with \emph{only one initial beamformer} always achieves an \emph{outer} boundary than the bound of (\ref{eq:BlancingAltEgo}) or of random beamforming, but its performance will increase with the number of initial beamformers (i.e., multiple initializations). Since the $2N_T$-dimensional real space of each complex $\mat{w}_i$ can be \emph{approximately} reduced to 3-dimensional real space (i.e., $|\xi_{i,1}|$, $|\xi_{i,2}|$ and the difference of the phases of $\xi_{i,1}$ and $\xi_{i,2}$ in (\ref{eq:BlancingAltEgo}) without significant performance loss. Therefore, the proposed algorithm with the proposed initialization in (\ref{eq:BlancingAltEgo}) is \emph{more efficient or likely} to achieve a good performance compared with a random initialization.

To further enhance the efficiency of initialization by (\ref{eq:BlancingAltEgo}), a real constant parameter (i.e., the proportion of the "egoistic" strategy) is employed to reduce (\ref{eq:BlancingAltEgo}) to
\begin{equation}\label{eq:InitializationW}
\mat{w}_i = \overrightarrow{\zeta \cdot\mat{w}_i^{Ego} + (1-\zeta) \cdot \mat{w}_i^{Alt}},~~~~ i = 1,2,
\end{equation}
where $\zeta = \frac{R_2^\star-\underline{R}_2}{\overline{R}_2-\underline{R}_2}$ is a constant for a given $R_2^\star$. If $\mat{w}_2\notin\mathcal{W_F}$, we reset $\zeta \in \frac{R_2^\star-\underline{R}_2}{\overline{R}_2-\underline{R}_2} + [-\nu,\nu]$ with $0< \nu \leq \min\Big\{\frac{R_2^\star-\underline{R}_2}{\overline{R}_2-\underline{R}_2}, \frac{\overline{R}_2-R_2^\star}{\overline{R}_2-\underline{R}_2}\Big\}$ until $\mat{w}_2\in\mathcal{W_F}$. If $\mat{w}_2$ is still infeasible, we choose a randomly generated $\mat{w}_2\in\mathcal{W_F}$ directly.

Note that the characterization in (\ref{eq:InitializationW}) \emph{directly} corresponds to its own bound. Thus, our proposed algorithm with the initialization by (\ref{eq:InitializationW}) always outperforms the bound by (\ref{eq:InitializationW}), which can serve as a lower bound of the proposed algorithm.

\emph{\rmnum{3}) Implementation with Transmitter Cooperation:} For the purpose of implementation without an authority{\footnote{The proposed algorithm can be also implemented in a centralized way with the aid of an authority who does the optimization of both $\mat{w}_1$ and $\mat{w}_2$ based on the global CSI collected through feedback links.}}, we assume that each transmitter knows perfect global CSI in a block-fading environment through training and feedback and is willing to cooperate with each other transmitter for information exchange via backhaul links. For the optimization of $\mat{w}_1$ at $\mathrm{TX}_1$, $\mathrm{TX}_1$ solving $\mathrm{(P1)}$ based on the updated $\mat{w}_2$ from $\mathrm{TX}_2$. Similarly, $\mathrm{TX}_2$ optimizes $\mat{w}_2$ by solving $\mathrm{(P3)}$ based on the updated $\mat{w}_1$ from $\mathrm{TX}_1$. The algorithm can be terminated once $\mathrm{TX}_1$ finds that convergence criterion is satisfied. 

\emph{\rmnum{4}) Complexity Analysis:}
For the proposed algorithm shown in Algorithm \ref{alg:A1}, each iteration involves solving two convex semidefinite relaxation (SDR) problems (i.e., $\mathrm{(P2)}$ and $\mathrm{(P5)}$) and two implementations of the matrix rank-one decomposition of $\mat{W}_i$. In \cite{SDRLuo}, it is shown that the complexity of solving the SDR is \emph{polynomial} in the problem size (i.e., $N_T$) and the number of constraints (denoted by $m$), i,e., $\mathcal{O}(\max{(m, N_T)}^4 N_T^{1/2} \log(1/\epsilon))$ given a solution accuracy $\epsilon > 0$. In this paper, we have $N_T\geq 3$, and $m=2$ in $\mathrm{(P2)}$ and $m=3$ in $\mathrm{(P5)}$. Thus, the complexity of solving $\mathrm{(P2)}$ and $\mathrm{(P5)}$ is $\mathcal{O}(N_T^{4.5} \log(1/\epsilon))$. In terms of the complexity of the matrix rank-one decomposition \cite{SDPHuang2010}, the rank-one solution can be extracted in \emph{polynomial-time} if $\mathrm{Rank}(\mat{W}_i)\geq3$; If $\mathrm{Rank}(\mat{W}_i)\geq2$, it is sufficient to seek for a rank-one solution to a sequence of linear matrix equations within a slightly expanded range space of $\mat{W}_i$. If $\mathrm{Rank}(\mat{W}_i)=1$, only eigen-decomposition of $\mat{W}_i$ is needed.

In the following simulations, the average time of an iteration of Algorithm \ref{alg:A1} is 0.6180 seconds by running the MATLAB 7.10 on the computer with AMD Athlon(TM) 64 Processor 3200+, 2.01 GHZ and 2GB RAM. Additionally, the fast convergent behavior of the proposed algorithm is implied numerically (e.g., Fig. \ref{fig:no2-b} with 8.55 iterations on average and Fig. \ref{fig:no3-b} with 5.16 iterations on average). Therefore, the proposed algorithm has reasonable complexity.

\subsection{Extension to the multi-user MIMO IC}

Consider the $K$-user single-stream MIMO IC. With the MMSE receiver, the achievable rate of the link $\mathrm{TX}_k \mapsto \mathrm{RX}_k$ is expressed as ${R}_k(\{\mat{w}_k\}_{\mathcal{K}}) = \log_2\big(1+\mathrm{SINR}_k(\{\mat{w}_k\}_{\mathcal{K}})\big)~\forall k\in\mathcal{K}=\{1,...,K\}$,
where 
\begin{align}\label{eq:KUserMMSESINR}
&\mathrm{SINR}_k(\{\mat{w}_i\}_{\mathcal{K}})= \nonumber \\
&\mat{w}_k^H \underbrace{\mat{H}_{kk}^H(\sum_{i\neq k}\mat{H}_{ik}{\mat{w}_i}{\mat{w}_i}^H{\mat{H}_{ik}}^H+{\sigma}_k^2\mat{I})^{-1}\mat{H}_{kk}}_{\mat{A}_k(\mat{w}_{-k})}\mat{w}_k.
\end{align}
is the SINR expression of the $k$th user. $\mat{w}_{-k}$ denotes $\{\mat{w}_{i}\}_{\mathcal{K}\backslash\{k\}}$.

Without loss of generality, the optimization problem $\mathrm{(P0)}$ in the two-user case can be generalized to{\footnote{For multi-user case, an arbitrary Pareto-optimal point of the utility region can be achieved by maximizing one user's utility while fixing the others.}}
\begin{equation*}\label{eq:KP0}
\mathrm{(Q0)}
\left\{
\begin{aligned}
\max_{\{\mat{w}_i\}_{\mathcal{K}}} ~~&{\mathrm{SINR}_1(\{\mat{w}_i\}_{\mathcal{K}})} \nonumber \\
\mathrm{s.t.}~~~~&{\mathrm{SINR}_k(\{\mat{w}_i\}_{\mathcal{K}})} =  \mathrm{SINR}_k^\star,~~~\forall k\in{\mathcal{K}}\backslash\{1\}. \\
&\mat{w}_i^H\mat{w}_i\leq 1,~~~\forall i\in \mathcal{K},
\end{aligned}
\right.
\end{equation*}
where $\{\mathrm{SINR}_{i}^\star\}_{\mathcal{K}\backslash\{1\}}$ are assumed to be feasible. The problem $\mathrm{(Q0)}$ is a non-convex problem of $\{\mat{w}_i\}_{\mathcal{K}}$.

To extend the proposed algorithm to $\mathrm{(Q0)}$, we derive equivalent expressions of $\mathrm{SINR}_k(\{\mat{w}_i\}_{\mathcal{K}})$ in (\ref{eq:KUserMMSESINR}) by defining ${\mat{D(\mat{w}_{-i-k})}}=\sum_{\ell\neq k, i}\mat{H}_{\ell k}{\mat{w}_\ell}{\mat{w}_\ell}^H{\mat{H}_{\ell k}}^H+{\sigma}_k^2\mat{I}$:
\begin{align}\label{eq:MMSESINR2}
&\mathrm{SINR}_k(\{\mat{w}_i\}_{\mathcal{K}}) \nonumber \\
&=\mat{w}_k^H\mat{H}_{kk}^H\left(\mat{D(\mat{w}_{-i-k})} + \mat{H}_{ik}{\mat{w}_i}{\mat{w}_i}^H{\mat{H}_{ik}}^H \right)^{-1}\mat{H}_{kk}\mat{w}_k \nonumber\\
&\stackrel{(a)}{=} \mat{w}_k^H\mat{H}_{kk}^H{\mat{D}(\mat{w}_{-k-i})}^{-1}\mat{H}_{kk}\mat{w}_k - \nonumber\\
&\frac{\mat{w}_i^H\overbrace{\mat{H}_{ik}^H{\mat{D}(\mat{w}_{-k-i})}^{-1}\mat{H}_{kk}\mat{w}_k\mat{w}_k^H \mat{H}_{kk}^H{\mat{D}(\mat{w}_{-i-k})}^{-1}\mat{H}_{ik}}^{\mat{F}_k(\mat{w}_{-i})}\mat{w}_i}{1+{\mat{w}_i}^H\underbrace{{\mat{H}_{ik}}^H{\mat{D}(\mat{w}_{-i-k})}^{-1}\mat{H}_{ik}}_{\mat{G}_k(\mat{w}_{-i-k})}{\mat{w}_i}}
\end{align}
where the transformation $(a)$ is based on the Sherman-Morrison Formula \cite{MatrixAnalysisand} and $\mat{w}_{-i-k} = \{\mat{w}_\ell\}_{\mathcal{K}\backslash \{i,k\}}$.

Therefore, maximization of the objective function in $\mathrm{(Q0)}$ w.r.t. different beamformers is equivalent to
\begin{align}\label{eq:ObjEqui}
&\max_{\mat{w}_1}~\mathrm{SINR}_1(\{\mat{w}_i\}_{\mathcal{K}})
~{\Longleftrightarrow}~
\max_{\mat{w}_1} ~{\mat{w}_1}^H\mat{A}_1(\mat{w}_{-1}){\mat{w}_1} \nonumber \\
&\max_{\mat{w}_k}~\mathrm{SINR}_1(\{\mat{w}_i\}_{\mathcal{K}})
\stackrel{(b)}{\Longleftrightarrow}
\min_{\mat{w}_k} ~\frac{{\mat{w}_k}^H\mat{F}_k(\mat{w}_{-k}){\mat{w}_k}}{1+{\mat{w}_k}^H\mat{G}_k(\mat{w}_{-1-k}){\mat{w}_k}}, \nonumber \\
&~~~~~~~~~~~~~~~~~~~~~~~~~~~~~~~~~~~~~~~~\forall k\in{\mathcal{K}\backslash \{1\}},
\end{align}
and individual SINR constraint is equivalent to
\begin{align}\label{eq:ConEqui}
&\mathrm{SINR}_k(\{\mat{w}_i\}_{\mathcal{K}})=\mathrm{SINR}_k^\star \nonumber \\
{\Longleftrightarrow}~
&{\mat{w}_k}^H\mat{A}_k(\mat{w}_{-k}){\mat{w}_k} = \mathrm{SINR}_k^\star, \nonumber \\
\stackrel{(b)}{\Longleftrightarrow}~
&{\mat{w}_i}^H\mat{E}_k(\mat{w}_{-i}){\mat{w}_i} = \gamma_k(\mat{w}_{-i}),~~\forall i\neq k
\end{align}
where $\mat{E}_k(\mat{w}_{-i})$ and $\gamma_k(\mat{w}_{-i})$ are defined as
\begin{align}
&\mat{E}_k(\mat{w}_{-i})~\stackrel{\Delta}{=}~\mat{F}_k(\mat{w}_{-i}) - \nonumber \\
&\left(\mat{w}_k^H \mat{H}_{kk}^H{\mat{D}(\mat{w}_{-k-i})}^{-1}\mat{H}_{kk}\mat{w}_k - \mathrm{SINR}_k^\star\right) \mat{G}_k(\mat{w}_{-i-k}); \nonumber \\
&\gamma_k(\mat{w}_{-i}) ~\stackrel{\Delta}{=}~ \mat{w}_k^H \mat{H}_{kk}^H{\mat{D}(\mat{w}_{-k-i})}^{-1}\mat{H}_{kk}\mat{w}_k - \mathrm{SINR}_k^\star. \nonumber
\end{align}
The equivalence $(b)$ in both (\ref{eq:ObjEqui}) and (\ref{eq:ConEqui}) is based on (\ref{eq:MMSESINR2}).

\subsubsection{Optimization of $\mat{w}_1$}
Given the fixed $\mat{w}_{-1}$ and based on the equivalence results in (\ref{eq:ObjEqui}) and (\ref{eq:ConEqui}), $\mathrm{(Q0)}$ w.r.t. $\mat{w}_1$ is equivalent to
\begin{equation*}\label{eq:KP0}
\mathrm{(Q1)}
\left\{
\begin{aligned}
\max_{\mat{w}_1} ~~& {\mat{w}_1}^H\mat{A}_1(\mat{w}_{-1}){\mat{w}_1} \nonumber \\
\mathrm{s.t.}~~~&{\mat{w}_1}^H\mat{E}_k(\mat{w}_{-1}){\mat{w}_1} = \gamma_k(\mat{w}_{-1}),~\forall k\in{\mathcal{K}\backslash \{1\}}. \\
&{\mat{w}_1}^H{\mat{w}_1} \leq 1.
\end{aligned}
\right.
\end{equation*}
Observe that $\mathrm{(Q1)}$ is a homogeneous QCQP. By the SDR, $\mathrm{(Q1)}$ is relaxed to
\begin{equation*}\label{eq:KP0}
\mathrm{(Q1')}
\left\{
\begin{aligned}
\max_{\mat{W}_1\succeq \mat{0}} ~~& \mathrm{Tr}(\mat{A}_1(\mat{w}_{-1}){\mat{W}_1}) \nonumber \\
\mathrm{s.t.}~~~&\mathrm{Tr}(\mat{E}_k(\mat{w}_{-1}){\mat{W}_1}) = \gamma_k(\mat{w}_{-1}),~\forall k\in{\mathcal{K}\backslash \{1\}} \\
&\mathrm{Tr}({\mat{W}_1}) \leq 1
\end{aligned}
\right.
\end{equation*}
where $\mat{W}_1 = \mat{w}_1 \mat{w}_1^H$. Now, $\mathrm{(Q1')}$ becomes a convex problem w.r.t. $\mat{W}_1$. The optimal $\mat{W}_1^\star$ to $\mathrm{(Q1')}$ can be efficiently solved by a convex optimization toolbox.

If $\mathrm{Rank}(\mat{W}_1^\star) = 1$, the optimal rank-one solution is $\mat{w}_1 = \mat{u}_1(\mat{W}_1^\star)$. Otherwise,
we observe that $\mathrm{(Q1)}$ as a homogeneous QCQP has $K$ constraints, and thus an optimal $\mat{w}_1$ to $\mathrm{(Q1)}$ can be reconstructed from $\mat{W}_1^\star$ for $K\leq 4$ by the matrix rank-one decomposition method in \cite{SDPHuang2011}. When $K\geq 5$, there exist several approaches (e.g., the eigenvector approximation method and the randomization method) to extract an \emph{approximate} $\mat{w}_1$ from $\mat{W}_1^\star$. Although these approximation methods are not tight, intensive research show that they provide promising performance (the interested readers could refer to the analysis of approximation accuracy bounds in \cite{SDRLuo}).

\subsubsection{Optimization of $\mat{w}_k,~\forall k\neq 1$}
Given the fixed $\mat{w}_{-k}$ and based on the equivalence results in (\ref{eq:ObjEqui}) and (\ref{eq:ConEqui}),
$\mathrm{(Q0)}$ is equivalent to
\begin{equation*}\label{eq:KP0}
\mathrm{(Qk)}
\left\{
\begin{aligned}
\min_{\mat{w}_k} ~~&\frac{{\mat{w}_k}^H\mat{F}_k(\mat{w}_{-k}){\mat{w}_k}}{1+{\mat{w}_k}^H\mat{G}_k(\mat{w}_{-1-k}){\mat{w}_k}} \nonumber \\
\mathrm{s.t.}~~~&{\mat{w}_k}^H\mat{A}_k(\mat{w}_{-k}){\mat{w}_k} = \mathrm{SINR}_k^\star \nonumber \\
&{\mat{w}_k}^H\mat{E}_\ell(\mat{w}_{-k}){\mat{w}_k} = \gamma_\ell(\mat{w}_{-k}),~\forall \ell\in{\mathcal{K}\backslash \{1,k\}} \\
&{\mat{w}_k}^H{\mat{w}_k} \leq 1.
\end{aligned}
\right.
\end{equation*}
Observe that the objective function belongs to fractional program, while it is not a quasi-convex function due to the convexity of both the nominator function and the denominator function.
To deal with this problem, we transform the problem $\mathrm{(Qk)}$ via the SDR to
\begin{equation*}\label{eq:KP0}
\mathrm{(Qk')}
\left\{
\begin{aligned}
\min_{\mat{W}_k\succeq\mat{0}} &\frac{\mathrm{Tr}(\mat{F}_k(\mat{w}_{-k}){\mat{W}_k})}{1+\mathrm{Tr}(\mat{G}_k(\mat{w}_{-1-k}){\mat{W}_k})} \nonumber \\
\mathrm{s.t.}~&\mathrm{Tr}(\mat{A}_k(\mat{w}_{-k}){\mat{W}_k}) = \mathrm{SINR}_k^\star \nonumber \\
&\mathrm{Tr}(\mat{E}_\ell(\mat{w}_{-k}){\mat{W}_k}) = \gamma_\ell(\mat{w}_{-k}),\forall \ell\in{\mathcal{K}\backslash \{1,k\}} \\
&\mathrm{Tr}({\mat{W}_k}) \leq 1.
\end{aligned}
\right.
\end{equation*}
where $\mat{W}_k = \mat{w}_k\mat{w}_k^H$. It is known that full power transmission is not always Pareto-optimal for the general multi-user MIMO/MISO IC (related to the number of users and transmit/receive antennas), which is different from the two-user Pareto-optimal full power transmission (Proposition 2). It leads to $\frac{\mathrm{Tr}(\mat{F}_k(\mat{w}_{-k}){\mat{W}_k})}{1+\mathrm{Tr}(\mat{G}_k(\mat{w}_{-1-k}){\mat{W}_k})} \neq \frac{\mathrm{Tr}(\mat{F}_k(\mat{w}_{-k}){\mat{W}_k})}{\mathrm{Tr}(\left(\mat{I}+\mat{G}_k(\mat{w}_{-1-k})\right){\mat{W}_k})}$. Thus, the Charnes-Cooper variable transformation used in the optimization problem $\mathrm{(P4)}$ in the two-user case is not applicable to $\mathrm{(Qk')}$ any longer. Nevertheless, we observe that both the nominator function and the denominator function of the objective function are non-negative, differentiable and affine with $\mat{W}_k$. By introducing a real scalar parameter $\mu_k \geq 0$, the fractional programming problem $\mathrm{(Qk')}$ is equivalent to a parametric programming problem
\begin{align}\label{eq:PGOF2}
\mathcal{F}(\mu_k)=&\min_{\mat{W}_k\in \mathcal{S}_{\mat{W}_k}}\Big\{{\mathrm{Tr}(\mat{F}_k(\mat{w}_{-k}){\mat{W}_k})} \nonumber \\
&~~~~~~~~-\mu_k \left({1+\mathrm{Tr}(\mat{G}_k(\mat{w}_{-1-k}){\mat{W}_k})}\right)\Big\},
\end{align}
where $\mathcal{S}_{\mat{W}_k}$ denotes the constraint set of $\mat{W}_k$ consisting of all the constraints in $\mathrm{(Qk')}$, and it is obvious that $\mathcal{S}_{\mat{W}_k}$ is a convex set. Assume the optimal solution to $\mathrm{(Qk')}$ is $\mat{W}_k^\star$.
If $\mu_k^\star = \frac{\mathrm{Tr}(\mat{F}_k(\mat{w}_{-k}){\mat{W}_k^\star})}{{1+\mathrm{Tr}(\mat{G}_k(\mat{w}_{-1-k}){\mat{W}_k^\star})}}$, it implies $\mathcal{F}(\mu_k^\star)=0$. Thus, solving $\mathrm{(Qk')}$ is equivalent to finding the root of the equation $\mathcal{F}(\mu_k)=0$.

Given $\mu_k$, (\ref{eq:PGOF2}) is a convex optimization problem w.r.t. $\mat{W}_k$, and its optimal solution $\mat{W}_k^\star(\mu_k)$ can be efficiently solved.
Therefore, $\mathcal{F}(\mu_k)=0$ can be further formulated as
\begin{align}\label{eq:PGOF4}
\mathcal{F}(\mu_k^\star)= &{\mathrm{Tr}(\mat{F}_k(\mat{w}_{-k}){\mat{W}_k^\star})} \nonumber \\ &~~~-\mu_k^\star\cdot\left({1+\mathrm{Tr}(\mat{G}_k(\mat{w}_{-1-k}){\mat{W}_k^\star})}\right)=0,
\end{align}
From \cite{NonlinearFracProg}, we know that $\mathcal{F}(\mu_k)$ is continuous, concave, strictly decreasing in $\mu_k$ and $\mathcal{F}(\mu_k)=0$ has a unique solution. Additionally, we find that $-\left({1+\mathrm{Tr}(\mat{G}_k(\mat{w}_{-1-k}){\mat{W}_k^\star})}\right)$ is a subgradient of $\mathcal{F}(\mu_k)$ for any $\mu_k$.
Thus, (\ref{eq:PGOF4}) can be solved by a generalized Newton method (also known as the Dinkelbach algorithm) described in Algorithm \ref{alg:A2}.
\begin{algorithm}
\DontPrintSemicolon
\KwIn{$\mu_k^{(0)}$ satisfying $\mathcal{F}(\mu_k^{(0)}) \leq 0$, tolerance $\epsilon$.}
\KwOut{Optimal $\mu_k^\star$ and $\mat{W}_k^\star$.}
\Begin{
$\ell=0$\;

\While{$|\mathcal{F}(\mu_k^{(\ell)})|> \epsilon$}{ \;

Given $\mu_k^{(\ell)}$, solve optimal $\mat{W}_k^\star(\mu_k^{(\ell)})$ to (\ref{eq:PGOF2});\;

$\mu_k^{(\ell+1)} = \frac{{\mathrm{Tr}(\mat{F}_k(\mat{w}_{-k}){\mat{W}_k^\star(\mu_k^{(\ell)})})}}{1+\mathrm{Tr}(\mat{G}_k(\mat{w}_{-1-k}){\mat{W}_k^\star(\mu_k^{(\ell)})})}$ \footnotemark;\;

$\ell++$.\;

}
$\mu_k^\star = \mu_k^{(\ell)}$ and $\mat{W}_k^\star = \mat{W}_k^\star(\mu_k^\star)$.\;
}
\caption{The generalized Newton method to solve (\ref{eq:PGOF2})}\label{alg:A2}
\end{algorithm}
\footnotetext{This generalized Newton iterative update is from $\mu_k^{(\ell+1)} \stackrel{\Delta}{=} \mu_k^{(\ell)} - \frac{\mathcal{F}(\mu_k^{(\ell)})}{-\left({1+\mathrm{Tr}(\mat{G}_k(\mat{w}_{-1-k}){\mat{W}_k^\star(\mu_k^{(\ell)})})}\right)} = \mu_k^{(\ell)} - \frac{{\mathrm{Tr}(\mat{F}_k(\mat{w}_{-k}){\mat{W}_k^\star(\mu_k^{(\ell)})})} -\mu_k^{(\ell)}\left({1+\mathrm{Tr}(\mat{G}_k(\mat{w}_{-1-k}){\mat{W}_k^\star(\mu_k^{(\ell)})})}\right)}{-\left({1+\mathrm{Tr}(\mat{G}_k(\mat{w}_{-1-k}){\mat{W}_k^\star(\mu_k^{(\ell)})})}\right)} =\frac{{\mathrm{Tr}(\mat{F}_k(\mat{w}_{-k}){\mat{W}_k^\star(\mu_k^{(\ell)})})}}{1+\mathrm{Tr}(\mat{G}_k(\mat{w}_{-1-k}){\mat{W}_k^\star(\mu_k^{(\ell)})})}$.}

The algorithm as a Newton procedure to determine the root of the equation $\mathcal{F}(\mu_k^\star)=0$ has superlinear convergence. By the Algorithm \ref{alg:A2}, the optimal $\mu_k^\star$ and $\mat{W}_k^\star(\mu_k^\star)$ to (\ref{eq:PGOF2}) is obtained. Equivalently, $\mat{W}_k^\star(\mu_k^\star)$ is an optimal solution to $\mathrm{(Qk')}$ \cite{NonlinearFracProg}. Then, a tight (for $K\leq 4$) or an approximate (for $K\geq5$) solution $\mat{w}_k$ to $\mathrm{(Qk)}$ can be extracted from $\mat{W}_k^\star$.

Above all, the proposed alternating optimization algorithm extended to solve $\mathrm{(Q0)}$ can be described
as Algorithm 3.
\begin{algorithm}
\DontPrintSemicolon
\KwIn{$\{R_k^\star\}_{{\mathcal{K}\backslash \{1\}}}$ where $R_k^\star=\log_2\left(1+\mathrm{SINR}_k^\star\right)$ and $\mathrm{SINR}_k^\star \in \Big(0, \frac{1}{\sigma_k^2}\lambda_1(\mat{H}_{kk}\mat{H}_{kk}^H)\Big]$.}
\KwOut{ A convergent point $(R_1^{(\ell)}, R_2^\star,...,R_K^\star)$ with $\{\mat{w}_i^{(\ell)}\}_{\mathcal{K}}$.}
\Begin{
Initialization:\;
Set a feasible $\mat{w}_{-1}^{(0)}$, $\ell=0$.

\While{some convergence criterion is not satisfied}{ $\ell++$.\;

\For{$k = 1 \to K$}{

Given $\mat{w}_{-k}^{(\ell-1)}$, obtain an optimal $\mat{W}_k$ to $\mathrm{(Qk')}$; \;

Extract a tight/approximate $\mat{w}_k$ from $\mat{W}_k$ to $\mathrm{(Qk)}$. \;

\If{$K\geq 5~\mathrm{and}~\mathrm{SINR}_1\left(\mat{w}_{k}^{(\ell)}, \mat{w}_{-k}^{(\ell-1)/(\ell)}\right) < \mathrm{SINR}_1\left(\mat{w}_{k}^{(\ell-1)}, \mat{w}_{-k}^{(\ell-1)/(\ell)}\right)$\footnotemark}
{
$\mat{w}_k^{(\ell)} = \mat{w}_k^{(\ell-1)}$; \;
}
}

Compute $R_1^{(\ell)} = \log_2\left(1 + \mathrm{SINR}_1\left(\{\mat{w}_k^{(\ell)}\}_{\mathcal{K}}\right)\right)$. \;
}
}
\caption{$K$-User Alternating Optimization Algorithm}\label{alg:A3}
\end{algorithm}
\footnotetext{This condition is to make sure that a better (at least the same) solution to $\mathrm{(Qk)}$ (only for $K\geq 5$) is always obtained in each iteration such that the objective function's non-decreasing convergence is guaranteed.}

\begin{remark}
The proposed alternating optimization algorithm can be extended to the $K$-user MIMO IC. For $K\leq 4$, it is the same as the two-user case that each optimal single-beamformer can be obtained in each iteration. For $K\geq5$, each approximate optimal single-beamformer is obtained in each iteration. Following the same line of the proof of the Algorithm \ref{alg:A1}'s convergence in Section \Rmnum{4}-C-2, the convergence of the Algorithm \ref{alg:A3} is also guaranteed.
\end{remark}

\section{Illustrations and Discussions}\label{sec:Illu}
To illustrate the achievable rate region by the proposed algorithm, we consider a two-user Gaussian MIMO IC, where $N_T=3$ and $N_R=2$. The transmit power budget is set to 1 for the two users, and noise power $\sigma_1^2 = \sigma_2^2 = 10^{-\frac{\mathrm{SNR}}{10}}$ where SNR=10dB. The channels $\mat{H}_{11}, \mat{H}_{12}, \mat{H}_{21}$ and $\mat{H}_{22}$ are

\[
\left.
\begin{array}{c}
\tiny{
\begin{pmatrix}
-0.3034 + 1.9096i&-0.3790 + 0.4201i&0.0357 + 0.7337i \\
-0.6358 - 0.8030i&-0.7881 - 0.1273i&0.7534 + 0.8348i
\end{pmatrix}
}
\end{array},
\right.
\]
\[
\left.
\begin{array}{c}
\tiny{
\begin{pmatrix}
-0.6758 + 0.1040i  &-0.5949 - 0.0344i   &0.4311 + 0.9658i \\
-2.1621 + 0.5404i  &-0.0037 + 0.6627i   &0.8611 + 1.2318i
\end{pmatrix}
}
\end{array},
\right.
\]
\[
\left.
\begin{array}{c}
\tiny{
\begin{pmatrix}
0.3999 + 0.1567i &0.3798 - 0.5619i  &-0.1005 + 0.2836i \\
-0.5494 - 0.4648i&1.1971 - 0.5297i  &-0.7271 + 0.2114i
\end{pmatrix}
}
\end{array},
\right.
\]
\[
\left.
\begin{array}{c}
\tiny{
\begin{pmatrix}
  -0.0308 - 0.1133i   &0.0433 - 0.3313i   &0.3047 - 1.2157i \\
  -1.4947 - 1.8676i   &-0.9430 + 0.5704i  &-1.3328 + 1.4638i
\end{pmatrix}
}
\end{array}.
\right.
\]

\subsection{Convergence and Performance of Initialization by Eq.(\ref{eq:InitializationW})}

To study the convergence rate of the alternating optimization
algorithm and evaluate the effectiveness of the initialization in (\ref{eq:InitializationW}), we respectively use (\ref{eq:InitializationW}) and 200 randomly generated feasible normalized vectors as initial $\mat{w}_2$. Then, we run  Algorithm \ref{alg:A1} until $|R_1^{(\ell)}-R_1^{(\ell-1)}|\leq 10^{-3}$.
\begin{figure}%
\centering
\subfigure[][]{%
\includegraphics[scale=0.45]{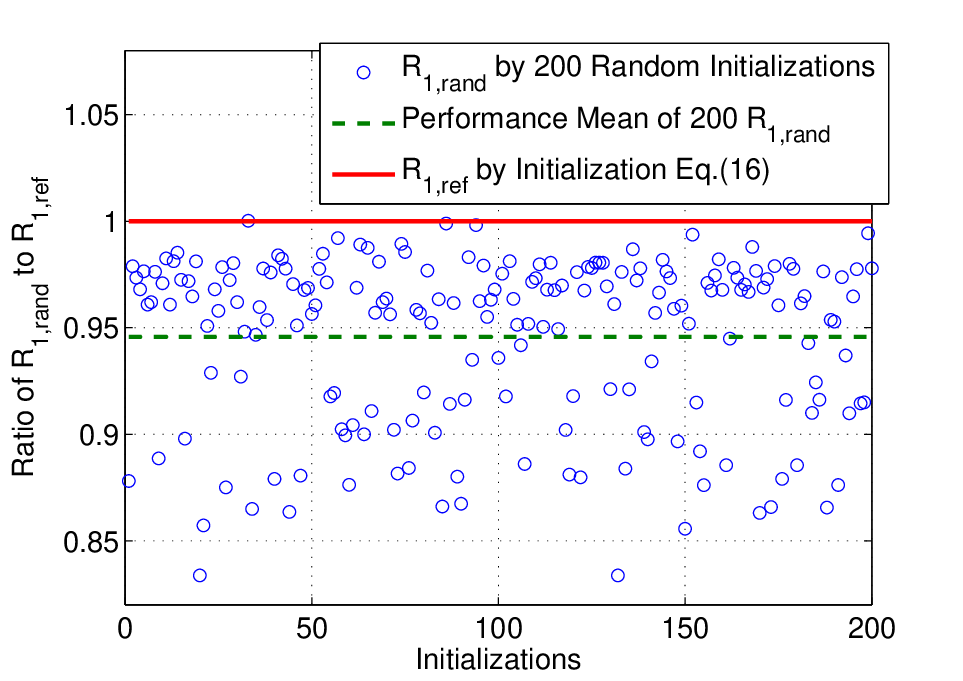}\label{fig:no2-a}
}%
\hspace{11pt}%
\subfigure[][]{%
\includegraphics[scale=0.45]{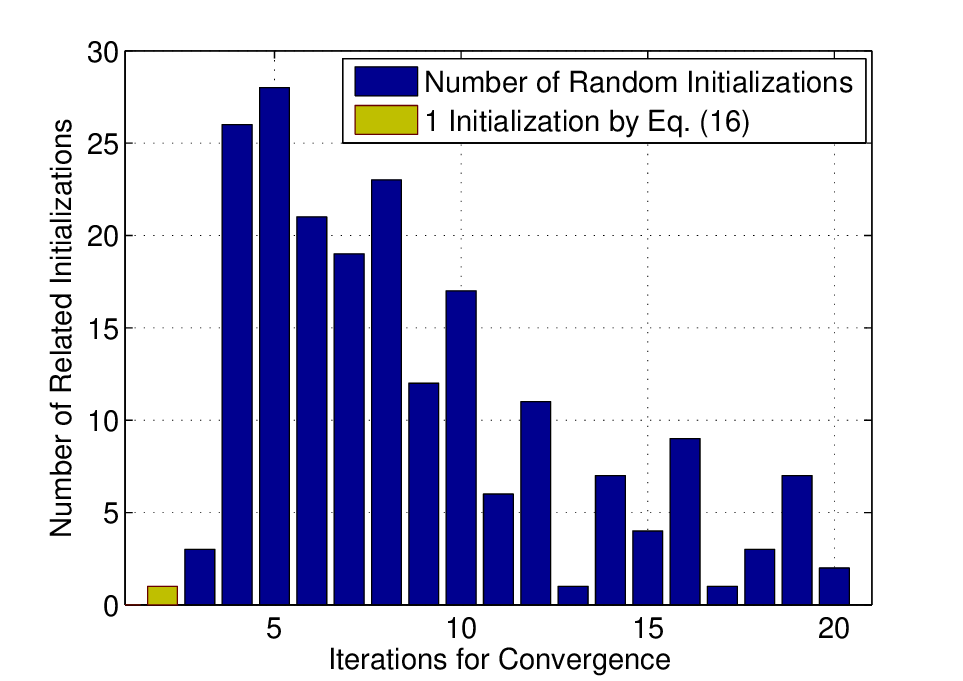}\label{fig:no2-b}
}
\caption[A set of two subfigures.]{Performance of Algorithm \ref{alg:A1} by different initializations at $R_2^\star = 5.6398$:
\subref{fig:no2-a} Performance comparison of 200 random initializations and 1 initialization by Eq. (\ref{eq:InitializationW});
\subref{fig:no2-b} Convergence behavior of Algorithm \ref{alg:A1}.}\label{fig:no2}
\end{figure}

\begin{figure}%
\centering
\subfigure[][]{%
\includegraphics[scale=0.45]{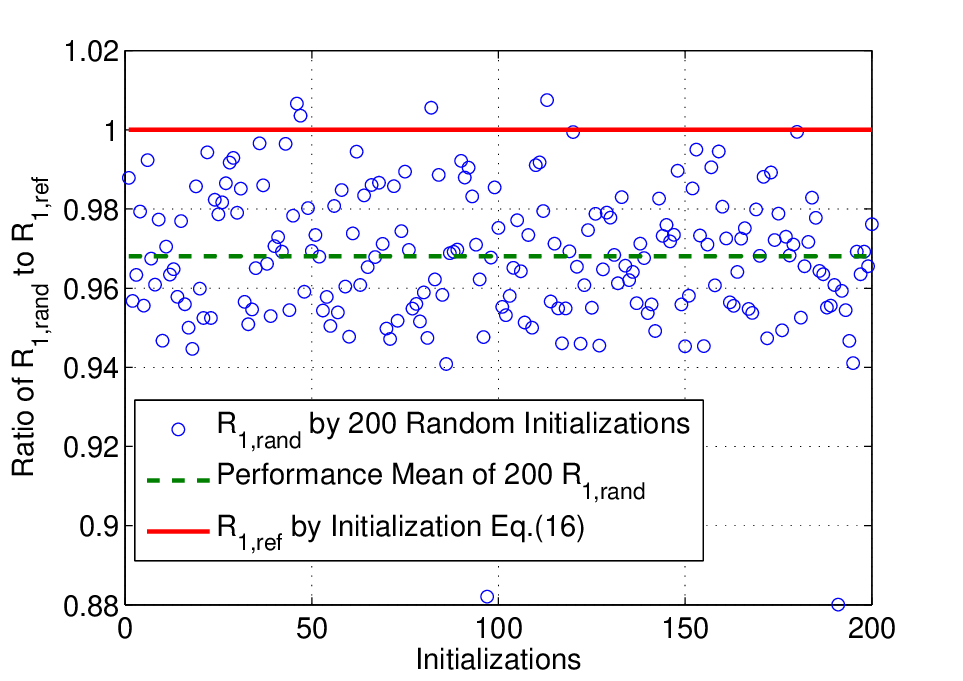}\label{fig:no3-a}
}%
\hspace{11pt}%
\subfigure[][]{%
\includegraphics[scale=0.45]{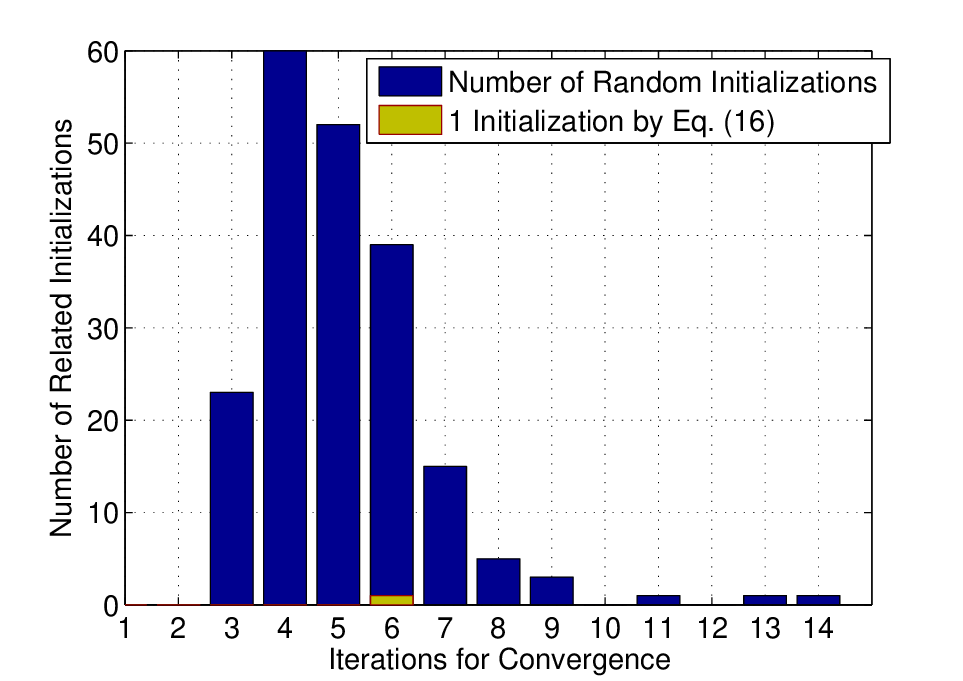}\label{fig:no3-b}
}
\caption[A set of two subfigures.]{Performance of Algorithm \ref{alg:A1} by different initializations at $R_2^\star = 6.2898$:
\subref{fig:no3-a} Performance comparison of 200 random initializations and 1 initialization by Eq. (\ref{eq:InitializationW}) ;
\subref{fig:no3-b} Convergence behavior of Algorithm \ref{alg:A1}.}\label{fig:no3}
\end{figure}

Fig. \ref{fig:no2} and Fig. \ref{fig:no3} show the performance of the alternating optimization algorithm by different initializations. More precisely, Fig. \ref{fig:no2} is for a given $R_2^\star=\underline{R}_2 + \frac{2}{19}\cdot(\overline{R}_2-\underline{R}_2) = 5.6398$ (close to the ending point $(\overline{R}_1, \underline{R}_2)$). Fig. \ref{fig:no2-a} shows that the achieved $R_1$ with initialization by (\ref{eq:InitializationW}) nearly always outperforms that with 200 random initializations. Fig. \ref{fig:no2-b} implies the convergence rate of the algorithm. Fig. \ref{fig:no3} is for a given $R_2^\star=\underline{R}_2 + \frac{11}{19}\cdot\left(\overline{R}_2-\underline{R}_2\right) = 6.2898$ (corresponding to the middle of strict Pareto boundary). Similarly, Fig. \ref{fig:no3-a} and Fig. \ref{fig:no3-b} also show that initialization by (\ref{eq:InitializationW}) has a promising performance and fast convergence behavior. Therefore, simulation results imply that (\ref{eq:InitializationW}) is a good choice for initialization.

\subsection{Performance Comparison}

Fig. \ref{fig:no4} illustrates the achievable boundary by the proposed algorithm, i.e., Algorithm \ref{alg:A1}, compared with the existing methods. {The term "Proposed\_{1+9}" denotes the best result obtained by running Algorithm \ref{alg:A1} with 1 initialization in (\ref{eq:InitializationW}) and 9 random initializations, while "Proposed\_1" represents the result only with 1 initialization in (\ref{eq:InitializationW}). The SINR targets are $\mathrm{SINR}_2^\star=2^{R_2^\star-1}$s where $R_2^\star=\underline{R}_2 + \frac{n}{50}\cdot{(\overline{R}_2 - \underline{R}_2)},~ n=1,2,...,49$.} Similarly, the term "WMMSE\_10" denotes the weighted sum rate maximization algorithm \cite{MIMOIBCLuo} with 10 random initializations, where weighted sum rate is expressed as $w\cdot R_1 + (1-{w})\cdot R_2$. with the weights $w$s in $[0.05:0.05:0.95]$. "WMMSE\_1" denotes the result only with 1 initialization by (\ref{eq:InitializationW}), i.e, setting the initial beamformers as $\mat{w}_1^{(0)} = \overrightarrow{w\cdot \mat{w}_1^{Ego}+(1-w)\cdot \mat{w}_1^{Alt}}$ and $\mat{w}_2^{(0)} = \overrightarrow{w\cdot \mat{w}_2^{Ego}+(1-w)\cdot \mat{w}_2^{Alt}}$. The curve denoted by "RandBeam\_10mil" means the outermost boundary of the rate region achieved by 10 million random normalized transmit beamformer pairs (each receiver is the MMSE filter). Theoretically, if \emph{exhaustive} random beamforming pairs are chosen, its bound is exactly the Pareto boundary. However, there exist infinite random beamforming pairs so that we choose as many as 100 million random beamforming pairs in the simulations to serve as an \emph{approximate Pareto boundary}. The term "SimpleReceiver" is the outermost boundary of the region achieved by \cite{EfficientComputationMISO} where each receiver is fixed as the largest left singular vector of the corresponding direct channel matrix. The curve denoted by "Eq.(\ref{eq:BlancingAltEgo})" is the outermost boundary of the achieved region by (\ref{eq:BlancingAltEgo}) with complex-valued parameters by 3-dimensional grid search. The boundary of "Eq.(\ref{eq:InitializationW})" illustrates (\ref{eq:InitializationW}) with $\zeta = \frac{n-1}{N},~n=1,...,N+1$ where $N=100$. "ZF points" denotes two outmost points of the ZF points by Eq. (\ref{eq:RateZF}).

If we consider the curve by "Proposed\_{1+9}" as a nearly optimal boundary, we find that "Proposed\_1" has a promising/robust performance only with one initialization by (\ref{eq:InitializationW}). Also observe that the proposed algorithm "Proposed\_{1+9}" and "Proposed\_1", and "WMMSE\_10" yield a similar performance at convex parts of boundary and outperform the others under the same accuracy for convergence.
However, since the weighted sum maximization method cannot achieve the non-convex boundary, and even the achieved points on the convex boundary are still unevenly distributed. This is why the "WMMSE\_10" as a weighted sum maximization method does not achieve the part between "P1" and "P2" in Fig. \ref{fig:no4}.

To further evaluate the performance of "WMMSE\_10" and "Proposed\_{1+9}" on illustrating the Pareto boundary, another simulation is done and shown in Fig. \ref{fig:no5}. Even with fine weights ${w}$s in $[0.05:0.005:0.95]$, we find that there exists a large jump between the points "P3" and "P4" by "WMMSE\_10" so that the rate region cannot be illustrated effectively.
\begin{figure}
\begin{center}
\includegraphics[scale=0.48]{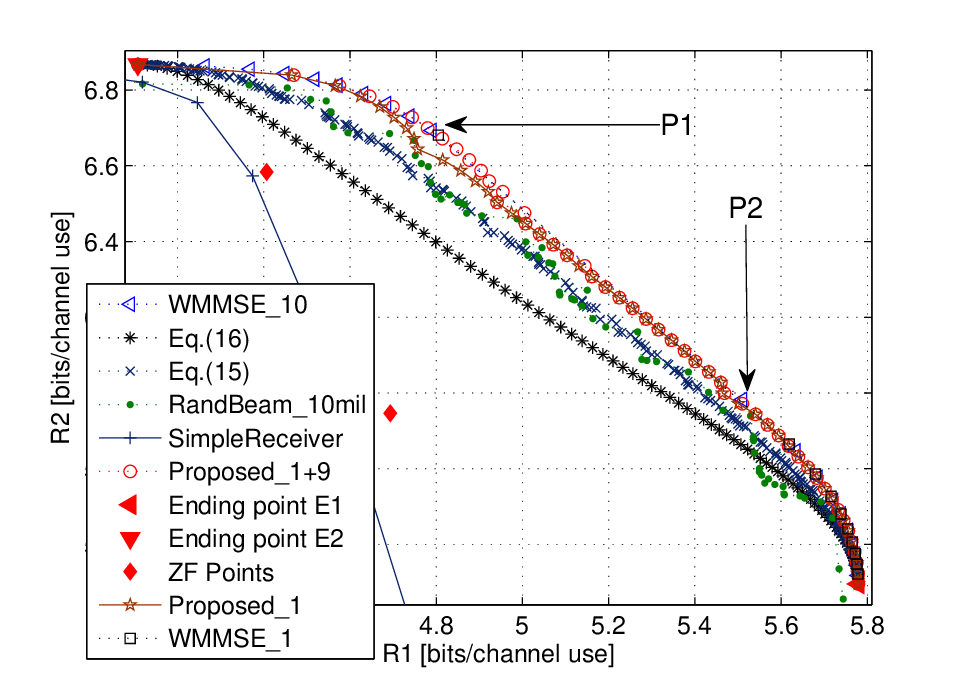}
\caption{Achievable boundary with SNR=10dB and $N_T=3, N_R=2$ for a random Gaussian channel data}\label{fig:no4}
\end{center}
\end{figure}
\begin{figure}
\begin{center}
\includegraphics[scale=0.48]{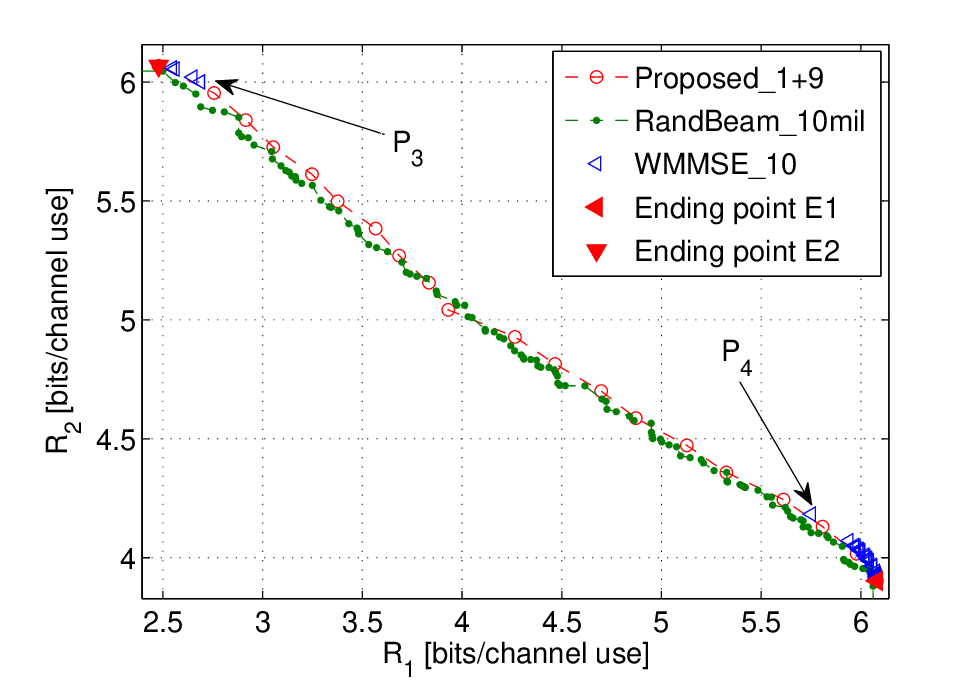}
\caption{Achievable boundary with SNR=10dB and $N_T=3, N_R=2$ for another random Gaussian channel data}\label{fig:no5}
\end{center}
\end{figure}

In fact, only through the results by "WMMSE\_10", we do not know what the part between "P1" and "P2" in Fig. \ref{fig:no4} and the part between "P1" and "P2" in Fig. \ref{fig:no5} look like. If they are concave parts, the degree of concavity is still unknown. For our proposed algorithm, although its curve denoted by "Proposed\_{1+9}" is not guaranteed to exactly be the strict Pareto boundary, it has a even better performance than the \emph{approximate Pareto boundary} (i.e., "Rand\_10mil" curves in Fig. \ref{fig:no4} and Fig. \ref{fig:no5}). Thus, it is able to serve as a more reasonable/complete inner bound of the whole strict Pareto boundary, especially the non-convex part. This is a main advantage of the proposed algorithm to the existing algorithms.

\subsection{Illustration of the multi-user case}
In order to evaluate the performance of Algorithm \ref{alg:A3} for the multi-user MIMO IC, a three-user MIMO IC example is simulated.
Fig. \ref{fig:converg} shows the fast convergence behavior when $R_1$ is maximized with $(R_3, R_2) = (0.2700, 2.3720)$ and the convergence threshold $10^{-4}$.
In Fig. \ref{fig:3Dwith}, a three dimensional rate region is illustrated after computing $R_1$ for 65 samples of $(R_3, R_2)$ and interpolation.
\begin{figure}
\begin{center}
\includegraphics[scale=0.53]{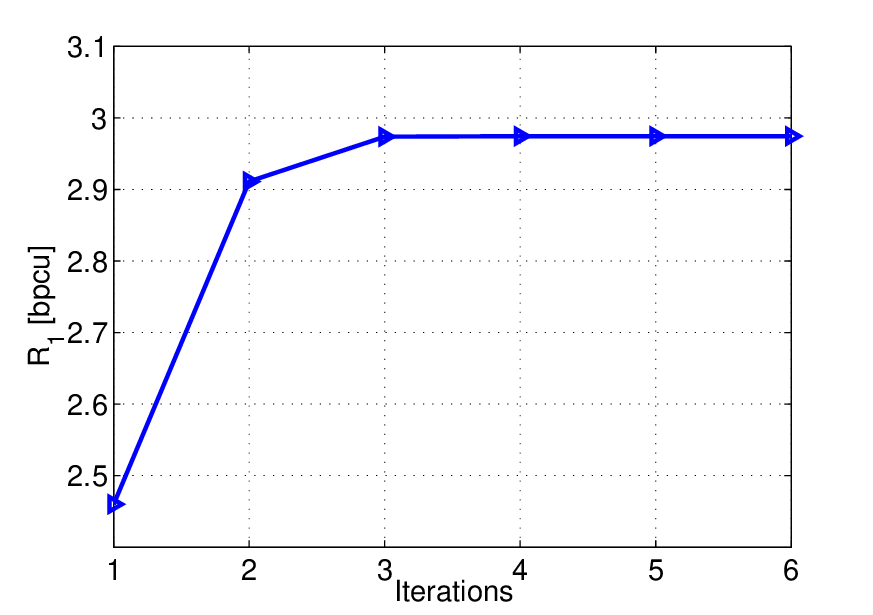}
\caption{Convergence behavior of Algorithm \ref{alg:A3} for three-user MIMO IC with SNR=0dB and $N_T=3, N_R=2$.}\label{fig:converg}
\end{center}
\end{figure}

\begin{figure}
\begin{center}
\includegraphics[scale=0.48]{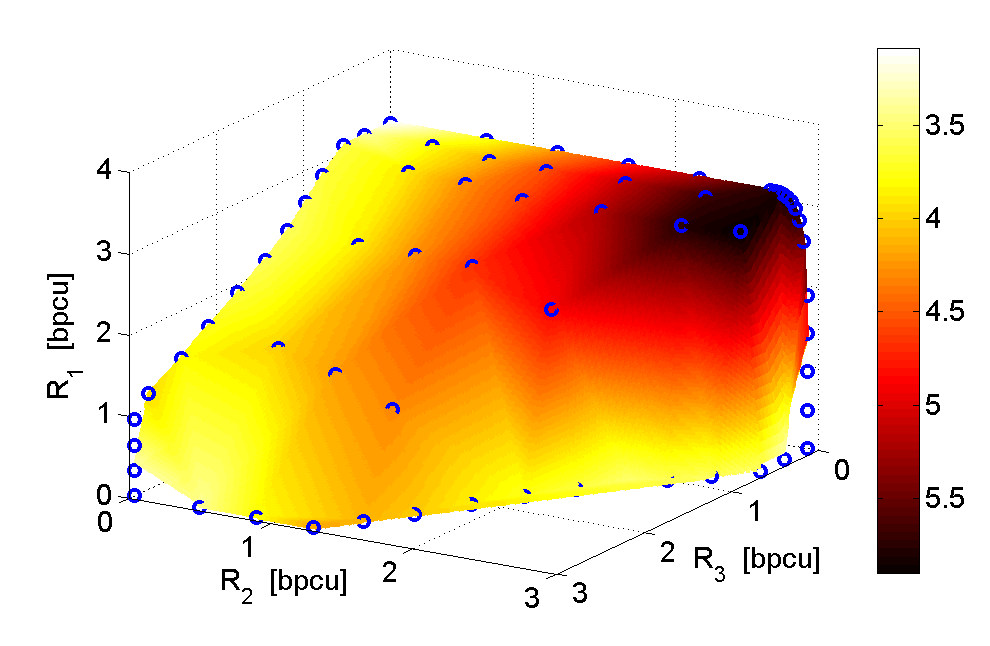}
\caption{Achievable rate region for three-user MIMO IC with SNR=0dB and $N_T=3, N_R=2$. The color bar shows the sum rate.}\label{fig:3Dwith}
\end{center}
\end{figure}

\section{Conclusions and Future Work}\label{sec:Conclu}
In this paper, the non-strict Pareto boundary and two ending points of strict Pareto boundary of the two-user scenario are computed exactly. To find the strict Pareto boundary for the two/multi-user single-beam MIMO IC, we formulate an optimization problem to maximize the rate of one user while the rates of the other users are fixed such that a bound point is reached. This problem is different from the traditional problems (e.g., maximize (weighted) sum-rate and max-min rate) and there exists little work on this type of problem in the MIMO IC. We propose an alternating optimization algorithm to solve this non-convex problem. The convergence of the proposed algorithm is guaranteed and a high quality suboptimal solution is obtained. Furthermore, the proposed computation algorithm has two main advantages: 1) This algorithm can be applied to those optimization problems with rate constraints (e.g. rate requirements of primary links in overlay cognitive radio environments or for private messages); 2) For the strict Pareto boundary, this algorithm is able to compute a high quality suboptimal operating point satisfying arbitrary rate requirements. A series of well-distributed rate requirements can lead to a sequence of \emph{well-distributed} operating points, which can serve as a reasonable and complete inner bound of the strict Pareto boundary.

The proposed algorithm requires that each transmitter knows perfect global CSI, which is challenging
for distributed systems. However, this work provides a benchmark for the algorithms that only imperfect/partial CSI is available at transmitters. Future work should focus on developing robust cooperative algorithms for the multi-user MIMO IC.

\appendices
\section{Proof of Proposition 1}\label{sec:proofProp1}
\begin{IEEEproof}
Given beamformers $\mat{w_1}$ and $\mat{w_2}$, according to the matrix inversion lemma \cite{MatrixAnalysisand}, (\ref{eq:MMSESINR}) can be rewritten as:
\begin{align}\label{eq:MMSE SINR2}
&~~\mathrm{SINR}_i(\mat{w}_1,\mat{w}_{2}) \nonumber \\
&= \left({\mat{H}_{ii}\mat{w}_i}\right)^H \left(\frac{1}{{\sigma}_i^2}\mat{I} - \frac{\mat{H}_{ki}{\mat{w}_k}(\mat{H}_{ki}{\mat{w}_k})^H}{{\sigma}_i^2({\sigma}_i^2+\lVert \mat{H}_{ki}{\mat{w}_k} \lVert^2)}\right){\mat{H}_{ii}\mat{w}_i} \nonumber \\
&= \frac{\lVert \mat{H}_{ii}{\mat{w}_i} \lVert^2}{\sigma_i^2} - \frac{\left|(\mat{H}_{ii}{\mat{w}_i})^H\mat{H}_{ki}{\mat{w}_k}\right|^2}{{{\sigma}_i^2({\sigma}_i^2+\lVert \mat{H}_{ki}{\mat{w}_k} \lVert^2)}} \nonumber \\
&= \frac{\lVert \mat{H}_{ii}{\mat{w}_i} \lVert^2}{\sigma_i^2} \cdot \left( 1 - \frac{\left|\overrightarrow{\mat{H}_{ii}{\mat{w}_i}}^H\cdot \overrightarrow{\mat{H}_{ki}{\mat{w}_k}}\right|^2\cdot\lVert \mat{H}_{ki}{\mat{w}_k} \lVert^2}{{{\sigma}_i^2+\lVert \mat{H}_{ki}{\mat{w}_k} \lVert^2}} \right) \nonumber \\
&= \left(1-\left|\overrightarrow{\mat{H}_{ii}{\mat{w}_i}}^H\cdot \overrightarrow{\mat{H}_{ki}{\mat{w}_k}}\right|^2\right)\cdot\frac{{\lVert \mat{H}_{ii}\mat{w}_i \lVert}^2}{\sigma_i^2} \nonumber \\
&~~~~~~~~+ \left|\overrightarrow{\mat{H}_{ii}{\mat{w}_i}}^H\cdot \overrightarrow{\mat{H}_{ki}{\mat{w}_k}}\right|^2\cdot \frac{{\lVert \mat{H}_{ii}\mat{w}_i \lVert}^2}{\sigma_i^2 + {\lVert \mat{H}_{ki}\mat{w}_k \lVert}^2}.
\end{align}

For two complex vectors $\mat{a}$ and $\mat{b}$, the cosine of the complex-valued angle between $\mat{a}$ and $\mat{b}$ is defined as \cite{CA}
$\cos(\theta_C)=\frac{\mat{a}^H\mat{b}}{\lVert \mat{a} \lVert \cdot \lVert \mat{b} \lVert}$
where $\cos(\theta_C)=\mu e^{j\psi}$ with $\mu = |\cos(\theta_C)|\leq1$ and $-\pi\leq\theta_C\leq\pi$ is called pseudo angle between $\mat{a}$ and $\mat{b}$.

The Hermitian angle between $\mat{a}$ and $\mat{b}$ is defined as
\begin{align}\label{eq:HermitianA4}
\cos(\theta_H)=|\cos(\theta_C)|=\frac{|\mat{a}^H\mat{b}|}{\lVert \mat{a} \lVert \cdot \lVert \mat{b} \lVert},~~~0\leq\theta_H\leq\pi/2.\nonumber
\end{align}
It implies $\big|\overrightarrow{\mat{H}_{ii}{\mat{w}_i}}^H\cdot \overrightarrow{\mat{H}_{ki}{\mat{w}_k}}\big|^2 = \cos^2(\theta_{H,i})$ because of $\big\lVert\overrightarrow{\mat{H}_{ii}{\mat{w}_i}}\big\lVert^2 = \big\lVert\overrightarrow{\mat{H}_{ki}{\mat{w}_k}}\big\lVert^2 = 1$. Thus, (\ref{eq:MMSE SINR2}) becomes
\begin{align*}
&\mathrm{SINR}_i(\mat{w}_1,\mat{w}_{2}) \nonumber \\
&= \sin^2(\theta_{H,i})\cdot\frac{{\lVert \mat{H}_{ii}\mat{w}_i \lVert}^2}{\sigma_i^2} + \cos^2(\theta_{H,i})\cdot \frac{{\lVert \mat{H}_{ii}\mat{w}_i \lVert}^2}{\sigma_i^2 + {\lVert \mat{H}_{ki}\mat{w}_k \lVert}^2},\nonumber
\end{align*}
where $\theta_{H,i}\in[0,\pi/2]$ denotes the Hermitian angle between the desired signal direction ${\overrightarrow{\mat{H}_{ii}\mat{w}_i}}$ and the interference direction ${\overrightarrow{\mat{H}_{ki}\mat{w}_k}}$ at $\mathrm{RX}_i$. Obviously, when ${\overrightarrow{\mat{H}_{ii}\mat{w}_i}}\parallel{\overrightarrow{\mat{H}_{ki}\mat{w}_k}}$ (or ${\overrightarrow{\mat{H}_{ii}\mat{w}_i}}\perp{\overrightarrow{\mat{H}_{ki}\mat{w}_k}}$), we have $\theta_{H,i}=0$ (or $\theta_{H,i}={\pi}/{2}$).
\end{IEEEproof}

\section{Proof of Proposition 2}\label{sec:proofProp2}
\begin{IEEEproof} {Our idea is to show that it is impossible for a strict Pareto-optimal point achieved by the transmit beamformers with less than full power. The proof works by contradiction.}

{Assume that a strict Pareto-optimal point $\big(R_1(\mat{w}_1, \mat{w}_2), R_2(\mat{w}_1, \mat{w}_2)\big)$ is achieved by $(\mat{w}_1, \mat{w}_2)$ where $\lVert \mat{w}_1 \lVert^2<1$ and $\lVert {\mat{w}}_2 \lVert^2\leq1$. We consider whether there exists an \emph{outer}{\footnote{A point $\mat{p}'\in\mathbb{R}_n^+$ is called an outer point than $\mat{p}\in\mathbb{R}_n^+$, if $\mat{p}'$ dominates $\mat{p}$, i.e., $\mat{p}' \geq \mat{p}$ and $\mat{p}' \neq \mat{p}$ where the inequality is component-wise. The improvement from $\mat{p}$ to $\mat{p}'$ is called Pareto improvement.}} point $\big(R_1(\hat{\mat{w}}_1, \mat{w}_2), R_2(\hat{\mat{w}}_1, \mat{w}_2)\big)$ achieved by $(\hat{\mat{w}}_1, \mat{w}_2)$ where $\lVert \mat{w}_1 \lVert^2 < \lVert \hat{\mat{w}}_1 \lVert^2\leq1$ and $\lVert {\mat{w}}_2 \lVert^2\leq1$. If it exists, e.g., ${R}_1(\hat{\mat{w}}_1, \mat{w}_2)>{R}_1({\mat{w}}_1, \mat{w}_2)$ and ${R}_2(\hat{\mat{w}}_1, \mat{w}_2)=R_2(\mat{w}_1, \mat{w}_2)$, we can improve the ${R}_1({\mat{w}}_1, \mat{w}_2)$ only by consuming more transmit power while keeping ${R}_2({\mat{w}}_1, \mat{w}_2)$ unchanged. Thus, the existence of an \emph{outer} operating point contradicts the assumption that $\big(R_1(\mat{w}_1, \mat{w}_2), R_2(\mat{w}_1, \mat{w}_2)\big)$ is a strict Pareto-optimal point. }

Define $\hat{\mat{w}}_1={\mat{w}}_1+\mat{\delta}_{p}$. To guarantee the Pareto improvement, we need to show the existence of a
nonzero perturbation vector $\mat{\delta}_{p}$ satisfying:
\begin{subequations}\label{eq:ConFullPower}
\begin{align}
&(\mat{w}_1+\mat{\delta}_{p})^H\mat{A}_1(\mat{w}_2)(\mat{w}_1+\mat{\delta}_{p}) > \mat{w}_1^H\mat{A}_1(\mat{w}_2)\mat{w}_1\\
&\mat{w}_2^H\mat{A}_2(\mat{w}_1+\mat{\delta}_{p})\mat{w}_2 = \mat{w}_2^H\mat{A}_2(\mat{w}_1)\mat{w}_2\\
&\lVert \mat{w}_1+\mat{\delta}_{p}\lVert ^2>\lVert \mat{w}_1\lVert ^2\\
&\lVert \mat{w}_1+\mat{\delta}_{p}\lVert ^2\leq1.
\end{align}
\end{subequations}

An arbitrary nonzero $\mat{\delta}_{p}$ can be expressed as
\begin{align}\label{eq:deltaP}
\mat{\delta}_{p}=\lVert \mat{\delta}_{p} \lVert \cdot e^{j\phi_{\delta}} \cdot \overrightarrow{\mat{\delta}_{p}}.
\end{align}
It means that we should find $\lVert \mat{\delta}_{p} \lVert$, $\phi_{\delta}$ and $\overrightarrow{\mat{\delta}_{p}}$ to satisfy all the conditions in (\ref{eq:ConFullPower}) simultaneously. The proof of the existence of $\phi_{\delta}$ is similar to that for the two-user MISO IC in \cite{VS}. However, it is more difficult to find a $\overrightarrow{\mat{\delta}_{p}}$ for the MIMO IC because the cross-talk channel matrix (rather than a vector in the MISO IC case) does not always have a null space for $\overrightarrow{\mat{\delta}_{p}}$. We give the proof in detail as follows.

{\textbf{1. Existence of $\overrightarrow{\mat{\delta}_{p}}$}}

By the matrix inverse lemma \cite{MatrixAnalysisand}, the condition in (\ref{eq:ConFullPower}b) is equivalent to
\begin{align}\label{eq:ThetaF}
\frac{|(\mat{w}_1 + \mat{\delta}_{p})^H\mat{H}_{12}^H\mat{H}_{22}\mat{w}_2|^2}{\sigma_2^2+{\lVert \mat{H}_{12}(\mat{w}_1+ \mat{\delta}_{p})\lVert}^2} = \frac{|\mat{w}_1^H\mat{H}_{12}^H\mat{H}_{22}\mat{w}_2|^2}{\sigma_2^2+{\lVert \mat{H}_{12}\mat{w}_1\lVert}^2}.
\end{align}
It is difficult to solve $\mat{\delta}_{p}$ directly. In fact, we only need to prove the existence of $\mat{\delta}_{p}$ satisfying (\ref{eq:ThetaF}).

\textbf{Case~1.} {$N_R<N_T$ or $\mathrm{Rank}(\mat{H}_{12})<N_T\leq N_R$:

We always have $\mat{H}_{12}\mat{\delta}_p = \mat{0}$ if
\begin{align}\label{eq:deltaPerturb}
\overrightarrow{\mat{\delta}_{p}} = \sum_{i=1}^{\mathrm{Rank}(\Pi_{\mat{H}_{12}^T}^{\perp})}{a_i\mat{u}_i(\Pi_{\mat{H}_{12}^T}^{\perp})},
\end{align}
where $a_i, i=1,...,\mathrm{Rank}(\Pi_{\mat{H}_{12}^T}^{\perp})$ are complex-valued numbers and $\sum_{i=1}^{\mathrm{Rank}(\Pi_{\mat{H}_{12}^T}^{\perp})}{|a_i|^2}=1$.} Then, (\ref{eq:ThetaF}) always holds because any $\mat{\delta}_p$ in the null space of the cross-talk channel $\mat{H}_{12}$ does not cause extra interference to $\mathrm{RX}_2$.

\textbf{Case~2.} {$\mathrm{Rank}(\mat{H}_{12})=N_T\leq N_R$}:

It is impossible to nullify the perturbation directly as Case 1. Define $\mat{v}_1 \stackrel{\Delta}{=} \mat{H}_{12}\mat{w}_1$, $\mat{v}_{\delta} \stackrel{\Delta}{=} \mat{H}_{12}\mat{\delta}_{p}$ and $\mat{v}_2 \stackrel{\Delta}{=} \mat{H}_{22}\mat{w}_2$. Then (\ref{eq:ThetaF}) becomes
\begin{align}\label{eq:ThetaFV}
\frac{|(\mat{v}_1 + \mat{v}_{\delta})^H\mat{v}_2|^2}{\sigma_2^2+{\lVert \mat{v}_1 + \mat{v}_{\delta} \lVert}^2} = \frac{|\mat{v}_1^H\mat{v}_2|^2}{\sigma_2^2+{\lVert \mat{v}_1 \lVert}^2}.
\end{align}

Assume that $\mat{v}_{\delta}$ is a combination of two orthogonal vectors
{
\begin{align}\label{eq:Vdelta}
\mat{v}_{\delta} \stackrel{\Delta}{=} \lVert \mat{v}_{\delta}\lVert \cdot \big(\sqrt{\eta} \cdot\overrightarrow{\Pi_{\mat{v}_2}^{\perp}{\mat{v}_1}} + \sqrt{1-\eta}\cdot\overrightarrow{\Pi_{\mat{v}_2}{\mat{v}_1}}\big),
\end{align}
Note that $\overrightarrow{\Pi_{\mat{v}_2}{\mat{v}_1}} = \overrightarrow{\mat{v}_2} \cdot e^{-j\phi_{1}}$ where $\phi_{1}=\arg({\mat{v}_1^H\mat{v}_2})$.} Now, it {remains} to find whether there is a $\mat{v}_{\delta}$ in the plane spanned by $\overrightarrow{\Pi_{\mat{v}_2}^{\perp}{\mat{v}_1}}$ and $\overrightarrow{\Pi_{\mat{v}_2}{\mat{v}_1}}$ satisfying (\ref{eq:ThetaFV}).

Substituting (\ref{eq:Vdelta}) into (\ref{eq:ThetaFV}) yields
\begin{align}\label{eq:FVcombination}
&{\frac{\big |\mat{v}_1^H\mat{v}_2 + {\lVert \mat{v}_{\delta}\lVert} \cdot \sqrt{1-\eta}\cdot e^{j\phi_{1}}\cdot \lVert \mat{v}_2 \lVert\big|^2}{\sigma_2^2+{\left\lVert \mat{v}_1 + {\lVert \mat{v}_{\delta}\lVert} \big(\sqrt{\eta}\cdot\overrightarrow{\Pi_{\mat{v}_2}^{\perp}{\mat{v}_1}} + \sqrt{1-\eta}\cdot e^{-j\phi_{1}}\cdot{\overrightarrow{\mat{v}_2}}\big) \right\lVert}^2}} \nonumber \\
&~~= {\frac{|\mat{v}_1^H\mat{v}_2|^2}{\sigma_2^2+{\lVert \mat{v}_1 \lVert}^2}}.
\end{align}
Define the right-hand side and the left-hand side of (\ref{eq:FVcombination}) as $R_{side}$ and $L_{side}(\eta)$, respectively. It is still hard to get a closed-form solution of $\eta$ by solving (\ref{eq:FVcombination}) directly. Observe that the denominator of $L_{side}(\eta)$ is always positive for $\eta\in[0,1]$, and $L_{side}(\eta)$ as a function of $\eta$ is continuous over the interval $[0,1]$. Therefore, if $(R_{side}-L_{side}(1))(R_{side}-L_{side}(0))\leq0$, there must exist a ${\mat{v}_{\delta}(\eta)}$ with at least a certain $\eta\in[0,1]$ satisfying (\ref{eq:FVcombination}).

When $\eta=1$, the term $L_{side}(\eta)$ becomes
\begin{equation*}\label{eq:FVcombination1}
\begin{aligned}
L_{side}(1) &= \frac{|\mat{v}_1^H\mat{v}_2|^2}{\sigma_2^2+{\lVert \mat{v}_1 + {\lVert \mat{v}_{\delta}\lVert} \cdot\overrightarrow{\Pi_{\mat{v}_2}^{\perp}{\mat{v}_1}}\lVert}^2} \nonumber \\
&= \frac{|\mat{v}_1^H\mat{v}_2|^2}{\sigma_2^2+{\lVert \mat{v}_1 \lVert}^2 + {\lVert \mat{v}_{\delta}\lVert^2} + 2{\lVert \mat{v}_{\delta}\lVert} \cdot \lVert \Pi_{\mat{v}_2}^{\perp}{\mat{v}_1}\lVert}.\nonumber
\end{aligned}
\end{equation*}
Observe that ${\lVert \mat{v}_{\delta}\lVert^2} + 2{\lVert \mat{v}_{\delta}\lVert} \cdot \lVert \Pi_{\mat{v}_2}^{\perp}{\mat{v}_1}\lVert >0$. Thus, we have $L_{side}(1)<R_{side}$.

When $\eta=0$, the term $L_{side}(\eta)$ becomes
\begin{equation*}\label{eq:FVcombination0}
\begin{aligned}
L_{side}(0) &= \frac{\big|\mat{v}_1^H\mat{v}_2 + {\lVert \mat{v}_{\delta}\lVert}\cdot e^{j\phi_{1}} \cdot\lVert \mat{v}_2 \lVert\big|^2}{\sigma_2^2+{\lVert \mat{v}_1 + {\lVert \mat{v}_{\delta}\lVert} \cdot e^{-j\phi_{1}} \cdot \overrightarrow{\mat{v}_2}\lVert}^2} \nonumber \\
&= \frac{|\mat{v}_1^H\mat{v}_2|^2 + \lVert\mat{v}_2 \lVert^2 \cdot({\lVert \mat{v}_{\delta}\lVert^2} + 2 {\lVert \mat{v}_{\delta}\lVert}\cdot |\mat{v}_1^H\overrightarrow{\mat{v}_2}|)}{\sigma_2^2+\lVert \mat{v}_1\lVert^2 + ({\lVert \mat{v}_{\delta}\lVert^2} + 2{\lVert \mat{v}_{\delta}\lVert}\cdot|\mat{v}_1^H \overrightarrow{\mat{v}_2}|)}, \nonumber
\end{aligned}
\end{equation*}
where ${{\lVert \mat{v}_{\delta}\lVert^2}} + {2}\cdot {\lVert \mat{v}_{\delta}\lVert} \cdot |\mat{v}_1^H \overrightarrow{\mat{v}_2}|>0$. If $L_{side}(0)> R_{side}$, we need $\lVert \mat{v}_2 \lVert^2 > R_{side}$. Furthermore, $R_{side}$ is bounded by
\begin{align}\label{eq:Leftside}
R_{side} = {\frac{|\mat{v}_1^H\mat{v}_2|^2}{\sigma_2^2+{\lVert \mat{v}_1 \lVert}^2}} \leq {\frac{\lVert \mat{v}_1\lVert^2\cdot \lVert \mat{v}_2\lVert^2}{\sigma_2^2+{\lVert \mat{v}_1 \lVert}^2}}
= \frac{\lVert \mat{v}_2\lVert^2}{\frac{\sigma_2^2}{\lVert \mat{v}_1\lVert^2}+1} < {\lVert \mat{v}_2\lVert}^2. \nonumber
\end{align}
Thus, we have $L_{side}(0)> R_{side}$.

Due to $L_{side}(1)< R_{side} < L_{side}(0)$, there exists at least one $\eta_0\in(0,1)$ satisfying $L_{side}(\eta_0)= R_{side}$. In this case, {$\mat{H}_{12}$ has an inverse/Moore-Penrose pseudo-inverse matrix $\mat{H}_{12}^{\dagger}$.}
Then, we have
\begin{equation}\label{eq:fullrankexpress}
\begin{aligned}
\mat{\delta}_p &= {\lVert \mat{v}_{\delta}\lVert \cdot\mat{H}_{12}^{\dagger}\big(\sqrt{\eta_0}\cdot\overrightarrow{\Pi_{\mat{v}_2}^{\perp}{\mat{v}_1}} + \sqrt{1-\eta_0} \cdot\overrightarrow{\Pi_{\mat{v}_2}{\mat{v}_1}}\big)},\\
&= \lVert\mat{\delta}_{p}\lVert\cdot e^{j\phi_{\delta}}\cdot \overrightarrow{\mat{\delta}_p},
\end{aligned}
\end{equation}
{where $\lVert \mat{v}_{\delta}\lVert = \frac{\lVert\mat{\delta}_{p}\lVert}{\big\lVert \mat{H}_{12}^{\dagger}\big(\sqrt{\eta_0}\cdot\overrightarrow{\Pi_{\mat{v}_2}^{\perp}{\mat{v}_1}} + \sqrt{1-\eta_0} \cdot\overrightarrow{\Pi_{\mat{v}_2}{\mat{v}_1}}\big)\big \lVert}$ depends on but has no requirement for $\lVert\mat{\delta}_{p}\lVert$. Therefore, $\mat{v}_{\delta}$ with any $\lVert\mat{\delta}_{p}\lVert$ and $\overrightarrow{\mat{\delta}_p}=  e^{-j\phi_{\delta}}\cdot \overrightarrow{\mat{H}_{12}^{\dagger}\big(\sqrt{\eta_0} \cdot\overrightarrow{\Pi_{\mat{v}_2}^{\perp}{\mat{v}_1}} + \sqrt{1-\eta_0}\cdot\overrightarrow{\Pi_{\mat{v}_2}{\mat{v}_1}}\big)}$ satisfies (\ref{eq:ThetaFV}).}

Therefore, any $\mat{\delta}_p = \lVert\mat{\delta}_{p}\lVert\cdot e^{j\phi_{\delta}}\cdot \overrightarrow{\mat{\delta}_p}$ with
\begin{equation}\label{eq:allrankdirection}
\overrightarrow{\mat{\delta}_{p}} =
\left\{
\begin{aligned}
&\sum_{i=1}^{\mathrm{Rank}(\Pi_{\mat{H}_{12}^T}^{\perp})}{a_i\mat{u}_i(\Pi_{\mat{H}_{12}^T}^{\perp})},~~ \sum_{i=1}^{\mathrm{Rank}(\Pi_{\mat{H}_{12}^T}^{\perp})}{|a_i|^2}=1, \\
&~~~~\mathrm{when} ~N_R<N_T~ \mathrm{or}~ \mathrm{Rank}(\mat{H}_{12})<N_T\leq N_R \\
&e^{-j\phi_{\delta}}\cdot \overrightarrow{\mat{H}_{12}^{\dagger}\big(\sqrt{\eta_0} \cdot\overrightarrow{\Pi_{\mat{v}_2}^{\perp}{\mat{v}_1}} + \sqrt{1-\eta_0}\cdot\overrightarrow{\Pi_{\mat{v}_2}{\mat{v}_1}}\big)}, \\
&~~~~\mathrm{when} ~\mathrm{Rank}(\mat{H}_{12})=N_T\leq N_R,
\end{aligned}
\right.
\end{equation}
satisfies the condition (\ref{eq:ConFullPower}b).

{\textbf{2. Existence of $\phi_{\delta}$}}

Substituting (\ref{eq:deltaP}) into (\ref{eq:ConFullPower}a) yields
\begin{align}\label{eq:ConFullPowera}
&(\mat{w}_1+\mat{\delta}_{p})^H\mat{A}_1(\mat{w}_2)(\mat{w}_1+\mat{\delta}_{p}) > \mat{w}_1^H\mat{A}_1(\mat{w}_2)\mat{w}_1 \nonumber \\
\Leftrightarrow &\lVert\mat{\delta}_{p}\lVert^2\overrightarrow{\mat{\delta}_{p}}^H\mat{A}_1(\mat{w}_2)\overrightarrow{\mat{\delta}_{p}}
+ 2\lVert\mat{\delta}_{p}\lVert\Re\Big(\mat{w}_1^H \mat{A}_1(\mat{w}_2) \overrightarrow{\mat{\delta}_{p}}{e^{j\phi_{\delta}}}\Big)>0 \nonumber \\
\Leftrightarrow
&\frac{\lVert\mat{\delta}_{p}\lVert}{2}\overrightarrow{\mat{\delta}_{p}}^H \mat{A}_1(\mat{w}_2)\overrightarrow{\mat{\delta}_{p}} +  \big |\mat{w}_1^H \mat{A}_1(\mat{w}_2) \overrightarrow{\mat{\delta}_{p}}\big|\cos(\phi_{\delta}+\phi_{2})>0 \nonumber \\
\Leftrightarrow
&\cos(\phi_\delta+\phi_{2}) > -\frac{\lVert\mat{\delta}_{p}\lVert}{2} \cdot \frac{\overrightarrow{\mat{\delta}_{p}}^H\mat{A}_1(\mat{w}_2)\overrightarrow{\mat{\delta}_{p}}}{\big|\mat{w}_1^H \mat{A}_1(\mat{w}_2) \overrightarrow{\mat{\delta}_{p}}\big|},
\end{align}
where {$\phi_{2} \stackrel{\Delta}{=} \arg(\mat{w}_1^H \mat{A}_1(\mat{w}_2) \overrightarrow{\mat{\delta}_{p}})$}.

At the same time, substituting (\ref{eq:deltaP}) into (\ref{eq:ConFullPower}c) yields
\begin{align}\label{eq:ConFullPowerc}
&\lVert \mat{w}_1+\mat{\delta}_{p}\lVert ^2>\lVert \mat{w}_1\lVert ^2 \nonumber \\
\Longleftrightarrow
~&\lVert\mat{\delta}_{p}\lVert^2 + 2\lVert\mat{\delta}_{p}\lVert\Re\Big(\mat{w}_1^H\overrightarrow{\mat{\delta}_p} {e^{j\phi_{\delta}}}\Big)>0 \nonumber \\
\Longleftrightarrow
~&\big|\mat{w}_1^H\overrightarrow{\mat{\delta}_p}\big|\cos(\phi_{\delta}+\phi_{3})>-\frac{\lVert\mat{\delta}_{p}\lVert}{2} \nonumber \\
\Longleftrightarrow ~&\cos(\phi_{\delta}+\phi_{3})>-\frac{\lVert\mat{\delta}_{p}\lVert}{2|\mat{w}_1^H\overrightarrow{\mat{\delta}_p}|},
\end{align}
where $\phi_{3} \stackrel{\Delta}{=} \arg(\mat{w}_1^H\overrightarrow{\mat{\delta}_p})$.
{

Define $\phi_{\delta}+\phi_{2} \in [\theta_1, \theta_2]$ and $\phi_{\delta}+\phi_{3} \in [\theta_3, \theta_4]$. Since both the right-hand side of (\ref{eq:ConFullPowera}) and (\ref{eq:ConFullPowerc}) are negative, the range $[\theta_1, \theta_2]$ and $[\theta_3, \theta_4]$ are strictly wider than $\pi$. In addition, the intersection of two angular ranges wider than $\pi$ is nonempty.} Then, for arbitrary $\lVert \mat{\delta}_p \lVert$ and $\overrightarrow{\mat{\delta}_p}$, any $\mat{\delta}_p = \lVert \mat{\delta}_p \lVert \cdot e^{j\phi_{\delta}} \overrightarrow{\mat{\delta}_p}$ in (\ref{eq:deltaP}) with $\phi_{\delta}\in[\theta_1, \theta_2]\cap[\theta_3, \theta_4]$ always satisfies the conditions (\ref{eq:ConFullPowera}) and (\ref{eq:ConFullPowerc}) simultaneously.

{\textbf{3. Existence of $\lVert\mat{\delta}_{p}\lVert$}}

The condition (\ref{eq:ConFullPower}d) is equivalent to
\begin{align}\label{eq:ConFullPowerd}
&\lVert \mat{\delta}_p \lVert^2 + 2|\mat{w}_1^H\overrightarrow{\mat{\delta}_p}|\cos(\phi_{\delta}+\phi_{3}) \lVert \mat{\delta}_p \lVert + \lVert \mat{w}_1 \lVert^2 -1 \leq 0, \nonumber\\
\stackrel{(a)}{\Longleftrightarrow}~
& \lVert \mat{\delta}_p \lVert \in \Big(0, -|\mat{w}_1^H\overrightarrow{\mat{\delta}_p}|\cos(\phi_{\delta}+\phi_{3}) \nonumber \\
&~~+\sqrt{|\mat{w}_1^H\overrightarrow{\mat{\delta}_p}|^2\cos^2(\phi_{\delta}+\phi_{3})-(\lVert \mat{w}_1\lVert^2 -1)} \Big),
\end{align}
where the transformation (a) is based on $\lVert \mat{w}_1\lVert\leq 1$ and $\lVert \mat{\delta}_p \lVert>0$. For arbitrary $\overrightarrow{\mat{\delta}_p}$ and $\phi_{\delta}$, any any $\mat{\delta}_p = \lVert \mat{\delta}_p \lVert \cdot e^{j\phi_{\delta}} \overrightarrow{\mat{\delta}_p}$ with $\lVert \mat{\delta}_p \lVert$ in (\ref{eq:ConFullPowerd}) will satisfy the condition (\ref{eq:ConFullPower}d) .

{Above all, the existence of $\overrightarrow{\mat{\delta}_{p}}$, $\phi_{\delta}$ and $\lVert\mat{\delta}_{p}\lVert$ has been proved.} That is, there always exists some $\mat{\delta}_p = \lVert \mat{\delta}_p \lVert\cdot e^{j\phi_\delta}\cdot\overrightarrow{\mat{\delta}_p}$ satisfying all the conditions in (\ref{eq:ConFullPower}). {Then, $R_1(\mat{w}_1, \mat{w}_2)$ can still be improved until $\lVert \mat{w}_1 \lVert^2 =1$, while $R_1(\mat{w}_1, \mat{w}_2)$ remains unchanged simultaneously. This contradicts the assumption that $\big(R_1(\mat{w}_1, \mat{w}_2), R_2(\mat{w}_1, \mat{w}_2)\big)$ is on the strict Pareto boundary. Therefore, Proposition 2 holds.}
\end{IEEEproof}

\section{Proof of Proposition 3}\label{sec:proofProp3}
\begin{IEEEproof}
{For the ending point $E1(\overline{R}_1, \underline{R}_2)$, to achieve the maximum rate of link $\mathrm{TX}_1 \mapsto \mathrm{RX}_1$, i.e., $\overline{R}_1$ in (\ref{eq:SURate}a), $(\mat{w}_1, \mat{w}_2)$ should satisfy the following conditions}
\begin{subequations}\label{eq:ConTP1}
\begin{align}
&\mat{w}_1 = \mat{w}_1^{Ego} = \mat{u}_1(\mat{H}_{11}^H\mat{H}_{11}),\\
&\theta_{H,1}=\pi/2 \Leftrightarrow {\mat{w}_2}\perp{\mat{H}_{21}^H\mat{H}_{11}\mat{w}_1}.
\end{align}
\end{subequations}
This means ${\mat{w}_2}$ should be in the null space of $\mat{H}_{21}^H\mat{H}_{11}\mat{w}_1^{Ego}$ to cause no interference to $\mathrm{RX}_1$. All $\mat{w}_2\in\mathcal{W_{FP}}$ satisfying (\ref{eq:ConTP1}) form a set of the ZF strategies, defined as $\mathcal{W_{ZF}}$. Then, any $\mat{w}_2\in\mathcal{W_{ZF}}$ can be expressed by
\begin{align}\label{eq:BeamTP2}
\mat{w}_2 = \overrightarrow{\Pi_{\mat{H}_{21}^H\mat{H}_{11}\mat{w}_1^{Ego}}^{\perp}\mat{v}_{2}},
\end{align}
where $\mat{v}_{2}\in\mathbb{C}^{N_T\times 1}$ and $\mat{v}_{2}\nparallel \mat{H}_{21}^H\mat{H}_{11}\mat{w}_1^{Ego}$.

{To achieve $(\overline{R}_1, \underline{R}_2)$, we need to find $\mat{v}_{2}^{opt}$ which maximizes $\mathrm{SINR}_2(\mat{w}_1^{Ego}, \mat{w}_2)$ simultaneously. Here, we define the optimal "altruistic" strategy} $\mat{w}_2^{Alt}$ as
\begin{align}\label{eq:OptiBeamAlt}
\mat{w}_2^{Alt} &\stackrel{\Delta}{=} \arg \max_{\mat{w}_2\in \mathcal{W_{ZF}}} {\mat{w}_{2}^H\mat{A}_2(\mat{w}_1^{Ego})\mat{w}_{2}} ~~\stackrel{(a)}{\Longleftrightarrow}\nonumber \\
\mat{v}_2^{opt} &= \arg \max_{\mat{v}_{2}} {\frac{\mat{v}_{2}^H { \Pi_{\mat{H}_{21}^H\mat{H}_{11}\mat{w}_1^{Ego}}^{\perp,H} \mat{A}_2(\mat{w}_1^{Ego}) \Pi_{\mat{H}_{21}^H\mat{H}_{11}\mat{w}_1^{Ego}}^{\perp} } \mat{v}_{2}} {\mat{v}_{2}^H {\Pi_{\mat{H}_{21}^H\mat{H}_{11}\mat{w}_1^{Ego}}^{\perp,H}\Pi_{\mat{H}_{21}^H\mat{H}_{11}\mat{w}_1^{Ego}}^{\perp}}\mat{v}_{2}}} \nonumber\\
\stackrel{(b)}{=}& \arg \max_{\mat{v}_{2}} {\frac{\mat{v}_{2}^H \overbrace{ \Pi_{\mat{H}_{21}^H\mat{H}_{11}\mat{w}_1^{Ego}}^{\perp} \mat{A}_2(\mat{w}_1^{Ego}) \Pi_{\mat{H}_{21}^H\mat{H}_{11}\mat{w}_1^{Ego}}^{\perp} }^{\stackrel{\Delta}{=}\mat{B}_{1}} \mat{v}_{2}} {\mat{v}_{2}^H {\Pi_{\mat{H}_{21}^H\mat{H}_{11}\mat{w}_1^{Ego}}^{\perp}}\mat{v}_{2}}} \nonumber\\
=& \mat{u}_1\big(\mat{B}_{1}, \Pi_{\mat{H}_{21}^H\mat{H}_{11}\mat{w}_1^{Ego}}^{\perp}\big),
\end{align}
where transformation (a) is based on (\ref{eq:BeamTP2}) and transformation (b) is due to ${\Pi_{\mat{H}_{21}^H\mat{H}_{11}\mat{w}_1^{Ego}}^{\perp,H}} = {\Pi_{\mat{H}_{21}^H\mat{H}_{11}\mat{w}_1^{Ego}}^{\perp}}$ and ${\Pi_{\mat{H}_{21}^H\mat{H}_{11}\mat{w}_1^{Ego}}^{\perp,H}}{\Pi_{\mat{H}_{21}^H\mat{H}_{11}\mat{w}_1^{Ego}}^{\perp}} = {\Pi_{\mat{H}_{21}^H\mat{H}_{11}\mat{w}_1^{Ego}}^{\perp}}$.
{Substituting (\ref{eq:OptiBeamAlt}) into (\ref{eq:BeamTP2}), we obtain the optimal "altruistic" strategy}
\begin{align}
\mat{w}_2^{Alt} = \overrightarrow{\Pi_{\mat{H}_{21}^H\mat{H}_{11}\mat{w}_1^{Ego}}^{\perp}\mat{u}_1(\mat{B}_{1}, \Pi_{\mat{H}_{21}^H\mat{H}_{11}\mat{w}_1^{Ego}}^{\perp})}.\nonumber
\end{align}

Therefore, $E1(\overline{R}_1, \underline{R}_2)$ is achieved by
$(\mat{w}_1^{Ego}, \mat{w}_1^{Alt})$.
\end{IEEEproof}

\section{Solving the SDR problem $\mathrm{(P4)}$}\label{sec:SolveP4}

\begin{lemma}
Both the fractional problem $\mathrm{(P4)}$ and the problem $\mathrm{(P5)}$ are solvable.
\end{lemma}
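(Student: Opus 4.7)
The strategy is to verify the hypotheses of Weierstrass' theorem for each problem separately, and to link them by showing that the Charnes--Cooper substitution is a value-preserving bijection between feasible solutions of $\mathrm{(P4)}$ and $\mathrm{(P5)}$. In both cases I expect the argument to rest on two facts that are already visible in the formulation: the positive definiteness of $\mat{C}_2 = \sigma_1^2\mat{I}+\mat{H}_{21}^H\mat{H}_{21}$ inherited from the AWGN term, and the compactness of the trace-normalized positive semidefinite cone.

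For $\mathrm{(P4)}$ the plan is as follows. First I would exhibit a feasible point: since $\mat{w}_1$ is assumed feasible for $\mathrm{(P0)}$, the rank-one matrix $\mat{W}_2 = \mat{w}_2\mat{w}_2^H$ obtained from any admissible $\mat{w}_2 \in \mathcal{W_F}$ meets all constraints, so the feasible set, which is the intersection of the compact spectrahedron $\{\mat{W}_2\succeq\mat{0}:\mathrm{Tr}(\mat{W}_2)=1\}$ with the affine hyperplane $\mathrm{Tr}(\mat{A}_2(\mat{w}_1)\mat{W}_2)=\mathrm{SINR}_2^\star$, is compact and nonempty. Next I would observe that the fractional objective is continuous on this set, because $\mat{C}_2\succ\mat{0}$ forces $\mathrm{Tr}(\mat{C}_2\mat{W}_2)\geq \lambda_N(\mat{C}_2)\mathrm{Tr}(\mat{W}_2) = \lambda_N(\mat{C}_2) > 0$ uniformly, while the numerator is non-negative. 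Weierstrass' theorem then delivers a minimizer at a finite value.

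For $\mathrm{(P5)}$ the plan is to transfer solvability through the Charnes--Cooper map $(\mat{W}_2)\mapsto(\mat{Q},s)=(s\mat{W}_2,\,1/\mathrm{Tr}(\mat{C}_2\mat{W}_2))$. I would first check that this map and its inverse $(\mat{Q},s)\mapsto \mat{Q}/s$ are well defined (again using $\mat{C}_2\succ\mat{0}$, together with the observation that $\mathrm{Tr}(\mat{C}_2\mat{Q})=1$ and $\mat{Q}\succeq\mat{0}$ force $s>0$) and that they preserve the objective value. This immediately transfers feasibility from $\mathrm{(P4)}$ to $\mathrm{(P5)}$ and equates their optimal values. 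Compactness of the feasible set of $\mathrm{(P5)}$ then follows because $s$ is confined to the bounded interval $[1/\lambda_1(\mat{C}_2),\,1/\lambda_N(\mat{C}_2)]$ and $\mathrm{Tr}(\mat{Q})=s$ in turn bounds $\mat{Q}$ in the positive semidefinite cone, so Weierstrass applies once more to the linear objective $\mathrm{Tr}(\mat{C}_1(\mat{w}_1)\mat{Q})$.

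The main obstacle I anticipate is the fractional objective in $\mathrm{(P4)}$: without a positive uniform lower bound on $\mathrm{Tr}(\mat{C}_2\mat{W}_2)$, continuity of the ratio would break down, Weierstrass would not apply, and the Charnes--Cooper substitution would not even be well defined. Ensuring this bound via the strict positivity of $\mat{C}_2$ is the only nontrivial ingredient; once it is in hand, both solvability statements follow from standard compactness-and-continuity reasoning.
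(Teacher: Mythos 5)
Your proposal is correct and takes essentially the same route as the paper's proof: nonemptiness of the feasible set inherited from feasibility for $\mathrm{(P0)}$, the uniform positive lower bound $\mathrm{Tr}\big(\mat{C}_{2}\mat{W}_2\big)\geq\lambda_{N_T}(\mat{C}_{2})>0$ coming from $\mat{C}_{2}\succ\mat{0}$, Weierstrass' theorem for $\mathrm{(P4)}$, and the Charnes--Cooper map transferring a feasible point to $\mathrm{(P5)}$. If anything, you are slightly more careful than the paper, which for $\mathrm{(P5)}$ only notes boundedness of the objective before invoking Weierstrass, whereas you explicitly verify compactness of the feasible set via $s\in\big[1/\lambda_1(\mat{C}_{2}),\,1/\lambda_{N_T}(\mat{C}_{2})\big]$ and $\mathrm{Tr}(\mat{Q})=s$.
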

\begin{IEEEproof}
For the fractional problem $\mathrm{(P4)}$, its constraint set
\begin{align}
\Omega = \{\mat{W}_2\succeq\mat{0} : \mathrm{Tr}\big(\mat{W}_2\big) = 1,~ \mathrm{Tr}\big(\mat{A}_2(\mat{w}_1)\mat{W}_2\big)  = \mathrm{SINR}_2^\star \} \nonumber
\end{align}
is nonempty and compact.

In $\mathrm{(P4)}$, the denominator of the objective over $\Omega$ satisfies
\begin{align}\label{eq:P5denominator}
&\sigma_1^2 + \lambda_{N_T}(\mat{H}_{21}^H\mat{H}_{21}) = \lambda_{N_T}(\mat{C}_{2}) \leq \mathrm{Tr}\big(\mat{C}_{2}\mat{W}_2\big) \nonumber \\
&~~~~~~~~~~\leq \lambda_1(\mat{C}_{2}) = \sigma_1^2 + \lambda_1(\mat{H}_{21}^H\mat{H}_{21}),\nonumber
\end{align}
and the numerator obviously satisfies
\begin{align}\label{eq:P5numerator}
0 \leq {\mathrm{Tr}\big(\mat{C}_{1}(\mat{w}_1)\mat{W}_2\big)} \leq \lambda_1(\mat{H}_{21}^H\mat{H}_{11}\mat{H}_{11}^H\mat{H}_{21}). \nonumber
\end{align}
This implies that the objective $\mathrm{(P4)}$ over $\Omega$ is bounded by
\begin{align}\label{eq:P5bound}
0 \leq \frac{{\mathrm{Tr}\big(\mat{C}_{1}(\mat{w}_1)\mat{W}_2\big)}}{\mathrm{Tr}\big(\mat{C}_{2}\mat{W}_2\big)} \leq \frac{\lambda_1(\mat{H}_{21}^H\mat{H}_{11}\mat{H}_{11}^H\mat{H}_{21})}{\sigma_1^2 + \lambda_{N_T}(\mat{H}_{21}^H\mat{H}_{21})}.
\end{align}

Based on Weierstrass' Theorem, the problem $\mathrm{(P4)}$ always has an optimal solution.

Assume that $\mat{W}_2^\star$ is an optimal solution to $\mathrm{(P4)}$, we know that $s^\star=\frac{1}{\mathrm{Tr}\big(\mat{C}_{2}\mat{W}_2^\star\big)}$ and $\mat{Q}^\star = s^\star \mat{W}_2^\star$ are feasible for $\mathrm{(P5)}$. Also note that the objective is bounded by (\ref{eq:P5bound}). Similarly, $\mathrm{(P5)}$ is solvable according to Weierstrass' Theorem.
\end{IEEEproof}

\begin{lemma}
The problems $\mathrm{(P4)}$ and $\mathrm{(P5)}$ have the same value. Furthermore, if $\mat{W}_2^\star$ solves {$\mathrm{(P4)}$}, then $s^\star=\frac{1}{\mathrm{Tr}\big({\mat{C}_{2}}\mat{W}_2^\star\big)}$ and $\mat{Q}^\star = s^\star \cdot \mat{W}_2^\star$ solves $\mathrm{(P5)}$; if $\mat{Q}^\star$ and $s^\star$ solve $\mathrm{(P5)}$, then ${\mat{W}_2^\star=}\frac{\mat{Q}^\star}{s^\star}$ solves {$\mathrm{(P4)}$}.
\end{lemma}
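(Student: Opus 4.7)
The plan is to establish the equivalence by a two-way reduction based on the Charnes--Cooper-style substitution $\mat{Q} = s\mat{W}_2$ with $s = 1/\mathrm{Tr}(\mat{C}_2 \mat{W}_2)$. Concretely, I would first show that the map $\mat{W}_2 \mapsto (\mat{Q}^\star, s^\star) = \big(\mat{W}_2/\mathrm{Tr}(\mat{C}_2\mat{W}_2),\, 1/\mathrm{Tr}(\mat{C}_2\mat{W}_2)\big)$ sends feasible solutions of $\mathrm{(P4)}$ to feasible solutions of $\mathrm{(P5)}$ with an objective value that coincides, and that the inverse map $(\mat{Q}, s) \mapsto \mat{W}_2 = \mat{Q}/s$ sends feasible solutions of $\mathrm{(P5)}$ back to feasible solutions of $\mathrm{(P4)}$, again preserving the objective.

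For the forward direction, I would begin by noting that for any $\mat{W}_2$ feasible for $\mathrm{(P4)}$ the quantity $\mathrm{Tr}(\mat{C}_2\mat{W}_2)$ is strictly positive and bounded, more precisely $\lambda_{N_T}(\mat{C}_2) \leq \mathrm{Tr}(\mat{C}_2\mat{W}_2) \leq \lambda_1(\mat{C}_2)$ as already derived in (\ref{eq:P5denominator}) in the proof of Lemma~3. Consequently $s^\star$ lies in the admissible interval $\big[1/\lambda_1(\mat{C}_2),\,1/\lambda_{N_T}(\mat{C}_2)\big]$ required by $\mathrm{(P5)}$, and a direct substitution verifies the three trace equalities $\mathrm{Tr}(\mat{C}_2\mat{Q}^\star)=1$, $\mathrm{Tr}(\mat{Q}^\star)=s^\star$, and $\mathrm{Tr}(\mat{A}_2(\mat{w}_1)\mat{Q}^\star) = s^\star\cdot \mathrm{SINR}_2^\star$; positive semidefiniteness of $\mat{Q}^\star$ is inherited from $\mat{W}_2$ and $s^\star > 0$. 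Finally, $\mathrm{Tr}(\mat{C}_1(\mat{w}_1)\mat{Q}^\star) = s^\star\,\mathrm{Tr}(\mat{C}_1(\mat{w}_1)\mat{W}_2) = \mathrm{Tr}(\mat{C}_1(\mat{w}_1)\mat{W}_2)/\mathrm{Tr}(\mat{C}_2\mat{W}_2)$, which matches the $\mathrm{(P4)}$-objective at $\mat{W}_2$.

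For the reverse direction, from the constraint $\mathrm{Tr}(\mat{Q}) = s$ and $\mat{Q} \succeq \mat{0}$, together with the lower bound $s \geq 1/\lambda_1(\mat{C}_2) > 0$, the matrix $\mat{W}_2^\star := \mat{Q}/s$ is well-defined and positive semidefinite. A one-line check confirms $\mathrm{Tr}(\mat{W}_2^\star) = 1$ and $\mathrm{Tr}(\mat{A}_2(\mat{w}_1)\mat{W}_2^\star) = \mathrm{SINR}_2^\star$, so $\mat{W}_2^\star$ is feasible for $\mathrm{(P4)}$; the objective transforms as $\mathrm{Tr}(\mat{C}_1(\mat{w}_1)\mat{W}_2^\star)/\mathrm{Tr}(\mat{C}_2\mat{W}_2^\star) = \mathrm{Tr}(\mat{C}_1(\mat{w}_1)\mat{Q})/\mathrm{Tr}(\mat{C}_2\mat{Q}) = \mathrm{Tr}(\mat{C}_1(\mat{w}_1)\mat{Q})$ since $\mathrm{Tr}(\mat{C}_2\mat{Q})=1$. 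Combining the two directions, any feasible point of one problem yields a feasible point of the other with matching objective, so the optimal values coincide and the optimal solutions correspond under the stated map.

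The only real subtlety, and the step I would treat most carefully, is showing that the Charnes--Cooper substitution is genuinely invertible at an optimum, i.e.\ that $s^\star > 0$ strictly, so that $\mat{Q}/s$ is well-defined. This reduces to the strict positivity of $\mathrm{Tr}(\mat{C}_2 \mat{W}_2)$ on the feasible set, which in turn follows from $\mat{C}_2 \succeq \sigma_1^2 \mat{I} \succ \mat{0}$ together with $\mathrm{Tr}(\mat{W}_2) = 1$ (so $\mat{W}_2 \neq \mat{0}$); once this positivity is in hand the rest of the proof is essentially algebraic bookkeeping.
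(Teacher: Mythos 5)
Your proposal is correct and takes essentially the same approach as the paper's proof of this lemma: both directions rest on the Charnes--Cooper substitution $\mat{Q}=s\mat{W}_2$ with $s=1/\mathrm{Tr}(\mat{C}_{2}\mat{W}_2)$ and its inverse $\mat{W}_2=\mat{Q}^\star/s^\star$, checking feasibility and preservation of the objective in each direction. The only difference is organizational -- the paper argues at optimal points via the two inequalities $v_{\mathrm{(P4)}}^\star \geq v_{\mathrm{(P5)}}^\star$ and $v_{\mathrm{(P5)}}^\star \geq v_{\mathrm{(P4)}}^\star$, while you establish the objective-preserving correspondence on all feasible points and explicitly verify the interval constraint on $s$ and the strict positivity $\mathrm{Tr}(\mat{C}_{2}\mat{W}_2)>0$ via $\mat{C}_{2}\succeq\sigma_1^2\mat{I}\succ\mat{0}$, details the paper handles implicitly through its bound in (\ref{eq:P5denominator}).
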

\begin{IEEEproof}
Assume that $\mat{W}_2^\star$ is an optimal solution to $\mathrm{(P4)}$, and $v_{\mathrm{(P4)}}^\star$ and $v_{\mathrm{(P5)}}^\star$ are the optimal values of the objective of $\mathrm{(P4)}$ and $\mathrm{(P5)}$, respectively. Thus, $s=\frac{1}{\mathrm{Tr}\big(\mat{C}_{2}\mat{W}_2^\star\big)}$ and $\mat{Q} = s \mat{W}_2^\star$ are feasible for ${\mathrm{(P5)}}$. The value of the objective of ${\mathrm{(P5)}}$ at this feasible point is
\begin{align}\label{eq:P5P6Relation1}
v_{\mathrm{(P5)}} &= {\mathrm{Tr}\big(\mat{C}_{1}(\mat{w}_1)\mat{Q}\big)} \nonumber \\
&= {\mathrm{Tr}\big(\mat{C}_{1}(\mat{w}_1)(s\cdot \mat{W}_2^\star)\big)} = \frac{{\mathrm{Tr}\big(\mat{C}_{1}(\mat{w}_1)\mat{W}_2^\star\big)}}{\mathrm{Tr}\big(\mat{C}_{2}\mat{W}_2^\star\big)}= v_{\mathrm{(P4)}}^\star \nonumber \\
& \geq v_{\mathrm{(P5)}}^\star. \nonumber
\end{align}

On the other hand, suppose that $\mat{Q}^\star$ and $s^\star$ are the optimal solutions to $\mathrm{(P5)}$. Since $s^\star$ is always positive, $\mat{W}_2 = \frac{\mat{Q}^\star}{s^\star}$ is also feasible for $\mathrm{(P4)}$. Then, the value of the objective of $\mathrm{(P4)}$ at this feasible point is
\begin{align}\label{eq:P5P6Relation2}
v_{\mathrm{(P4)}} &= \frac{\mathrm{Tr}\big(\mat{C}_{1}(\mat{w}_1)\mat{W}_2\big)}{\mathrm{Tr}\big(\mat{C}_{2}\mat{W}_2\big)}= \frac{{\mathrm{Tr}\big(\mat{C}_{1}(\mat{w}_1)\frac{\mat{Q}^\star}{s^\star}\big)}}{\mathrm{Tr}\big(\mat{C}_{2}\frac{\mat{Q}^\star}{s^\star}\big)} \nonumber \\
& = \frac{{\mathrm{Tr}\big(\mat{C}_{1}(\mat{w}_1)\mat{Q}^\star\big)}}{\mathrm{Tr}\big(\mat{C}_{2}\mat{Q}^\star\big)} = {{\mathrm{Tr}\big(\mat{C}_{1}(\mat{w}_1)\mat{Q}^\star\big)}}= v_{\mathrm{(P5)}}^\star \nonumber \\
& \geq v_{\mathrm{(P4)}}^\star. \nonumber
\end{align}

Above all, we have $v_{\mathrm{(P4)}}^\star = v_{\mathrm{(P5)}}^\star$.
\end{IEEEproof}

\section{Proof of Proposition 4}\label{sec:proofProp4}
\begin{IEEEproof}
We need to find a feasible set $\mathcal{W_{F}}$ such that there exists at least one solution $\mat{w}_1\in\mathcal{W_{FP}}$ to problem $\mathrm{(P0)}$ by fixing $\mat{w}_2\in\mathcal{W_{F}}$.

In (\ref{eq:FractionConstriant}b), we derive that the constraint of $\mathrm{(P0)}$ is equivalent to $\mat{w}_2^H\mat{H}_{22}^H\mat{H}_{22}\mat{w}_2 \geq \sigma_2^2\mathrm{SINR}_2^\star$ and
\begin{align}\label{eq:FractionConstriantC}
\mat{w}_1^H\mat{C}(\mat{w}_2)\mat{w}_1 = 0.
\end{align}

To guarantee the existence of $\mat{w}_1\in\mathcal{W_{FP}}$ in (\ref{eq:FractionConstriantC}), a feasible $\mat{w}_2$ should be determined in a way such that (\ref{eq:FractionConstriant}b) holds.

By the eigen-decomposition, $\mat{C}(\mat{w}_2)$ can be rewritten as $\sum_{i=1}^{N_T}{\lambda_i(\mat{C}(\mat{w}_2))\mat{u}_i(\mat{C}(\mat{w}_2))\mat{u}_i^H(\mat{C}(\mat{w}_2))}$. We analyze $\mat{C}(\mat{w}_2)$ for two cases.

\textbf{Case~1.} When $\mat{C}(\mat{w}_2)$ is a full rank matrix, i.e., $\lambda_i(\mat{C}(\mat{w}_2)) \neq 0, \forall i=1,...,N_T$.
If $\mat{C}(\mat{w}_2)$ is a positive or negative definite matrix, it is clear that there is no nonzero vector $\mat{w}_1$ satisfying (\ref{eq:FractionConstriantC}). Otherwise, $\mat{C}(\mat{w}_2)$ has $\lambda_1(\mat{C}(\mat{w}_2))>0$ and ${\lambda_{N_T}(\mat{C}(\mat{w}_2))<0}$, a sufficient solution to (\ref{eq:FractionConstriantC}) is
\begin{align}
\mat{w}_1 &= \sqrt{\frac{\lambda_{N_T}(\mat{C}(\mat{w}_2))}{\lambda_1(\mat{C}(\mat{w}_2))-\lambda_{N_T}(\mat{C}(\mat{w}_2))}} j \cdot \mat{u}_1(\mat{C}(\mat{w}_2)) \nonumber \\
&~~+ \sqrt{\frac{\lambda_1(\mat{C}(\mat{w}_2))}{\lambda_1(\mat{C}(\mat{w}_2))-\lambda_{N_T}(\mat{C}(\mat{w}_2))}} \cdot \mat{u}_{N_T}(\mat{C}(\mat{w}_2)). \nonumber
\end{align}

\textbf{Case~2.} When $\mat{C}(\mat{w}_2)$ is not a full rank matrix, i.e., $\lambda_i(\mat{C}(\mat{w}_2))= 0$ for some $i \in\{1,...,N_T\}$.
{In this case, $\mat{C}(\mat{w}_2)$ always has null space for $\mat{w}_1$ to satisfy (\ref{eq:FractionConstriantC}). }

Above all, the sufficient and necessary condition of $\mat{w}_2$ satisfying (\ref{eq:FractionConstriantC}) is $\lambda_1\left(\mat{C}(\mat{w}_2)\right)\cdot \lambda_{N_T}\left(\mat{C}(\mat{w}_2)\right) \leq 0$. That is, any $\mat{w}_2 \in \mathcal{W_F}$ is always feasible for (\ref{eq:FractionConstriant}b) where $\mathcal{W_F}$ is
\begin{align}
\mathcal{W_F}\stackrel{\Delta}{=} &\Big\{\mat{w}_2\in \mathcal{W_{FP}} :\mat{w}_2^H\mat{H}_{22}^H\mat{H}_{22}\mat{w}_2 \geq \sigma_2^2\mathrm{SINR}_2^\star, \nonumber \\
&~~~~~~~\lambda_1\left(\mat{C}(\mat{w}_2)\right)\cdot \lambda_{N_T}\left(\mat{C}(\mat{w}_2)\right) \leq 0\Big\}.\nonumber
\end{align}
\end{IEEEproof}

\bibliographystyle{IEEEbib}
\bibliography{ref3}

\begin{IEEEbiography}[{\includegraphics[width=1in,height=1.2in,clip,keepaspectratio]{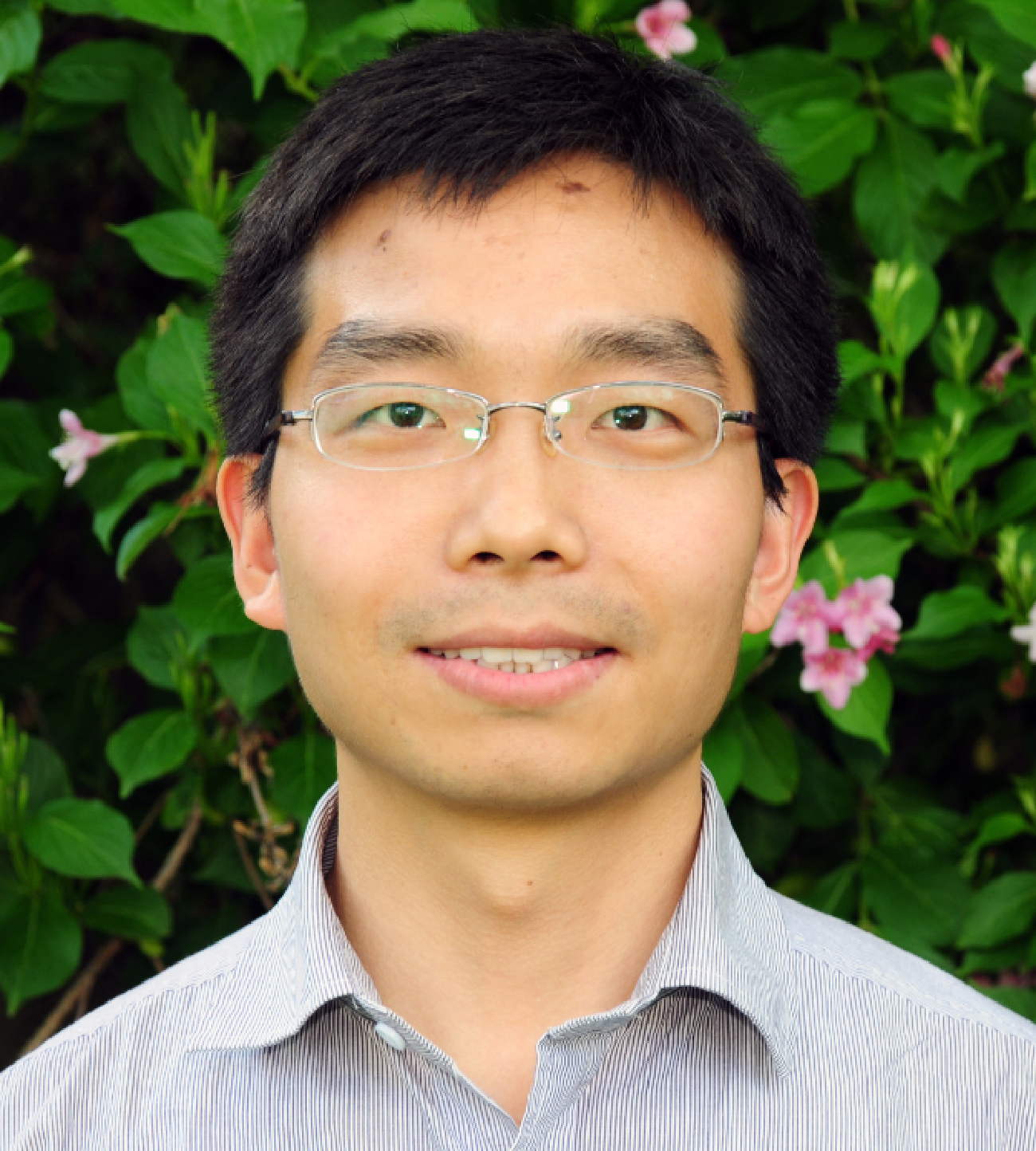}}]{Pan Cao}(S'12) received his B.Eng. degree (with honors) in Department of Mechano-Electronic Engineering in 2008 and M. Eng. degree in Information and Signal Processing in 2011, both from Xidian University, Xi'an, P. R. China. Now, he is with the Communications Laboratory at Dresden University of Technology (TUD), Germany, to peruse his Ph.D. degree under the supervision of Prof. Eduard A. Jorswieck.

His research interests include resource allocation for MIMO interference channels and relay channels with the application of optimization techniques and game theory. He received China Scholarship Council (CSC) Scholarship from 2010 to 2014, the Best Student Paper Award of the 13th IEEE International Workshop on Signal Processing Advances in Wireless Communications (SPAWC) in 2012, and the Qualcomm Innovation Fellowship (QInF) in 2013.
\end{IEEEbiography}
\begin{IEEEbiography}[{\includegraphics[width=1in,height=1.2in,clip,keepaspectratio]{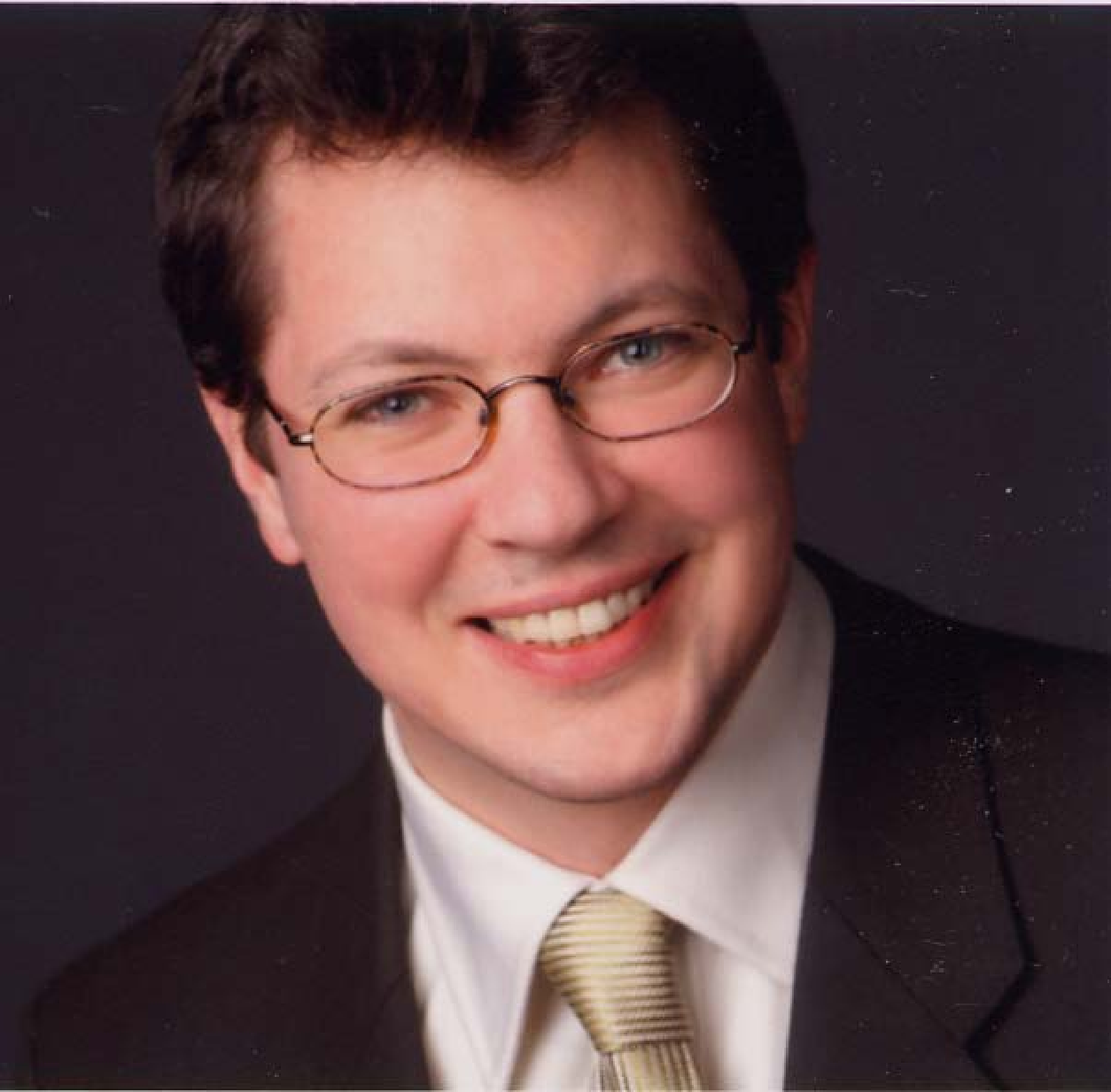}}]{Eduard A. Jorswieck}
(S'01-M'05-SM'08) received his Diplom-Ingenieur degree and Doktor-Ingenieur (Ph.D.) degree, both in electrical engineering and computer science from the Berlin University of Technology (TUB), Germany, in 2000 and 2004, respectively. He was with the Fraunhofer Institute for Telecommunications, Heinrich-Hertz-Institute (HHI) Berlin, from 2001 to 2006. In 2006, he joined the Signal Processing Department at the Royal Institute of Technology (KTH) as a postdoc and became a Assistant Professor in 2007. Since February 2008, he has been the head of the Chair of Communications Theory and Full Professor at Dresden University of Technology (TUD), Germany.

His research interests are within the areas of applied information theory, signal processing and wireless communications. He is senior member of IEEE and elected member of the IEEE SPCOM Technical Committee. From 2008-2011 he served as an Associate Editor and since 2012 as a Senior Associate Editor for IEEE SIGNAL PROCESSING LETTERS. Since 2011 he serves as an
Associate Editor for IEEE TRANSACTIONS ON SIGNAL PROCESSING. Since 2013, Eduard serves as Associate Editor for IEEE TRANSACTIONS ON WIRELESS COMMUNICATIONS. In 2006, he was co-recipient of the IEEE Signal Processing Society Best Paper Award.
\end{IEEEbiography}
\begin{IEEEbiography}[{\includegraphics[width=1in,height=1.2in,clip,keepaspectratio]{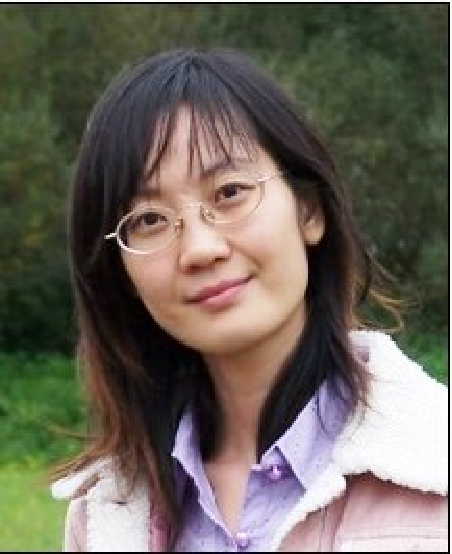}}]{Shuying Shi}
received her Ph.D. degree in electrical
engineering from the Technische Universit\"at Berlin, Germany, in 2009, and her M.Sc. degree in electrical
engineering from the Technische Universit\"at Kaiserslautern, Germany,
in 2002.  She worked at the Fraunhofer German-Sino Lab for Mobile Communications
(MCI), Germany, from 2003 to 2006,   at the Technische Universit\"at
Berlin, Germany, from 2006 to 2009,  at the Link\"oing University, Sweden, from 2009 to 2010, and at the Technische Universit\"at Dresden, Germany, from 2010 to 2011.

Her research interests include  multi-user MIMO signal processing, dynamic resource allocation, and the application of optimization techniques to the design of wireless communications.
\end{IEEEbiography}

\end{document}